\author{Umang Mathur}
\affiliation{
\institution{National University of Singapore}
\country{Singapore}
}
\email{umathur@comp.nus.edu.sg}
\author{Andreas Pavlogiannis}
\affiliation{
\institution{Aarhus University}
\country{Denmark}
}
\email{pavlogiannis@cs.au.dk}
\author{Hünkar Can Tunç}
\affiliation{
\institution{Aarhus University}
\country{Denmark}
}
\email{tunc@cs.au.dk}
\author{Mahesh Viswanathan}
\affiliation{
\institution{University of Illinois at Urbana-Champaign}
\country{USA}
}
\email{vmahesh@illinois.edu}
\crefname{figure}{Figure}{Figure}
\newtheorem{lemma}{Lemma}
\theoremstyle{plain}
\newtheorem{remark}{Remark}
\def\treeclocksexamplecaling{0.93}
\newcommand{\Paragraph}[1]{\noindent{\bf #1}}
\newcommand{\SubParagraph}[1]{\smallskip\noindent{\em #1}}
\DeclareMathAlphabet{\mathpzc}{OT1}{pzc}{m}{it}
\mathchardef\mhyphen="2D 
\newcommand{\Nats}{\mathbb{N}}
\newcommand{\setpred}[2]{\{#1 \,|\, #2\}}
\newcommand{\tr}{\sigma}
\newcommand{\Trace}{\tr} 
\newcommand{\lk}{\ell}
\newcommand{\Event}{e}
\newcommand{\ev}[1]{\langle #1\rangle}
\newcommand{\conf}{\asymp}
\newcommand{\Confl}[2]{#1 \conf #2}
\newcommand{\opfont}[1]{\texttt{#1}}
\newcommand{\acq}{\opfont{acq}}
\newcommand{\rel}{\opfont{rel}}
\newcommand{\rd}{\opfont{r}}
\newcommand{\wt}{\opfont{w}}
\newcommand{\sync}{\opfont{sync}}
\newcommand{\Read}{\rd}
\newcommand{\Write}{\wt}
\newcommand{\Acquire}{\acq}
\newcommand{\Release}{\rel}
\newcommand{\Sync}{\sync}
\newcommand{\LastWrite}{\operatorname{lw}}
\newcommand{\True}{\operatorname{True}}
\newcommand{\Thread}{t}
\newcommand{\ThreadOf}[1]{\tid(#1)}
\newcommand{\Variable}{\operatorname{Variable}}
\newcommand*{\Match}[2][]{
\ifx\\#1\\%
  \mathsf{match}(#2)
\else
  \mathsf{match}_{#1}(#2)
\fi
}
\newcommand{\acr}[1]{\textsf{#1}}
\newcommand{\acrtr}{\operatorname{\acr{tr}}}
\newcommand{\TO}{\operatorname{\acr{TO}}}
\newcommand{\SHB}{\operatorname{\acr{SHB}}}
\newcommand{\Maz}{\operatorname{\acr{MAZ}}}
\newcommand{\DC}{\operatorname{\acr{DC}}}
\newcommand{\HB}{\operatorname{\acr{HB}}}
\newcommand{\CP}{\operatorname{\acr{CP}}}
\newcommand{\WCP}{\operatorname{\acr{WCP}}}
\newcommand{\SDP}{\operatorname{\acr{SDP}}}
\newcommand{\WDP}{\operatorname{\acr{WDP}}}
\newcommand{\ord}[2]{\leq^{#1}_{\mathsf{#2}}}
\newcommand{\strictord}[2]{<^{#1}_{\mathsf{#2}}}
\newcommand{\unord}[2]{\parallel^{#1}_{\mathsf{#2}}}
\newcommand{\trord}[1]{\ord{#1}{\acrtr}}
\newcommand{\stricttrord}[1]{\strictord{#1}{\acrtr}}
\newcommand{\tho}[1]{\ord{#1}{\TO}}
\newcommand{\hb}[1]{\ord{#1}{\HB}}
\newcommand{\unordhb}[1]{\unord{#1}{\HB}}
\newcommand{\stricthb}[1]{\strictord{#1}{\HB}}
\newcommand{\shb}[1]{\ord{#1}{\SHB}}
\newcommand{\maz}[1]{\ord{#1}{\Maz}}
\newcommand{\Threads}[1]{\operatorname{Thrds}_{#1}}
\newcommand{\LocalTime}[2]{\operatorname{lTime}^{#1}(#2)}
\newcommand{\VectorTime}{\operatorname{V}}
\newcommand{\POTime}[2]{C^{#1}_{#2}}
\newcommand{\VectorClock}{\operatorname{VC}}
\newcommand{\TreeClock}{\operatorname{TC}}
\newcommand{\CSmaller}{\sqsubseteq}
\newcommand{\CJoin}{\sqcup}
\newcommand{\CAssign}[2]{[#1\to +#2]}
\newcommand{\GTree}{\operatorname{T}}
\newcommand{\Root}{\operatorname{root}}
\newcommand{\ThreadMap}{\operatorname{ThrMap}}
\newcommand{\TNodes}{\mathcal{V}}
\newcommand{\TEdges}{\mathcal{E}}
\newcommand{\ChildrenQ}{\operatorname{Chld}}
\newcommand{\Parent}{\operatorname{Prnt}}
\newcommand{\Stack}{\mathcal{S}}
\newcommand{\tid}{\operatorname{\mathsf{tid}}}
\newcommand{\clock}{\operatorname{clk}}
\newcommand{\pclock}{\operatorname{aclk}}
\newcommand{\DataStructure}{\operatorname{DS}}
\newcommand{\VTWork}{\operatorname{VTWork}}
\newcommand{\VCWork}{\operatorname{VCWork}}
\newcommand{\TCWork}{\operatorname{TCWork}}
\newcommand{\Time}{\mathcal{T}}
\newcommand{\CTC}{\mathbb{C}}
\newcommand{\CTL}{\mathbb{L}}
\newcommand{\CTLW}{\mathbb{LW}}
\newcommand{\CTR}{\mathbb{R}}
\newcommand{\LWT}{\operatorname{LW}}
\newcommand{\LastReads}{\operatorname{LRDs}}
\newcommand{\Operation}{\operatorname{op}}
\newcommand{\NumEvents}{n}
\newcommand{\NumThreads}{k}
\newcommand{\fasttrack}{\textsc{FastTrack}\xspace}
\preto\tabular{\setcounter{magicrownumbers}{0}}
\newcounter{magicrownumbers}
\def \darkred {black!20!red}
\def \darkgreen {black!40!green}
\newlist{compactenum}{enumerate}{3} 
\setlist[compactenum]{label=\arabic*., nosep,leftmargin=*}
\crefname{compactenumi}{Item}{Items}
\newcommand*{\centerfloat}{%
  \parindent \z@
  \leftskip \z@ \@plus 1fil \@minus \textwidth
  \rightskip\leftskip
  \parfillskip \z@skip}
\newcommand{\camera}[1]{\textcolor{black}{#1}}
\newcommand{\hcomment}[1]{\textcolor{black}{#1}}
\newcommand{\execution}[2]{
\scalebox{1}{
  \begin{tikzpicture}%
    \foreach \x in {1,...,#1}
    \node[right] at (1.5*\x+0.2,0.25) {$\Thread_{\x}$};
    \draw (1.2,0) -- (#1*1.5+1.3,0);%
    \pgfmathsetmacro{\y}{1};%
    #2%
    \draw (1.2,0) -- (1.2,-0.4*\y);%
    \draw (#1*1.5+1.3,0) -- (#1*1.5+1.3,-0.4*\y);%
    \foreach \x in {2,...,#1}
    \draw (1.2,-0.4*\y) -- (#1*1.5+1.3,-0.4*\y);%
  \end{tikzpicture}
}
}
\newcommand{\figev}[2]{
\pgfmathsetmacro{\y}{\y+1};
\pgfmathsetmacro{\y}{\y-1};
\node [left] at (1.25,-0.4*\y)  {\pgfmathprintnumber{\y}};%
\node at (#1*1.5 + 0.45,-0.4*\y) { #2 };%
\pgfmathsetmacro{\y}{\y+1};
}
\newcommand{\orderedgewithlabel}[8]{
  \draw (#1*1.5 + 0.45 + #3, -0.4*#2) edge[-{Latex[length=2mm, width=2mm]},  ->,>=stealth] node[anchor=center, sloped, #8] { #7} (#4*1.5 + 0.45 + #6, -0.4*#5);
}
\newcommand{\orderdashededgewithlabel}[8]{
  \draw (#1*1.5 + 0.45 + #3, -0.4*#2) edge[-{Latex[length=2mm, width=2mm]},  ->,>=stealth, dashed] node[anchor=center, sloped, #8] { #7} (#4*1.5 + 0.45 + #6, -0.4*#5);
}
\newcommand{\treethr}[1]{\Thread_{#1}}
\newcommand{\defaultScaling}{0.9}
\newcommand{\defaultThreadWidth}{1.2}
\newcommand{\defaultEventHeight}{0.55}
\newcommand{\defaultTIDXOffset}{0.2}
\newcommand{\defaultLeftBoundaryXOffset}{1.2}
\newcommand{\defaultTIDYOffset}{0.25}
\newcommand{\defaultOpXOffset}{0.25}
\newcommand{\executionfull}[9]{
\scalebox{#3}{
  \begin{tikzpicture}
    \foreach \x in {1,...,#1}
    \node[right] at (#4*\x+#6, #8) {$t_{\x}$}; 
    \draw (#7,0) -- (#1*#4+#7,0); 
    \pgfmathsetmacro{\y}{1};
    #2 
    \draw (#7,0) -- (#7,-#5*\y); 
    \draw (#1*#4+#7,0) -- (#1*#4+#7,-#5*\y); 
    \draw (#7,-#5*\y) -- (#1*#4+#7,-#5*\y); 
    \ifthenelse{#9 = 1}{
      \foreach \x in {2,...,#1}
      \draw[dashed] (#4*\x+#7-#4,0) -- (#4*\x+#7-#4,-#5*\y); 
    }{}
  \end{tikzpicture}
}
}
\newcommand{\figevfull}[9]{
\ifthenelse{#7 = 1}{
  \ifthenelse{#8 = -1}{
    \node [left] at (#5,(-#4*\y)  {\pgfmathprintnumber{\y}};%
  }{
    \node [left] at (#5,(-#4*\y)  {#9};%
  }
}{}
\node at (#1*#3 + #6,(-#4*\y) {$ #2 $};%
\pgfmathsetmacro{\y}{\y+1};
}
\newcommand{\executioncustom}[2]{
\executionfull
  {#1}
  {#2}
  {\defaultScaling}
  {\defaultThreadWidth}
  {\defaultEventHeight}
  {\defaultTIDXOffset}
  {\defaultLeftBoundaryXOffset}
  {\defaultTIDYOffset}
  {1}
}
\newcommand{\figevcustom}[2]{
\figevfull
  {#1}
  {#2}
  {\defaultThreadWidth}
  {\defaultEventHeight}
  {\defaultLeftBoundaryXOffset}
  {2.5*\defaultOpXOffset}
  {1}
  {-1}
  {}
}
\renewcommand{\smallskip}{}
\begin{document}

\title{A Tree Clock Data Structure for Causal Orderings in Concurrent Executions}

%
\begin{CCSXML}
<ccs2012>
<concept>
<concept_id>10011007.10011074.10011099</concept_id>
<concept_desc>Software and its engineering~Software verification and validation</concept_desc>
<concept_significance>500</concept_significance>
</concept>
<concept>
<concept_id>10003752.10010070</concept_id>
<concept_desc>Theory of computation~Theory and algorithms for application domains</concept_desc>
<concept_significance>300</concept_significance>
</concept>
<concept>
<concept_id>10003752.10010124.10010138.10010143</concept_id>
<concept_desc>Theory of computation~Program analysis</concept_desc>
<concept_significance>300</concept_significance>
</concept>
</ccs2012>
\end{CCSXML}

\ccsdesc[500]{Software and its engineering~Software verification and validation}
\ccsdesc[300]{Theory of computation~Theory and algorithms for application domains}
\ccsdesc[300]{Theory of computation~Program analysis}

\begin{abstract}
\vspace{0.2in}
Dynamic techniques are a scalable and effective way to analyze concurrent programs.
Instead of analyzing all behaviors of a program, these techniques detect errors by focusing on a single program execution.
Often a crucial step in these techniques is to define a causal ordering between events in the execution, which is then computed using \emph{vector clocks}, a simple data structure that stores logical times of threads.
The two basic operations of vector clocks, namely join and copy, require $\Theta(\NumThreads)$ time, where $\NumThreads$ is the number of threads. Thus they are a computational bottleneck when $\NumThreads$ is large.

In this work, we introduce \emph{tree clocks}, a new data structure that replaces vector clocks for computing causal orderings in program executions.
Joining and copying tree clocks takes time that is roughly proportional to the number of entries being modified, and hence the two operations do not suffer the a-priori $\Theta(\NumThreads)$ cost per application.
We show that when used to compute the classic happens-before ($\HB$) partial order, 
tree clocks are \emph{optimal}, in the sense that no other data structure can lead to smaller asymptotic running time.
Moreover, we demonstrate that tree clocks can be used to compute other partial orders, such as schedulable-happens-before ($\SHB$) and the standard Mazurkiewicz ($\Maz$) partial order, and thus are a versatile data structure.
Our experiments show that just by replacing vector clocks with tree clocks, the computation becomes from \camera{$2.02 \times$} faster ($\Maz$) to \camera{$2.66 \times$} ($\SHB$) and \camera{$2.97 \times$} ($\HB$) on average per benchmark.
These results illustrate that tree clocks have the potential to become a standard data structure with wide applications in concurrent analyses.
\end{abstract}

\keywords{concurrency, happens-before, vector clocks, dynamic analyses}

\maketitle

\thispagestyle{empty}


\section{Introduction}\label{sec:intro}

The analysis of concurrent programs is one of the major challenges in formal methods,
due to the non-determinism of inter-thread communication.
The large space of communication interleavings poses a significant challenge to the programmer, as intended invariants can be broken by unexpected communication patterns.
The subtlety of these patterns also makes verification a demanding task, as exposing a bug requires searching an exponentially large space~\cite{Musuvathi08}.
Consequently, significant efforts are made towards understanding and detecting concurrency bugs efficiently
\cite{lpsz08,Shi10,Tu2019,SoftwareErrors2009,boehmbenign2011,Farchi03}.

\SubParagraph{Dynamic analyses and partial orders.}
One popular approach to the scalability problem of concurrent program verification is dynamic analysis~\cite{Mattern89,Pozniansky03,Flanagan09,Mathur2020b}.
Such techniques have the more modest goal of discovering faults by analyzing program executions instead of whole programs.
Although this approach cannot prove the absence of bugs, it is far more scalable than static analysis and typically makes sound reports of errors.
These advantages have rendered dynamic analyses a very effective and widely used approach to error detection in concurrent programs.

The first step in virtually all techniques that analyze concurrent executions is to establish 
a causal ordering between the events of the execution.
Although the notion of causality varies with the application, 
its transitive nature makes it naturally expressible as a  
\emph{partial order} between these events.
One prominent example is the Mazurkiewicz partial order ($\Maz$), 
which often serves as the canonical way to represent concurrent traces~\cite{Mazurkiewicz87,Bertoni1989} (aka Shasha-Snir traces~\cite{Shasha1988}).
Another vastly common partial order is 
Lamport's \emph{happens-before} ($\HB$)~\cite{Lamport78}, initially proposed
in the context of distributed systems~\cite{schwarz1994detecting}.
In the context of testing multi-threaded programs,
partial orders play a crucial role in dynamic race detection techniques,
and have been thoroughly exploited to
explore trade-offs between soundness, completeness, 
and running time of the underlying analysis.
Prominent examples include the widespread use of $\HB$~\cite{Itzkovitz1999,Flanagan09,Pozniansky03,threadsanitizer,Elmas07},
schedulably-happens-before ($\SHB$)~\cite{Mathur18},
causally-precedes ($\CP$)~\cite{Smaragdakis12},
weak-causally-precedes ($\WCP$)~\cite{Kini17},
doesn't-commute ($\DC$)~\cite{Roemer18}, and
 strong/weak-dependently-pre\-cedes  ($\SDP$/$\WDP$)~\cite{Genc19}, 
$\textsf{M2}$~\cite{Pavlogiannis2020} and SyncP~\cite{Mathur21}.
Beyond race detection, partial orders are often
employed to detect and reproduce other concurrency bugs such as
atomicity violations~\cite{Flanagan2008,Biswas14,Mathur2020},
deadlocks~\cite{Samak2014,Sulzmann2018},
and other concurrency vulnerabilities~\cite{Yu2021}.

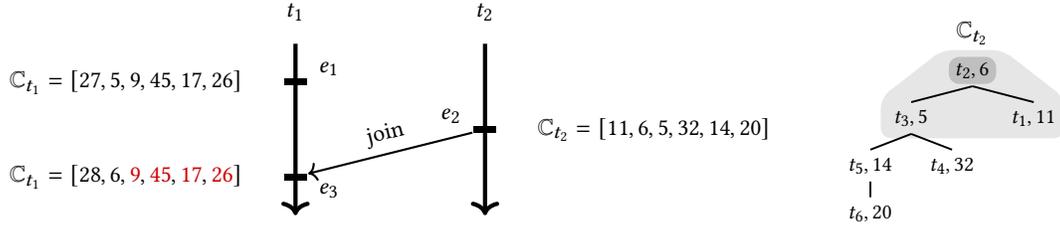
\begin{figure*}[t]
\centering
\begin{tikzpicture}[thick,
pre/.style={<-,shorten >= 2pt, shorten <=2pt, very thick},
post/.style={->,shorten >= 2pt, shorten <=2pt,  very thick},
seqtrace/.style={->, line width=2},
und/.style={very thick, draw=gray},
event/.style={rectangle, minimum height=0.8mm, minimum width=3mm, fill=black!100,  line width=1pt, inner sep=0},
virt/.style={circle,draw=black!50,fill=black!20, opacity=0},
sibling distance=10pt,
level distance=20pt,
]
\tikzset{edge from parent/.append style={ thick}}
\pgfdeclarelayer{background}
\pgfdeclarelayer{foreground}
\pgfsetlayers{background,main,foreground}

\newcommand{\xdisposition}{10}
\newcommand{\ydisposition}{-0.6}
\newcommand{\xtstep}{0.5}
\newcommand{\ytstep}{0.2}
\newcommand{\xstep}{2.8}
\newcommand{\ystep}{-0.7}
\newcommand{\grayDark}{gray!50}
\newcommand{\grayLight}{gray!20}
\newcommand{\halfyOuter}{0.30}
\newcommand{\halfxOuter}{0.45}
\newcommand{\halfyInner}{0.20}
\newcommand{\halfxInner}{0.35}

\begin{scope}[scale=0.9]
\node [] (t1start) at (0*\xstep, 0*\ystep) {};
\node [] (t1end) at (0*\xstep, 4*\ystep) {};

\node [] (t2start) at (1*\xstep, 0*\ystep) {};
\node [] (t2end) at (1*\xstep, 4*\ystep) {};

\draw[->, ultra thick] (t1start) to (t1end);
\draw[->, ultra thick] (t2start) to (t2end);

\node[node distance=3mm, above of=t1start] {\normalsize $\Thread_1$};
\node[node distance=3mm, above of=t2start] {\normalsize $\Thread_2$};

\node[event] (e1) at (0*\xstep, 1*\ystep) {};
\node[event] (e3) at (0*\xstep, 3*\ystep) {};
\node[event] (e2) at (1*\xstep, 2*\ystep) {};

\node[]  at (0*\xstep + \xtstep, 1*\ystep + \ytstep) {$\Event_1$};
\node[]  at (0*\xstep + \xtstep, 3*\ystep - \ytstep) {$\Event_3$};
\node[]  at (1*\xstep - \xtstep, 2*\ystep +  \ytstep) {$\Event_2$};

\draw[->, thick] (e2) to node[above, sloped]{join} (e3);

\node[]  at (0*\xstep -5*\xtstep, 1*\ystep ) {$\CTC_{\Thread_1}=[27,5,9,45,17,26]$};
\node[]  at (0*\xstep -5*\xtstep, 3*\ystep ) {$\CTC_{\Thread_1}=[28,6,\textcolor{\darkred}{9},\textcolor{\darkred}{45},\textcolor{\darkred}{17},\textcolor{\darkred}{26}]$};
\node[]  at (1*\xstep + 5*\xtstep, 2*\ystep) {$\CTC_{\Thread_2}=[11,6,5,32,14,20]$};

\begin{scope}[shift={(\xdisposition,\ydisposition)}]

\Tree [
 .\node[] (a2) {$\treethr{2},6$};
 [ .\node[] (a3) {$\treethr{3},5$}; [ .\node[] (a5) {$\treethr{5},14$}; [.\node[] (a6) {$\treethr{6},20$};]  ] [ .\node[] (a4) {$\treethr{4},32$};]  ]
 [ .\node[] (a1) {$\treethr{1},11$};]
]
\node[node distance=5mm, above of=a2] {\normalsize $\CTC_{\Thread_2}$};
\end{scope}

\begin{pgfonlayer}{background}
\draw[smooth, draw=none, rounded corners, fill=\grayLight] ($ (a2) + (0, \halfyOuter) $) to ($ (a2) + (-\halfxOuter, \halfyOuter) $) to ($ (a3) + (-\halfxOuter, \halfyOuter) $) to ($ (a3) + (-\halfxOuter, -\halfyOuter) $) to ($ (a1) + (-\halfxOuter, -\halfyOuter) $) to ($ (a1) + (\halfxOuter, -\halfyOuter) $) to ($ (a1) + (\halfxOuter, \halfyOuter) $)  to ($ (a2) + (\halfxOuter, \halfyOuter) $)  to ($ (a2) + (0, \halfyOuter) $);

\draw[smooth, draw=none, rounded corners, fill=\grayDark] ($ (a2) + (0, \halfyInner) $) to ($ (a2) + (-\halfxInner, \halfyInner) $) to ($ (a2) + (-\halfxInner, -\halfyInner) $) to ($ (a2) + (+\halfxInner, -\halfyInner) $) to ($ (a2) + (\halfxInner, \halfyInner) $)  to ($ (a2) + (0, \halfyInner) $);
\end{pgfonlayer}
\end{scope}
\end{tikzpicture}
\caption{
(Left) Illustration of the effect of a join operation $\CTC_{\Thread_1}\gets \CTC_{\Thread_1} \CJoin \CTC_{\Thread_2}$ on the clocks of the two threads.
The $j$-th entry in timestamps correspond to thread $t_j$.
Red entries remain unchanged, as $\Thread_1$ already knows of a later time. 
(Right) A tree representation of the clocks $\CTC_{\Thread_2}$ that encodes transitivity.
Dark gray marks the threads whose clock has processed in $\CTC_{\Thread_2}$ compared to $\CTC_{\Thread_1}$ (i.e., just $\Thread_2$).
Light gray marks the nodes that we need to examine when performing the join operation.
}
\label{fig:motivating}
\end{figure*}

\SubParagraph{Vector clocks in dynamic analyses.}
Often, the computational task of determining the partial 
ordering between events of an execution is achieved using a simple data 
structure called \emph{vector clock}.
Informally, a vector clock $\CTC$ is an integer array indexed 
by the processes/threads in the execution, and succinctly encodes
the knowledge of a process about the whole system.
For vector clock $\CTC_{\Thread_1}$ associated with thread $\Thread_1$, 
if $\CTC_{\Thread_1}(\Thread_2) = i$ then it means that the 
latest event of $\Thread_1$ is ordered after the first 
$i$ events of thread $\Thread_2$ in the partial order. 
Vector clocks, thus seamlessly capture a partial order, 
with the point-wise ordering of the vector timestamps of two events 
capturing the ordering between the events with respect to the partial order of interest. 
For this reason, vector clocks are instrumental in 
computing the $\HB$ parial order efficiently~\cite{Mattern89,fidge1988timestamps,Fidge91},
and are ubiquitous in the efficient implementation of 
analyses based on partial orders even beyond $\HB$~\cite{Flanagan09,Mathur18,Kini17,Roemer18,Mathur2020,Samak2014,Sulzmann2018,Kulkarni2021}.

The fundamental operation on vector clocks is the pointwise \emph{join} 
$\CTC_{\Thread_1}\gets \CTC_{\Thread_1} \CJoin \CTC_{\Thread_2}$.
This occurs whenever there is a causal ordering from thread $\Thread_2$ to $\Thread_1$.
Operationally, a join is performed by updating $\CTC_{\Thread_1}(\Thread)\gets \max( \CTC_{\Thread_1}(\Thread), \CTC_{\Thread_2}(\Thread))$ for every thread $\Thread$,
and captures the transitivity of causal orderings:~as $\Thread_1$ learns about $\Thread_2$, it also learns about other threads $\Thread$ that $\Thread_2$ knows about.
Note that if $\Thread_1$ is aware of a later event of $\Thread$, this operation is vacuous.
With $\NumThreads$ threads, a vector clock join takes $\Theta(\NumThreads)$ time, 
and can quickly become a bottleneck even in systems with moderate $\NumThreads$.
This motivates the following question:~is it possible to speed up join 
operations by proactively avoiding vacuous updates?
The challenge in such a task comes from the efficiency of the join operation itself---since 
it only requires linear time in the size of the vector, any improvement must operate in sub-linear time, i.e., not even touch certain entries of the vector clock. 
We illustrate this idea on a concrete example, and present the key insight in this work.

\Paragraph{Motivating example.}
Consider the example in \cref{fig:motivating}. 
It shows a partial trace from a concurrent system with 
6 threads, along with the vector timestmamps at each event.
When event $\Event_2$ is ordered before event $\Event_3$ 
due to synchronization, the vector clock $\CTC_{\Thread_2}$ of 
$\Thread_2$ is joined with that of $\CTC_{\Thread_1}$, i.e., 
the $\Thread_j$-th entry of $\CTC_{\Thread_1}$ is updated 
to the maximum of $\CTC_{\Thread_1}(\Thread_j)$ and $\CTC_{\Thread_2}(\Thread_j)$\footnote{\camera{As with many presentations of dynamic analyses using vector clocks~\cite{Itzkovitz1999}, we assume that the \emph{local} entry of a thread's clock increments by $1$ after each event it performs. Hence, in \cref{fig:motivating}, the $\Thread_1$-th entry of $\CTC_{\Thread_1}$ increases from $27$ to $28$ after $\Event_1$ is performed.
}}.
Now assume that thread $\Thread_2$ has learned of the current 
times of threads $\Thread_3$, $\Thread_4$, $\Thread_5$ 
and $\Thread_6$ via thread $\Thread_3$.
Since the $\Thread_3$-th component of the vector timestamp of event 
$\Event_1$ is larger than the corresponding component of event $\Event_2$, 
$\Thread_1$ cannot possibly learn any \emph{new} information about threads
 $\Thread_4$, $\Thread_5$, and $\Thread_6$ through the join performed at event $\Event_3$. 
Hence the naive pointwise updates will be redundant for the indices $j=\{3,4,5,6 \}$.
Unfortunately, the flat structure of vector clocks
is not amenable to such reasoning and cannot avoid these redundant operations.

To alleviate this problem, we introduce a new hierarchical tree-like data structure for maintaining vector times called a \emph{tree clock}.
The nodes of the tree encode local clocks, just like entries in a vector clock.
In addition, the structure of the tree naturally captures which clocks have been learned transitively via intermediate threads.
\cref{fig:motivating} (right) depicts a (simplified) tree clock encoding the vector times of $\CTC_{\Thread_2}$.
The subtree rooted at thread $\Thread_3$ encodes the fact that $\Thread_2$ has learned about the current times of $\Thread_4$, $\Thread_5$ and $\Thread_6$ \emph{transitively}, via $\Thread_3$.
To perform the join operation $\CTC_{\Thread_1}\gets \CTC_{\Thread_1} \CJoin \CTC_{\Thread_2}$, we start from the root of  $\CTC_{{\Thread}_2}$, and traverse the tree as follows. Given a current node $u$, we proceed to the children of $u$ \emph{if and only if}
$u$ represents the time of a thread that is not known to $\Thread_1$.
Hence, in the example, the join operation will now access only the light-gray area of the tree, and thus compute the join without accessing the whole tree, resulting in a \emph{sublinear running time} of the join operation.

The above principle, which we call \emph{direct monotonicity}
is one of two key ideas exploited by tree clocks; the other being \emph{indirect monotonicity}.
The key technical challenge in developing the tree clock data structure lies in
(i)~using direct and indirect monotonicity to perform efficient updates, and
(ii)~perform these updates such that direct and indirect monotonicity 
are preserved for future operations.
\cref{subsec:intuition} illustrates the intuition behind these two principles in depth.

\Paragraph{Contributions.}
Our contributions are as follows.
\begin{compactenum}
\item We introduce \emph{tree clock}, a new data structure for maintaining logical times in concurrent executions.
In contrast to the flat structure of the traditional vector clocks, the dynamic hierarchical structure of tree clocks naturally captures ad-hoc communication patterns between processes.
In turn, this allows for join and copy operations that run in \emph{sublinear time}.
As a data structure, tree clocks offer high versatility as they can be used in computing many different ordering relations.
\item We prove that tree clocks are an \emph{optimal data structure} for computing $\HB$, in the sense that, \emph{for every input trace}, the total computation time cannot be improved (asymptotically) by replacing tree clocks with any other data structure.
On the other hand, vector clocks do not enjoy this property.
\item We illustrate the versatility of tree clocks by presenting tree clock-based 
algorithms for the $\Maz$  and  $\SHB$ partial orders.
\item We perform a large-scale experimental evaluation of the tree clock data structure for computing the $\Maz$, $\SHB$ and $\HB$ partial orders, and compare its performance against the standard vector clock data structure.
Our results show that just by replacing vector clocks with tree clocks, the computation becomes up to $2.97 \times$ faster on average.
Given our experimental results, we believe that replacing vector clocks by tree clocks in partial order-based algorithms can lead to significant improvements on many applications.

\end{compactenum}


\section{Preliminaries}\label{sec:prelims}

In this section we develop relevant notation and present standard concepts regarding concurrent executions, partial orders and vector clocks.


\subsection{Concurrent Model and Traces}\label{subsec:model}

We start with our main notation on traces.
The exposition is standard and follows related work (e.g.,~\cite{Flanagan09,Smaragdakis12,Kini17}).

\Paragraph{Events and traces.}
We consider execution traces of concurrent programs represented as a sequence of events performed by different threads.
Each event is a tuple $\Event=\ev{i, \Thread, \Operation}$, 
where $i$ is the unique event identifier of $\Event$, $\Thread$ 
is the identifier of the thread that performs $\Event$, 
and $\Operation$ is the operation performed by $\Event$, 
which can be one of the following types
\footnote{Fork and join events are ignored for ease of presentation. Handling such events is straightforward.}.
\begin{compactenum}
\item $\Operation=\Read(x)$, denoting that $\Event$ reads global variable $x$.
\item $\Operation=\Write(x)$, denoting that $\Event$ writes to global variable $x$.
\item $\Operation=\Acquire(\lk)$, denoting that $\Event$ acquires the lock $\lk$.
\item $\Operation=\Release(\lk)$, denoting that $\Event$ releases the lock $\lk$.
\end{compactenum}
We write $\tid(\Event)$ and $\Operation(\Event)$ to denote the thread identifier and 
the operation of $\Event$, respectively.
For a read/write event $\Event$, we denote by $\Variable(\Event)$
the (unique) variable that $\Event$ accesses.
We often ignore the identifier $i$ and represent $\Event$ as $\ev{\Thread, \Operation}$.
In addition, we are often not interested in the thread of $\Event$, 
in which case we simply denote $\Event$ by its operation, e.g., we refer to event $\Read(x)$.
When the variable of $\Event$ is not relevant, it is also omitted
(e.g., we may refer to a read event $\Read$).

A (concrete) \emph{trace} is a sequence of events $\Trace=\Event_1, \dots, \Event_n$.
The trace $\Trace$ naturally defines a total order $\trord{\Trace}$ (pronounced \emph{trace order})
over the set of events appearing in $\Trace$,
i.e., we have $\Event \trord{\Trace} \Event'$ iff either $\Event = \Event'$ or
 $\Event$ appears before $\Event'$ in $\Trace$;
 when $\Event \neq \Event'$, then we say $\Event \stricttrord{\Trace} \Event'$.
We require that $\Trace$ respects the semantics of locks.
That is, for every lock $\lk$ and every two acquire events 
$\Acquire_1(\lk)$, $\Acquire_2(\lk)$ on the lock $\lk$ such that 
$\Acquire_1(\lk) \stricttrord{\Trace} \Acquire_2(\lk)$,
there exists a lock release event $\Release_1(\lk)$ in $\Trace$
with $\tid(\Acquire_1(\lk))=\tid(\Release_1(\lk))$
and $\Acquire_1(\lk) \stricttrord{\Trace} \Release_1(\lk) \stricttrord{\Trace}\Acquire_2(\lk)$.
Finally, we denote by $\Threads{\Trace}$ the set of
thread identifiers appearing in $\Trace$.

\Paragraph{Thread order.}
Given a trace $\Trace$, the \emph{thread order} $\tho{\Trace}$ is the smallest partial order
such that $\Event_1 \tho{\Trace} \Event_2$ iff $\tid(\Event_1)=\tid(\Event_2)$ and $\Event_1 \trord{\Trace}\Event_2$.
For an event $\Event$ in a trace $\Trace$,
the local time $\LocalTime{\Trace}{\Event}$ of $\Event$ is 
the number of events
that appear before $\Event$ in the trace $\Trace$ that are also
performed by $\ThreadOf{\Event}$, i.e.,
$
\LocalTime{\Trace}{\Event} = |\setpred{\Event'}{\Event' \tho{\Trace} \Event}|
$.
We remark that the pair $(\ThreadOf{\Event}, \LocalTime{\Trace}{\Event})$ uniquely
identifies the event $\Event$ in the trace $\Trace$.

\Paragraph{Conflicting events.}
Two events of $\Event_1$, $\Event_2$ of $\Trace$ are called 
\emph{conflicting}, denoted by $\Confl{\Event_1}{\Event_2}$, if 
(i)~$\Variable(\Event_1)=\Variable(\Event_2)$,
(ii)~$\tid(\Event_1)\neq \tid(\Event_2)$, and
(iii)~at least one of $\Event_1$, $\Event_2$ is a write event.
The standard approach in concurrent analyses is to detect conflicting events that are causally independent, according to some pre-defined notion of causality, and can thus be executed concurrently.



\subsection{Partial Orders, Vector Times and Vector Clocks}
\label{subsec:vc_prelim}
A partial order on a set $S$ is a reflexive, transitive
and anti-symmetric binary relation on the elements of $S$.
Partial orders are the standard mathematical object for analyzing concurrent executions.
The main idea behind such techniques is to define a 
partial order $\ord{\Trace}{P}$ on the set of events of the trace $\Trace$ 
being analyzed.
The intuition is that $\ord{\Trace}{P}$ 
captures \emph{causality} --- the relative order of two events of $\Trace$ 
must be maintained if they are ordered by $\ord{\Trace}{P}$.
More importantly, when two events $\Event_1$ and $\Event_2$ are unordered by
$\ord{\Trace}{P}$ (denoted $\Event_1 \unord{\Trace}{P} \Event_2$), 
then they can be deemed \emph{concurrent}.
This principle forms the backbone of all partial-order based concurrent analyses.

A na\"{i}ve approach for constructing such a partial order is to explicitly represent it as an acyclic directed graph over the events of $\Trace$, and then perform a graph search whenever needed to determine whether two events are ordered.
Vector clocks, on the other hand, provide a more efficient
method to represent partial orders and therefore are the key data structure in most partial order-based algorithms. The use of vector clocks enables designing streaming algorithms, which are also suitable for monitoring the system.
These algorithms associate 
\emph{vector timestamps}~\cite{Mattern89,Fidge91,fidge1988timestamps} with events
so that the point-wise ordering between timestamps reflects the underlying partial order.
Let us formalize these notions now.

\Paragraph{Vector Timestamps.}
Let us fix the set of threads $\Threads{}$ in the trace.
A \emph{vector timestamp} (or simply vector time) 
is a mapping $\VectorTime\colon \Threads{} \to \Nats$.
It supports the following operations.

{
\setlength\tabcolsep{3pt}
\begin{tabular}{lclr}
$\VectorTime_1 \CSmaller\VectorTime_2$ 
& iff & $\forall t\colon \VectorTime_1(\Thread)\leq \VectorTime_2(\Thread)$
& (Comparison)\\
$\VectorTime_1\CJoin \VectorTime_2$ & $=$ & $\lambda \Thread\colon \max(\VectorTime_1(\Thread), \VectorTime_2(\Thread))$ & (Join)\\
$\VectorTime\CAssign{\Thread'}{i}$ & $=$ & 
$
\lambda \Thread \colon
\begin{cases}
\VectorTime(\Thread) + i, & \text{if } \Thread=\Thread'\\
\VectorTime(\Thread), & \text{otherwise }
\end{cases}
$
& (Increment)
\end{tabular}
}

We write $\VectorTime_1 =\VectorTime_2$ to denote that $\VectorTime_1 \CSmaller\VectorTime_2$ and $\VectorTime_2 \CSmaller\VectorTime_1$.
Let us see how vector timestamps provide an efficient 
implicit representation of partial orders.


\Paragraph{Timestamping for a partial order.}
Consider a partial order $\ord{\Trace}{P}$ defined
on the set of events of $\Trace$ such that $\tho{\Trace} \subseteq \ord{\Trace}{P}$.
In this case, we can define the $\mathsf{P}$-timestamp of
an event $\Event$ as the following vector timestamp:
\begin{equation*}
\POTime{\ord{\Trace}{P}}{\Event} = \lambda u: \max \setpred{\,\LocalTime{\Trace}{f}}{f \ord{\Trace}{P} e,\ \ThreadOf{f} = u}
\end{equation*}
\camera{In words, $\POTime{\ord{\Trace}{P}}{\Event}$ contains the timestamps of the events that appear the latest in their respective threads such that they are ordered before $e$ in the partial order $\ord{\Trace}{P}$}. We remark that $\POTime{\ord{\Trace}{P}}{\Event}(\ThreadOf{\Event}) = \LocalTime{\Trace}{\Event}$.
The following observation then shows that the timestamps
defined above precisely capture the order $\ord{\Trace}{P}$.
\begin{lemma}
\label{lem:PO-timestamps}
Let $\ord{\Trace}{P}$ be a partial order defined on the set of events
of trace $\Trace$ such that $\tho{\Trace} \subseteq \ord{\Trace}{P}$. 
Then for any two events $\Event_1, \Event_2$ of $\Trace$, we have,
$\POTime{\ord{\Trace}{P}}{\Event_1} \CSmaller \POTime{\ord{\Trace}{P}}{\Event_2} \iff \Event_1 \ord{\Trace}{P} \Event_2$.
\end{lemma}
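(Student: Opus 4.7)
The plan is to prove the two directions of the biconditional separately, using transitivity of $\ord{\Trace}{P}$ for one direction and the inclusion $\tho{\Trace} \subseteq \ord{\Trace}{P}$ together with the identity $\POTime{\ord{\Trace}{P}}{\Event}(\ThreadOf{\Event}) = \LocalTime{\Trace}{\Event}$ for the other.

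For the ($\Leftarrow$) direction, I would fix an arbitrary thread $u \in \Threads{\Trace}$ and argue pointwise that $\POTime{\ord{\Trace}{P}}{\Event_1}(u) \leq \POTime{\ord{\Trace}{P}}{\Event_2}(u)$. If the left-hand side is $0$ (i.e., the max is over an empty set, which I would treat as $0$ by convention), there is nothing to show. Otherwise, let $f$ be an event of thread $u$ achieving the max on the left, so $f \ord{\Trace}{P} \Event_1$ and $\LocalTime{\Trace}{f}$ equals $\POTime{\ord{\Trace}{P}}{\Event_1}(u)$. By transitivity of $\ord{\Trace}{P}$ applied to $f \ord{\Trace}{P} \Event_1 \ord{\Trace}{P} \Event_2$, we get $f \ord{\Trace}{P} \Event_2$, so $f$ is also a candidate in the max defining $\POTime{\ord{\Trace}{P}}{\Event_2}(u)$, giving the desired inequality.

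For the ($\Rightarrow$) direction, I would exploit the thread of $\Event_1$ itself. Set $u = \ThreadOf{\Event_1}$. By the remark that $\POTime{\ord{\Trace}{P}}{\Event}(\ThreadOf{\Event}) = \LocalTime{\Trace}{\Event}$, we have $\POTime{\ord{\Trace}{P}}{\Event_1}(u) = \LocalTime{\Trace}{\Event_1}$, and the hypothesis yields $\POTime{\ord{\Trace}{P}}{\Event_2}(u) \geq \LocalTime{\Trace}{\Event_1}$. Hence the max in the definition of $\POTime{\ord{\Trace}{P}}{\Event_2}(u)$ is witnessed by some event $f$ of thread $u$ with $\LocalTime{\Trace}{f} \geq \LocalTime{\Trace}{\Event_1}$ and $f \ord{\Trace}{P} \Event_2$. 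Because $f$ and $\Event_1$ share the thread $u$ and $f$ has a local time at least that of $\Event_1$, we have $\Event_1 \tho{\Trace} f$; the assumption $\tho{\Trace} \subseteq \ord{\Trace}{P}$ then gives $\Event_1 \ord{\Trace}{P} f$, and transitivity finishes the argument as $\Event_1 \ord{\Trace}{P} f \ord{\Trace}{P} \Event_2$.

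Neither direction looks genuinely hard; the only mild subtlety is the ($\Rightarrow$) direction, where one must remember to use the ``diagonal'' entry $u = \ThreadOf{\Event_1}$ to convert a pointwise inequality on timestamps back into an order statement on events, and to invoke the assumption $\tho{\Trace} \subseteq \ord{\Trace}{P}$ to close the gap between $\Event_1$ and the witness $f$. Handling the edge case where an event has no $\ord{\Trace}{P}$-predecessor in some thread (empty max) just requires adopting the convention $\max \emptyset = 0$ and checking that the reasoning still goes through.
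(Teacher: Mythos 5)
Your proof is correct. The paper states \cref{lem:PO-timestamps} as a standard observation and provides no proof of it, so there is nothing to compare against; your argument is the natural one: the ($\Leftarrow$) direction follows from transitivity of $\ord{\Trace}{P}$ (the candidate set for $\Event_1$ is contained in that for $\Event_2$), and the ($\Rightarrow$) direction correctly uses the diagonal entry $u=\ThreadOf{\Event_1}$ together with the remark $\POTime{\ord{\Trace}{P}}{\Event}(\ThreadOf{\Event})=\LocalTime{\Trace}{\Event}$ and the hypothesis $\tho{\Trace}\subseteq\ord{\Trace}{P}$ to bridge from $\Event_1$ to the witness $f$. Your handling of the empty-max convention is also the intended one.
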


\camera{
In words, \cref{lem:PO-timestamps} implies that, in order to check whether two events are ordered according to $\ord{\Trace}{P}$, it suffices to compare their vector timestamps.
}

\Paragraph{The vector clock data structure.}
When establishing a causal order over the events of a trace,
the timestamps of an event is computed using timestamps of other events in the trace.
Instead of explicitly storing timestamps of each event, it is
often sufficient to store only the timestamps of a few events, as the algorithms is running.
Typically a data-structure called \emph{vector clocks} is used to store vector times.
Vector clocks are implemented as a simple integer array indexed by thread identifiers,
and they support all the operations on vector timestamps.
A useful feature of this data-structure is the
ability to perform in-place operations.
In particular, there are methods such as
$\FunctionVCJoin(\cdot)$, $\FunctionVCCopy(\cdot)$ or 
$\FunctionVCIncrement(\cdot,\cdot)$
that store the result of the corresponding vector time operation
in the original instance of the data-structure.
For example, for a vector clock $\mathbb{C}$ and a vector time $V$, 
a function call $\mathbb{C}.\FunctionVCJoin(V)$ 
stores the value  $\mathbb{C} \CJoin V$ back in $\mathbb{C}$.
Each of these operations iterates over all the thread identifiers
(indices of the array representation) and compares the
corresponding components in $\mathbb{C}$ and $V$.
The running time of the join operation
for the vector clock data structure is thus 
$\Theta(\NumThreads)$, where $\NumThreads$ is the number
of threads.
Similarly, copy and comparison operations take $\Theta(\NumThreads)$
time.



\subsection{The Happens-Before Partial Order}\label{subsec:happens_before_races}

Lamport's Happens-Before ($\HB$)~\cite{Lamport78} is one of the most frequently used partial orders for the analysis of concurrent executions, with wide applications in domains such as dynamic race detection.
Here we use $\HB$ to illustrate the disadvantages of vector clocks and form the basis for the tree clock data structure.
In later sections we show how tree clocks also apply to other partial orders, such as Schedulably-Happens-Before and the Mazurkiewicz partial order.


\Paragraph{Happens-before.}
Given a trace $\Trace$, the \emph{happens-before} ($\HB$) partial order $\hb{\Trace}$ of $\Trace$ 
is the smallest partial order over the events of $\Trace$ that satisfies the following conditions.
\begin{compactenum}
\item $\tho{\Trace} \subseteq  \hb{\Trace}$.
\item For every release event  
$\Release(\lk)$ and  acquire event $\Acquire(\lk)$ on the same lock $\lk$ with 
$\Release(\lk) \stricttrord{\Trace} \Acquire(\lk)$, we have 
$\Release(\lk)\hb{\Trace} \Acquire(\lk)$.
\end{compactenum}
For two events $\Event_1, \Event_2$ in trace $\Trace$, we use
$\Event_1 \unordhb{\Trace} \Event_2$ to denote that neither
$\Event_1 \hb{\Trace} \Event_2$, nor $\Event_2 \hb{\Trace} \Event_1$.
We say $\Event_1 \stricthb{\Trace} \Event_2$
when $\Event_1 \neq \Event_2$ and $\Event_1 \hb{\Trace} \Event_2$.
%
%
Given a trace $\Trace$, two events $\Event_1$, $\Event_2$ of $\Trace$ are said to be in a \emph{happens-before (data) race} if 
(i)~$\Confl{\Event_1}{\Event_2}$ and
(ii)~$\Event_1\unordhb{\Trace} \Event_2$.



\begin{algorithm}[t]
\vspace*{-\multicolsep}
\begin{multicols}{2}

\MyProcedure{\ProcAcquire{$\Thread$, $\lk$}}{
$\CTC_{\Thread}.\FunctionVCJoin(\CTC_{\lk})$

}
\BlankLine
\MyProcedure{\ProcRelease{$\Thread$, $\lk$}}{
$\CTC_{\lk}.\FunctionVCCopy(\CTC_{\Thread})$
}
\end{multicols}
\vspace*{-0.5\multicolsep}
\normalsize
\caption{
Computing the $\HB$ partial order.
}
\label{algo:hb_basic}
\end{algorithm}
\normalsize

\Paragraph{The happens-before algorithm.}
In light of \cref{lem:PO-timestamps}, race detection based on
$\HB$ constructs the $\hb{\Trace}$ partial order in terms of vector timestamps
and detects races using these. 
The core algorithm for constructing $\hb{}$ is shown in \cref{algo:hb_basic}.
The algorithm maintains a vector clock $\CTC_{\Thread}$ for every thread 
$\Thread\in \Threads{}$, and a similar one $\CTC_{\lk}$ for every lock $\lk$.
When processing an event $\Event=\ev{\Thread, \Operation}$, it performs an update
$\CTC_{\Thread}.\FunctionVCIncrement(\Thread, 1)$,
which is implicit and not shown in \cref{algo:hb_basic}.
Moreover, if $\Operation=\Acquire(\lk)$ or $\Operation=\Release(\lk)$, 
the algorithm executes the corresponding procedure.
The $\HB$-timestamp of $\Event$ is then simply the value stored in
$\CTC_{\ThreadOf{e}}$ right after $\Event$ has been processed.


\Paragraph{Running time using vector clocks.}
If a trace $\Trace$ has $\NumEvents$ events and $\NumThreads$ threads,
computing the $\HB$ partial order with 
\cref{algo:hb_basic} and using vector clocks takes $O(\NumEvents\cdot \NumThreads)$ time.
The quadratic bound occurs because every vector clock join and copy operation iterates over all $\NumThreads$ threads.

\section{The Tree Clock Data Structure}\label{sec:tree_clocks}

In this section we introduce tree clocks, a new data structure for representing logical times in concurrent and distributed systems.
We first illustrate the intuition behind tree clocks, and then develop the data structure in detail.


\subsection{Intuition}\label{subsec:intuition}


\begin{figure*}[t]
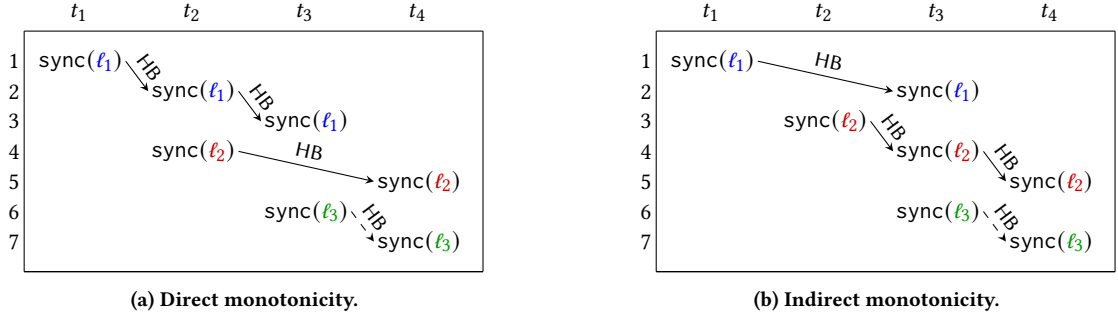

\begin{subfigure}[b]{0.45\textwidth}
\centering
\execution{4}{
\figev{1}{$\Sync(\textcolor{blue}{\ell_1})$}
\figev{2}{$\Sync(\textcolor{blue}{\ell_1})$}
\figev{3}{$\Sync(\textcolor{blue}{\ell_1})$}
\figev{2}{$\Sync(\textcolor{\darkred}{\ell_2})$}
\figev{4}{$\Sync(\textcolor{\darkred}{\ell_2})$}
\figev{3}{$\Sync(\textcolor{\darkgreen}{\ell_3})$}
\figev{4}{$\Sync(\textcolor{\darkgreen}{\ell_3})$}
\orderedgewithlabel{1}{1}{0.6}{2}{2}{-0.6}{\small $\HB$}{above}
\orderedgewithlabel{2}{2}{0.6}{3}{3}{-0.6}{\small $\HB$}{above}
\orderedgewithlabel{2}{4}{0.6}{4}{5}{-0.6}{\small $\HB$}{above}
\orderdashededgewithlabel{3}{6}{0.6}{4}{7}{-0.6}{\small $\HB$}{above}
}
\caption{
Direct monotonicity.
}
\label{subfig:intuition1}
\end{subfigure}
\quad
\begin{subfigure}[b]{0.45\textwidth}
\centering
\execution{4}{
\figev{1}{$\Sync(\textcolor{blue}{\ell_1})$}
\figev{3}{$\Sync(\textcolor{blue}{\ell_1})$}
\figev{2}{$\Sync(\textcolor{\darkred}{\ell_2})$}
\figev{3}{$\Sync(\textcolor{\darkred}{\ell_2})$}
\figev{4}{$\Sync(\textcolor{\darkred}{\ell_2})$}
\figev{3}{$\Sync(\textcolor{\darkgreen}{\ell_3})$}
\figev{4}{$\Sync(\textcolor{\darkgreen}{\ell_3})$}
\orderedgewithlabel{1}{1}{0.6}{3}{2}{-0.6}{\small $\HB$}{above}
\orderedgewithlabel{2}{3}{0.6}{3}{4}{-0.6}{\small $\HB$}{above}
\orderedgewithlabel{3}{4}{0.6}{4}{5}{-0.6}{\small $\HB$}{above}
\orderdashededgewithlabel{3}{6}{0.6}{4}{7}{-0.6}{\small $\HB$}{above}
}
\caption{
Indirect monotonicity.
}
\label{subfig:intuition2}
\end{subfigure}
\caption{
Illustration of the two insights behind tree clocks.
An event $\Sync(\ell)$ represents two events $\Acquire(\ell),\Release(\ell)$.
}
\label{fig:intuitions}
\end{figure*}

Like vector clocks, tree clocks represent vector timestamps that record a thread's knowledge of events in other threads. 
Thus, for each thread $\Thread$, a tree clock records the last known local time of $\Thread$. 
However, unlike a vector clock which is flat, a tree clock maintains this information hierarchically --- nodes store local times of a thread, while the tree structure records how this information has been obtained transitively through intermediate threads. 
In the following examples we use the operation $\Sync(\lk)$ to 
denote the sequence $\Acquire(\lk), \Release(\lk)$.

\Paragraph{1.~Direct monotonicity.}
Recall that a vector clock-based algorithm like \cref{algo:hb_basic} maintains a vector clock $\CTC_\Thread$ which intuitively captures thread $\Thread$'s knowledge about all threads. However, it does not maintain \emph{how} this information was acquired. Knowledge of how such information  was acquired can be exploited in join operations, as we show through an example. Consider a computation of the $\HB$ partial order for the trace $\Trace$ shown in \cref{subfig:intuition1}. At event $\Event_7$, thread $\Thread_4$ transitively learns information about events in the trace through thread $\Thread_3$ because $\Event_{6} \stricthb{\Trace} \Event_{7}$ (dashed edge in \cref{subfig:intuition1}). This is accomplished by joining with clock $\CTC_{\Thread_3}$ of thread $\Thread_3$. Such a join using vector clocks will take 4 steps because we need to take the pointwise maximum of two vectors of length $4$. 

Suppose in addition to these timestamps, we maintain how these timestamps were updated in each clock. This would allow one to make the following observations.
\begin{compactenum}
\item Thread $\Thread_3$ knows of event $\Event_1$ of $\Thread_1$ transitively, through event $\Event_2$ of thread $\Thread_2$.
\item Thread $\Thread_4$ (before the join at $\Event_7$) knows of event $\Event_1$ through $\Event_4$ of thread $\Thread_2$.
\end{compactenum}
Before the join, since $\Thread_4$ has a more recent view of $\Thread_2$ when compared to $\Thread_3$, it is aware of all the information that thread $\Thread_3$ knows about the world via thread $\Thread_2$. 
Thus, when performing the join, we need not examine the component corresponding to thread $\Thread_1$ in the two clocks. 
Tree clocks, by maintaining such additional information, can avoid examining some components of a vector timestamp and yield sublinear updates.

\Paragraph{2.~Indirect monotonicity.}
We now illustrate that if in addition to information about ``how a view of a thread was updated'', we also maintained ``when the view of a thread was updated'', the cost of join operations can be further reduced.
Consider the trace $\Trace$ of \cref{subfig:intuition2}. 
At each of the events of thread $\Thread_4$, it learns about events in the trace transitively through thread $\Thread_3$ by performing two join operations. 
At the first join (event $\Event_5$), thread $\Thread_4$ learns about events $\Event_1$, $\Event_2$, $\Event_3$ transitively through event $\Event_4$. 
At event $\Event_7$, thread $\Thread_4$ finds out about new events in thread $\Thread_3$ (namely, $\Event_6$). 
However, it does not need to update its knowledge about threads $\Thread_1$ and $\Thread_2$ --- thread $\Thread_3$'s information about threads $\Thread_1$ and $\Thread_2$ were acquired by the time of event $\Event_4$ about which thread $\Thread_4$ is aware. 
Thus, if information about when knowledge was acquired is also kept, this form of ``indirect monotonicity'' can be exploited to avoid examining all components of a vector timestamp.

The flat structure of vector clocks misses the transitivity of information sharing, and thus arguments based on monotonicity are lost, resulting in vacuous operations.
On the other hand, tree clocks maintain transitivity in their hierarchical structure.
This enables reasoning about direct and indirect monotonicity, and thus avoid redundant operations.


\subsection{Tree Clocks}\label{subsec:clocktrees_details}

We now present the tree clock data structure in detail.

\begin{figure}
\small
\newcommand{\xdisposition}{5}
\newcommand{\ydisposition}{0}
\newcommand{\xtstep}{0.9}
\newcommand{\ytstep}{0.5}
\newcommand{\xstep}{2.8}
\newcommand{\ystep}{1.2}
\newcommand{\xtscale}{0.8}

\newcommand{\halfyOuter}{0.4}
\newcommand{\halfxOuter}{0.5}
\newcommand{\halfyInner}{0.3}
\newcommand{\halfxInner}{0.4}
\newcommand{\grayDark}{gray!50}
\newcommand{\grayLight}{gray!20}

\centering
\scalebox{1}{
\begin{tikzpicture}[thick,sibling distance=1em,
every tree node/.style={thick, rectangle, rounded corners, minimum height=0.8mm, minimum width=5mm, inner sep=2pt},
sibling distance=3pt,
level distance=20pt,
]
\tikzset{edge from parent/.append style={ thick}}
\pgfdeclarelayer{background}
\pgfdeclarelayer{foreground}
\pgfsetlayers{background,main,foreground}

\begin{scope}[shift={(0,0)}]

\Tree [
 . \node[] (a1) {$\treethr{4},2,\bot$};
	[. \node[] (a2) {$\treethr{3},2,2$};]
	[. \node[] (a3) {$\treethr{2},2,1$}; 
		[ .\node[] (a7) {$\treethr{1},1,1$}; ] 
	]
]


\end{scope}

\begin{scope}[shift={(\xdisposition,0)}]

\Tree [
 . \node[] (b1) {$\treethr{4},2,\bot$};
	[ .\node[] (b2) {$\treethr{3},3,2$}; 
		[ .\node[] (b3) {$\treethr{2},1,2$};] 
		[ .\node[] (b4) {$\treethr{1},1,1$};] 
	]
]

\end{scope}

\end{tikzpicture}
}
\caption{
The tree clock of $\treethr{4}$ after processing the event $\Event_{7}$ in the traces of \cref{subfig:intuition1} (left) and \cref{subfig:intuition2} (right).
}
\label{fig:treeclocks}
\end{figure}

\Paragraph{Tree clocks.}
A tree clock $\TreeClock$ consists of the following.
\begin{compactenum}
\item $\GTree=(\TNodes, \TEdges)$ is a \emph{rooted tree} of nodes of the form $(\tid,\clock, \pclock)\in \Threads{} \times  \Nats^2$.
Every node $u$ stores its children in an ordered list 
$\ChildrenQ(u)$ of descending $\pclock$ order.
We also store a pointer $\Parent(u)$ of $u$ to its parent in $\GTree$.
\item $\ThreadMap\colon \Threads{} \to \TNodes$ is a \emph{thread map},
with the property that if $\ThreadMap(\Thread)=(\tid, \clock, \pclock)$, then $\Thread=\tid$.
\end{compactenum}
We denote by $\GTree.\Root$ the root of $\GTree$, and
for a tree clock $\TreeClock$ we refer by $\TreeClock.\GTree$ and $\TreeClock.\ThreadMap$ to the rooted tree and thread map of $\TreeClock$, respectively.
For a node $u=(\tid,\clock,\pclock)$ of $\GTree$, we let $u.\tid=\tid$, $u.\clock=\clock$ and $u.\pclock=\pclock$,
and say that $u$ \emph{points to} the unique event $e$ \camera{with
$\ThreadOf{e} = \tid$ and $\LocalTime{}{e} = \clock$.}
Intuitively, if $v=\Parent(u)$, then $u$ represents the following information.
\begin{compactenum}
\item $\TreeClock$ has the \emph{local time} $u.\clock$ for thread $u.\tid$.
\item $u.\pclock$ is the \emph{attachment time} of $v.\tid$, which is the local time of $v$ when $v$ learned about $u.\clock$ of $u.\tid$
(this will be the time that $v$ had when $u$ was attached to $v$).
\end{compactenum}
Naturally, if $u=\GTree.\Root$ then $u.\pclock=\bot$.
See \cref{fig:treeclocks}.


\begin{algorithm*}
\SetInd{0.3em}{0.5em}
\addtolength{\columnsep}{-10pt}
\vspace*{-\multicolsep}
\begin{multicols}{2}
\tcp{Initialize a tree clock for thread $\Thread$}
\MyFunction{\FunctionCTInitialize{$\Thread$}}{
Let $u\gets(\Thread, 0, \bot)$\\
Make $u$  the root of $\GTree$\\
Let $\ThreadMap(\Thread)\gets u$\\
}
\BlankLine
\tcp{Get the clock for thread $\Thread$}
\MyFunction{\FunctionCTGet{$\Thread$}}{
\uIf{$\TreeClock.\ThreadMap(\Thread)\neq \bot$}{
Let $u\gets \ThreadMap(\Thread)$\\
\Return{$u.\clock$}
}
\Return{$0$}
}
\BlankLine
\tcp{Increment the clock of the root thread}
\MyFunction{\FunctionCTIncrement{$i$}}{
Let $z\gets \GTree.\Root$\\
$z.\clock\gets z.\clock +  i$
}
\BlankLine
\tcp{$\True$ iff $\CSmaller\TreeClock'$}
\MyFunction{\FunctionCTLessThan{$\TreeClock'$}}{
Let $z\gets \GTree.\Root$\\
\Return{$z.\clock \leq  \TreeClock'.\FunctionCTGet(z.\tid)$}
}
\BlankLine
\tcp{Update with $\CJoin\TreeClock'$}
\MyFunction{\FunctionCTJoin{$\TreeClock'$}}{
Let $z'\gets \TreeClock'.\GTree.\Root$\\
\uIf{$z'.\clock \leq \FunctionCTGet(z'.\tid)$}{\label{line:functionctjoinifroot}
\Return 
}
Let $\Stack\gets$ an empty stack\\
\RoutineUpdatedNodesForJoin($\Stack$, $z'$)\\
\RoutineDetachNodes($\Stack$)\label{line:functionctjoin_call_detach}\\
\RoutineAttachNodes($\Stack$)\\
\tcp{Place the updated subtree under the root of $\GTree$}
Let $w\gets \ThreadMap(z'.\tid)$\\
Let $z\gets \GTree.\Root$\\
Assign $w.\pclock\gets z.\clock$\\
\RoutinePushChild($w$, $z$)\\
}
\BlankLine
\tcp{Monotone copy, assumes that $\mathsf{this}\CSmaller\TreeClock'$}
\MyFunction{\FunctionCTMonotoneCopy{$\TreeClock'$}}{
Let $z'\gets\TreeClock'.\GTree.\Root$\\
Let $z\gets\GTree.\Root$\\ 
Let $\Stack\gets$ an empty stack\\
\RoutineUpdatedNodesForCopy($\Stack$, $z'$, $z$)\\
\RoutineDetachNodes($\Stack$)\label{line:functionctmonotonecopy_call_detach}\\
\RoutineAttachNodes($\Stack$)\\
\tcp{New root has the same $\tid$ as the root of $\TreeClock'.\GTree$}
Assign $\GTree.\Root\gets \ThreadMap (z'.\tid)$\\
}
\pagebreak
\tcp{
Populate $\Stack$ with a pre-order traversal of the subtree rooted at $u'$
with nodes whose clock has progressed
}
\MyRoutine{\RoutineUpdatedNodesForJoin{$\Stack$, $u'$}}{
\ForEach{$v'$ in $\ChildrenQ(u')$}{\label{line:routineupdatednodesforjoin_loop}
\lIf{$\FunctionCTGet(v'.\tid) < v'.\clock$}{\label{line:routineupdatednodesforjoin_if}
\RoutineUpdatedNodesForJoin($\Stack$, $v'$)
}
\lElseIf{$v'.\pclock \leq \FunctionCTGet(u'.\tid)$}{\label{line:routineupdatednodesforjoinifbreak}
\Break
}
}
Push $u'$ in $\Stack$\\
}
\BlankLine
\tcp{Detach from $\GTree$ the nodes with $\tid$ that appears in $\Stack$}
\MyRoutine{\RoutineDetachNodes{$\Stack$}}{
\ForEach{$v'$ in $\Stack$}{
\uIf{$\ThreadMap(v'.\tid)\neq \bot$}{
Let $v\gets \ThreadMap(v'.\tid)$\\
\uIf{$v\neq \GTree.\Root$}{
Let $x \gets \Parent(v)$\\
Remove $v$  from $\ChildrenQ(x)$
}
}
}
}
\BlankLine
\tcp{
Re-attach the nodes of $\GTree$ with $\tid$ that appears in $\Stack$ to obtain the shape corresponding to $\TreeClock'.\GTree$
}
\MyRoutine{\RoutineAttachNodes{$\Stack$}}{
\While{$\Stack$ is not empty}{\label{line:routineattachnodesmainloop}
Let $u'\gets$ pop  $\Stack$\\
\eIf{$\ThreadMap(u'.\tid)\neq \bot$}{
Let $u\gets \ThreadMap(u'.\tid)$
}{
Let $u\gets (u'.\tid, 0, \bot)$\\
Let $\ThreadMap(u.\tid)\gets u$\\
}
Assign $u.\clock \gets u'.\clock$\\
Let $y'\gets \Parent(u')$\\
\uIf{$y'\neq \bot$}{
Assign $u.\pclock \gets u'.\pclock$\\
Let $y\gets\ThreadMap(y'.\tid)$\\
\RoutinePushChild($u$, $y$)\\
}
}
}
\BlankLine
\tcp{Similar to $\RoutineUpdatedNodesForJoin$}
\MyRoutine{\RoutineUpdatedNodesForCopy{$\Stack$, $u'$, $z$}}{
\ForEach{$v'$ in $\ChildrenQ(u')$}{\label{line:routineupdatednodesforcopy_loop}
\eIf{$\FunctionCTGet(v'.\tid) < v'.\clock$}{\label{line:routineupdatednodesforcopy_if}
\RoutineUpdatedNodesForCopy($\Stack$, $v'$, $z$)\\
}{
\lIf{$z\neq \bot$ and $v'.\tid=z.\tid$}{
Push $v'$ in $\Stack$\label{line:routineupdatenodesforcopy_root}
}
\lIf{$v'.\pclock \leq \FunctionCTGet(u'.\tid)$}{\label{line:routineupdatednodesforcopyifbreak}
\Break
}
}
}
Push $u'$ in $\Stack$\\
}
\BlankLine
\tcp{Push $u$  in the front of head of $\ChildrenQ(v)$}
\MyRoutine{\RoutinePushChild{$u$, $v$}}{
Assign $\Parent(u)\gets v$\\
Push $u$ to the front of $\ChildrenQ(v)$\\
}
\end{multicols}

\vspace{-0.7em}

\normalsize
\caption{
The tree clock data structure.
}
\label{algo:clock_tree}
\end{algorithm*}
\normalsize

\Paragraph{Tree clock operations.}
Just like vector clocks, tree clocks provide functions for initialization, update and comparison.
There are two main operations worth noting.
The first is $\FunctionCTJoin$ --- 
$\TreeClock_1.\FunctionCTJoin(\TreeClock_2)$ joins the tree clock $\TreeClock_2$ to $\TreeClock_1$.
In contrast to vector clocks, this operation takes advantage of the direct and indirect monotonicity outlined in \cref{subsec:intuition} to perform the join in sublinear time in the size of $\TreeClock_1$ and $\TreeClock_2$ (when possible).
The second is $\FunctionCTMonotoneCopy$.
We use $\TreeClock_1.\FunctionCTMonotoneCopy(\TreeClock_2)$ to copy $\TreeClock_2$ to $\TreeClock_1$ when we know that $\TreeClock_1\CSmaller \TreeClock_2$.
The idea is that when this holds, the copy operation has the same semantics as the join, and hence the principles that make $\FunctionCTJoin$ run in sublinear time also apply to $\FunctionCTMonotoneCopy$.


\begin{figure*}[t]
\small
\newcommand{\xdisposition}{4.95}
\newcommand{\ydisposition}{0}
\newcommand{\xtstep}{0.9}
\newcommand{\ytstep}{0.5}
\newcommand{\xstep}{2.8}
\newcommand{\ystep}{1.2}
\newcommand{\xtscale}{0.8}

\newcommand{\halfyOuter}{0.4}
\newcommand{\halfxOuter}{0.55}
\newcommand{\halfyInner}{0.3}
\newcommand{\halfxInner}{0.45}
\newcommand{\grayDark}{gray!50}
\newcommand{\grayLight}{gray!20}
\newcommand{\scalefactor}{1}

\centering
\scalebox{\treeclocksexamplecaling}{
\begin{tikzpicture}[thick,sibling distance=1em,
every tree node/.style={thick, rectangle, rounded corners, minimum height=0.8mm, minimum width=5mm, inner sep=2pt},
sibling distance=1pt,
level distance=20pt,
]
\tikzset{edge from parent/.append style={ thick}}
\pgfdeclarelayer{background}
\pgfdeclarelayer{foreground}
\pgfsetlayers{background,main,foreground}

\begin{scope}[shift={(0,0)}]
\begin{scope}[scale=\scalefactor]
\Tree [
 . \node[] (a1) {$\treethr{1},16,\bot$};
	[ .\node[] (a2) {$\treethr{2},20,9$};
		[.\node[] (a4) {$\treethr{4},23,18$}; 
			[ .\node[] (a8) {$\treethr{8},2,19$}; ] 
		]
		[ .\node[] (a5) {$\treethr{5},4,14$}; 
			[ .\node[] (a9) {$\treethr{9},10,4$}; 
				[ .\node[] (a11) {$\treethr{11},8,7$}; ]
				[ .\node[] (a12) {$\treethr{12},2,4$}; ] 
			] 
		]
		[ .\node[] (a6) {$\treethr{6},15,8$}; 
			[. \node[] (a10) {$\treethr{10},2,15$}; ] 
		]
		[ .\node[] (a7) {$\treethr{7},11,2$}; ]
	]
	[. \node[] (a3) {$\treethr{3},17,7$}; ]
]
\node[above of=a1] {\normalsize $\TreeClock_1$};
\end{scope}

\begin{pgfonlayer}{background}
\draw[smooth, draw=none, rounded corners, fill=\grayLight] ($ (a1) + (0, \halfyOuter) $) to ($ (a1) + (-\halfxOuter, \halfyOuter) $) to ($ (a4) + (-\halfxOuter, \halfyOuter) $) to ($ (a4) + (-\halfxOuter, -\halfyOuter) $) to ($ (a5) + (\halfxOuter, -\halfyOuter) $) to ($ (a5) + (\halfxOuter, \halfyOuter) $) to ($ (a2) + (\halfxOuter, -\halfyOuter) $) to ($ (a3) + (-\halfxOuter, -\halfyOuter) $) to ($ (a3) + (\halfxOuter, -\halfyOuter) $) to ($ (a3) + (\halfxOuter, \halfyOuter) $) to ($ (a1) + (\halfxOuter, \halfyOuter) $) to ($ (a1) + (0, \halfyOuter) $);

\draw[smooth, draw=none, rounded corners, fill=\grayDark] ($ (a1) + (0, \halfyInner) $) to ($ (a1) + (-\halfxInner, \halfyInner) $) to ($ (a2) + (-\halfxInner, \halfyInner) $) to ($ (a2) + (-\halfxInner, -\halfyInner) $) to ($ (a3) + (\halfxInner, -\halfyInner) $) to ($ (a3) + (\halfxInner, \halfyInner) $) to ($ (a1) + (\halfxInner, \halfyInner) $) to ($ (a1) + (0, \halfyInner) $);

\end{pgfonlayer}

\end{scope}

\begin{scope}[shift={(\xdisposition,0)}]
\begin{scope}[scale=\scalefactor]
\Tree [
. \node[] (b6) {$\treethr{12},25,\bot$};
	[ . \node[] (b5){$\treethr{5},8,20$};
		[ . \node[] (b8) {$\treethr{8},10,8$}; 
			[ .\node[] (b9) {$\treethr{9},16,5$}; 
				[. \node[] (b10){$\treethr{10},6,12$}; ] 
			] 
		]
		[ .\node[fill=\grayDark] (b1) {$\treethr{1},4,4$}; 
			[ .\node[fill=\grayDark] (b3) {$\treethr{3},10,4$}; ] 
		]
	]
	[ .\node[] (b7) {$\treethr{7},24,16$};
		[ .\node[] (b12) {$\treethr{4},31,20$}; ]
		[ .\node[] (b11) {$\treethr{11},15,7$}; 
			[ .\node[fill=\grayDark] (b2) {$\treethr{2},14,9$}; 
				[ .\node[] (b4) {$\treethr{6},15,8$}; ] 
			] 
		]
	]
]

\node[above of=b6] {\normalsize $\TreeClock_2$};
\end{scope}

\end{scope}

\begin{scope}[shift={(2.09*\xdisposition,0)}]
\begin{scope}[scale=\scalefactor]
\Tree [
.\node[] (c6) {$\treethr{12},25,\bot$};
	[ .\node[] (c1) {$\treethr{1},16,25$}; 
		[ .\node[] (c2) {$\treethr{2},20,9$}; 
			[ .\node[] (c4) {$\treethr{6},15,8$}; ] 
		] 
		[ .\node[] (c3) {$\treethr{3},17,7$}; ] 
	]
	[ .\node[] (c5) {$\treethr{5},8,20$}; 
		[ .\node[] (c8) {$\treethr{8},10,8$}; 
			[ .\node[] (c9) {$\treethr{9},16,5$}; 
				[ .\node[] (c10) {$\treethr{10},6,12$}; ]  
			]  
		] 
	]
	[ .\node[] (c7) {$\treethr{7},24,16$}; 
		[ .\node[] (c12) {$\treethr{4},31,20$}; ] 
		[ .\node[] (c11) {$\treethr{11},15,7$}; ] 
	]
]

\node[above of=c6] {\normalsize $\TreeClock'_2$};
\end{scope}

\begin{pgfonlayer}{background}
\draw[smooth, draw=none, rounded corners, fill=\grayDark] ($ (c1) + (0, \halfyInner) $) to ($ (c1) + (-\halfxInner, \halfyInner) $) to ($ (c2) + (-\halfxInner, \halfyInner) $) to ($ (c2) + (-\halfxInner, -\halfyInner) $) to ($ (c3) + (\halfxInner, -\halfyInner) $) to ($ (c3) + (\halfxInner, \halfyInner) $) to ($ (c1) + (\halfxInner, \halfyInner) $) to ($ (c1) + (0, \halfyInner) $);
\end{pgfonlayer}

\end{scope}

\end{tikzpicture}
}
\caption{
Illustration of $\TreeClock_2.\protect\FunctionCTJoin(\TreeClock_1)$.
Light gray marks the nodes of $\TreeClock_1$ whose time is compared to the time of the respective thread in $\TreeClock_2$ (i.e., the total iterations in \cref{line:routineupdatednodesforjoin_loop}).
Dark gray marks the nodes that are updating/being updated (i.e., the size of $\Stack$).
$\TreeClock'_{2}$ is the result of the join, where dark gray marks the sub-tree updated by $\protect\FunctionCTJoin$.
}
\label{fig:ctjoin}
\end{figure*}

\begin{figure*}[t]
\small
\newcommand{\xdisposition}{5.6}
\newcommand{\ydisposition}{0}
\newcommand{\xtstep}{0.9}
\newcommand{\ytstep}{0.5}
\newcommand{\xstep}{2.8}
\newcommand{\ystep}{1.2}
\newcommand{\xtscale}{0.8}

\newcommand{\halfyOuter}{0.4}
\newcommand{\halfxOuter}{0.55}
\newcommand{\halfyInner}{0.3}
\newcommand{\halfxInner}{0.45}
\newcommand{\grayDark}{gray!50}
\newcommand{\grayLight}{gray!20}
\newcommand{\scalefactor}{1}

\centering
\scalebox{\treeclocksexamplecaling}{
\begin{tikzpicture}[thick,sibling distance=1em,
every tree node/.style={thick, rectangle, rounded corners, minimum height=0.8mm, minimum width=5mm, inner sep=2pt},
sibling distance=3pt,
level distance=20pt,
]
\tikzset{edge from parent/.append style={ thick}}
\pgfdeclarelayer{background}
\pgfdeclarelayer{foreground}
\pgfsetlayers{background,main,foreground}

\begin{scope}[shift={(0,0)}]

\begin{scope}[scale=\scalefactor]
\Tree [
.\node[] (a1) {$\treethr{1},28, \bot$};
	[ .\node[] (a2) {$\treethr{2},13,9$}; 
		[ .\node[] (a5) {$\treethr{5},8,11$};] 
	]
	[ .\node[] (a3) {$\treethr{3},14,7$};
		[ .\node[] (a6) {$\treethr{6},8,9$};  
			[ .\node[] (a10) {$\treethr{10},2,2$}; 
				[ .\node[] (a9) {$\treethr{9},9,6$};]
			] 
		] 
		[ .\node[] (a7) {$\treethr{7},8,4$}; 
			[ .\node[] (a11) {$\treethr{11},7,5$}; 
				[ .\node[] (a12) {$\treethr{12},15,2$}; ] 
			] 
		]
		[ .\node[] (a8) {$\treethr{8},8,2$}; ]
	]
	[ .\node[] (a4) {$\treethr{4},12,5$}; ]
]

\node[above of=a1] {\normalsize $\TreeClock_1$};
\end{scope}

\begin{pgfonlayer}{background}
\draw[smooth, draw=none, rounded corners, fill=\grayLight] ($ (a1) + (0, \halfyOuter) $) to ($ (a1) + (-\halfxOuter, \halfyOuter) $) to ($ (a2) + (-\halfxOuter, \halfyOuter) $) to ($ (a2) + (-\halfxOuter, -\halfyOuter) $) to ($ (a5) + (-\halfxOuter, -\halfyOuter) $) to ($ (a5) + (+\halfxOuter, -\halfyOuter) $) to ($ (a2) + (+\halfxOuter, -\halfyOuter) $) to ($ (a3) + (-\halfxOuter, -\halfyOuter) $) to ($ (a3) + (\halfxOuter, -\halfyOuter) $) to ($ (a4) + (-\halfxOuter, -\halfyOuter) $) to ($ (a4) + (\halfxOuter, -\halfyOuter) $) to ($ (a4) + (\halfxOuter, \halfyOuter) $) to ($ (a1) + (\halfxOuter, \halfyOuter) $) to ($ (a1) + (0, \halfyOuter) $);

\draw[smooth, draw=none, rounded corners, fill=\grayDark] ($ (a1) + (0, \halfyInner) $) to ($ (a1) + (-\halfxInner, \halfyInner) $) to ($ (a2) + (-\halfxInner, \halfyInner) $) to ($ (a2) + (-\halfxInner, -\halfyInner) $) to ($ (a5) + (-\halfxInner, -\halfyInner) $) to ($ (a5) + (+\halfxInner, -\halfyInner) $) to ($ (a2) + (+\halfxInner, -\halfyInner) $) to ($ (a3) + (-\halfxInner, -\halfyInner) $) to ($ (a3) + (\halfxInner+0.1, -\halfyInner) $) to ($ (a3) + (\halfxInner+0.1, \halfyInner) $) to ($ (a1) + (\halfxInner+0.1, -\halfyInner) $) to ($ (a1) + (\halfxInner+0.1, \halfyInner) $) to ($ (a1) + (-\halfxInner, \halfyInner) $);

\end{pgfonlayer}

\end{scope}

\begin{scope}[shift={(1.08*\xdisposition,0)}]

\begin{scope}[scale=\scalefactor]
\Tree [
.\node[fill=\grayDark] (b3) {$\treethr{3},14,\bot$};
	[ .\node[] (b6) {$\treethr{6},8,9$}; 
		[ .\node[fill=\grayDark] (b1) {$\treethr{1},4,2$}; 
			[ .\node[] (b4) {$\treethr{4},12,5$}; ] 
		]  
		[ .\node[] (b10) {$\treethr{10},2,2$};  
			[ .\node[] (b9) {$\treethr{9},9,6$}; ] 
		] 
	] 
	[ .\node[] (b7) {$\treethr{7},8,4$}; 
		[ .\node[] (b11) {$\treethr{11},7,5$}; 
			[ .\node[] (b12) {$\treethr{12},15,2$}; ] 
		] 
	]
	[ .\node[] (b8) {$\treethr{8},8,2$}; 
		[ .\node[fill=\grayDark] (b5) {$\treethr{5},4,4$}; 
			[ .\node[fill=\grayDark] (b2) {$\treethr{2},6,2$}; ]  
		] 
	]
]

\node[above of=b3] {\normalsize $\TreeClock_2$};
\end{scope}

\end{scope}

\begin{scope}[shift={(2.09*\xdisposition,0)}]

\begin{scope}[scale=\scalefactor]
\Tree [
.\node[] (c1) {$\treethr{1},28, \bot$};
	[ .\node[] (c2) {$\treethr{2},13,9$}; 
		[ .\node[] (c5) {$\treethr{5},8,11$};] 
	]
	[ .\node[] (c3) {$\treethr{3},14,7$};
		[ .\node[] (c6) {$\treethr{6},8,9$};  
			[ .\node[] (c10) {$\treethr{10},2,2$};  
				[ .\node[] (c9) {$\treethr{9},9,6$}; ] 
			] 
		] 
		[ .\node[] (c7) {$\treethr{7},8,4$}; 
			[ .\node[] (c11) {$\treethr{11},7,5$}; 
				[ .\node[] (c12) {$\treethr{12},15,2$}; ] 
			] 
		]
		[ .\node[] (c8) {$\treethr{8},8,2$}; ]
	]
	[ .\node[] (c4) {$\treethr{4},12,5$}; 
	]
]

\node[above of=c1] {\normalsize $\TreeClock'_2$};
\end{scope}

\begin{pgfonlayer}{background}
\draw[smooth, draw=none, rounded corners, fill=\grayDark] ($ (c1) + (0, \halfyInner) $) to ($ (c1) + (-\halfxInner, \halfyInner) $) to ($ (c2) + (-\halfxInner, \halfyInner) $) to ($ (c2) + (-\halfxInner, -\halfyInner) $) to ($ (c5) + (-\halfxInner, -\halfyInner) $) to ($ (c5) + (+\halfxInner, -\halfyInner) $) to ($ (c2) + (+\halfxInner, -\halfyInner) $) to ($ (c3) + (-\halfxInner, -\halfyInner) $) to ($ (c3) + (\halfxInner+0.1, -\halfyInner) $) to ($ (c3) + (\halfxInner+0.1, \halfyInner) $) to ($ (c1) + (\halfxInner + 0.1, -\halfyInner) $) to ($ (c1) + (\halfxInner + 0.1, \halfyInner) $) to ($ (c1) + (-\halfxInner, \halfyInner) $);
\end{pgfonlayer}

\end{scope}

\end{tikzpicture}
}
\caption{
Illustration of $\TreeClock_2.\protect\FunctionCTMonotoneCopy(\TreeClock_1)$.
Light gray marks the nodes of $\TreeClock_1$ whose time is compared to the time of the respective thread in $\TreeClock_2$ (i.e., the total iterations in \cref{line:routineupdatednodesforcopy_loop}).
Dark gray marks the nodes that are updating/being updated (i.e., the size of $\Stack$).
$\TreeClock'_{2}$ is the result of the copy, where dark gray marks the sub-tree updated by $\protect\FunctionCTMonotoneCopy$.
Node $(\treethr{3},14,\bot)$ (i.e., the root) of $\TreeClock_{2}$ is updated although $\treethr{3}$ has not progressed in $\TreeClock_{1}$, as it is placed under the new root $(\treethr{1},28,\bot)$ in $\TreeClock'_2$. 
}
\label{fig:ctmonotonecopy}
\end{figure*}

\cref{algo:clock_tree} gives a pseudocode description of this functionality.
The functions on the left column present operations that can be performed on tree clocks, while the right column lists helper routines for the more involved functions $\FunctionCTJoin$ and $\FunctionCTMonotoneCopy$.
In the following we give an intuitive description of each function.

\SubParagraph{1. $\FunctionCTInitialize(\Thread)$.}
This function initializes a tree clock $\TreeClock_{\Thread}$ that belongs to thread $\Thread$, by creating a node $u=(\Thread, 0, \bot)$.
Node $u$ will always be the root of $\TreeClock_{\Thread}$.
This initialization function is only used for tree clocks that represent the clocks of threads.
Auxiliary tree clocks for storing vector times of release events do not execute this initialization.

\SubParagraph{2. $\FunctionCTGet(\Thread)$.}
This function simply returns the time of thread $\Thread$ stored in $\TreeClock$, while it returns $0$ if $\Thread$ is not present in $\TreeClock$.

\SubParagraph{3. $\FunctionCTIncrement(i)$.}
This function increments the time of the root node of $\TreeClock$.
It is only used on tree clocks that have been initialized using $\FunctionCTInitialize$,
i.e., the tree clock belongs to a thread that is always stored in the root of the tree.

\SubParagraph{4. $\FunctionCTLessThan(\TreeClock')$.}
This function compares the vector time of $\TreeClock$ to the vector time of $\TreeClock'$,
i.e., it returns $\True$ iff $\TreeClock\CSmaller\TreeClock'$.

\SubParagraph{5. $\FunctionCTJoin(\TreeClock')$.}
This function implements the join operation with $\TreeClock'$,
i.e., updating $\TreeClock\gets \TreeClock\CJoin \TreeClock'$.
At a high level, the function performs the following steps.
\begin{compactenum}
\item Routine $\RoutineUpdatedNodesForJoin$ performs a pre-order traversal of $\TreeClock'$, and gathers in a stack $\Stack$ the nodes of $\TreeClock'$ that have progressed in $\TreeClock'$ compared to $\TreeClock$. \camera{The traversal may stop early due to direct or indirect monotonicity, hence, this routine generally takes sub-linear time.}
\item Routine $\RoutineDetachNodes$ detaches from $\TreeClock$ the nodes whose $\tid$ appears in $\Stack$, as these will be repositioned in the tree.
\item Routine $\RoutineAttachNodes$ updates the nodes of $\TreeClock$ that were detached in the previous step, and repositions them in the tree.
This step effectively creates a subtree of nodes of $\TreeClock$ that is identical to the subtree of $\TreeClock'$ that contains the progressed nodes computed by $\RoutineUpdatedNodesForJoin$.
\item Finally, the last 4 lines of $\FunctionCTJoin$ attach the subtree constructed in the previous step under the root $z$ of $\TreeClock$, at the front of the $\ChildrenQ(z)$ list.
\end{compactenum}
\cref{fig:ctjoin} provides an illustration.

\SubParagraph{6. $\FunctionCTMonotoneCopy(\TreeClock')$.}
This function implements the copy operation $\TreeClock\gets \TreeClock'$ assuming that $\TreeClock \CSmaller\TreeClock'$.
The function is very similar to $\FunctionCTJoin$.
The key difference is that this time, the root of $\TreeClock$ is always considered to have progressed in $\TreeClock'$, even if the respective times are equal.
This is required for changing the root of $\TreeClock$ from the current node to one with $\tid$ equal to the root of $\TreeClock'$.
\cref{fig:ctmonotonecopy} provides an illustration.

The crucial parts of $\FunctionCTJoin$ and $\FunctionCTMonotoneCopy$ that exploit the hierarchical structure of tree clocks are in $\RoutineUpdatedNodesForJoin$ and $\RoutineUpdatedNodesForCopy$.
In each case, we proceed from a parent $u'$ to its children $v'$ only if $u'$ has progressed wrt its time in $\TreeClock$ (recall \cref{subfig:intuition1}), capturing \emph{direct monotonicity}.
Moreover, we proceed from a child $v'$ of $u'$ to the next child $v''$ (in order of appearance in $\ChildrenQ(u')$) only if $\TreeClock$ is not yet aware of the attachment time of $v'$ on $u'$ (recall \cref{subfig:intuition2}), capturing \emph{indirect monotonicity}.

\begin{remark}[Constant time epoch accesses]\label{rem:epochs}
The function $\TreeClock.\FunctionCTGet(\Thread)$ returns the time of thread $\Thread$ stored in $\TreeClock$ in $O(1)$ time, just like vector clocks.
This allows all epoch-related optimizations~\cite{Flanagan09,Roemer20}  from vector clocks to apply to tree clocks.
\end{remark}

\section{Tree Clocks for Happens-Before}\label{sec:race_detection}



Let us see how tree clocks are employed for computing the $\HB$ partial order.
We start with the following observation.

\smallskip
\begin{restatable}[Monotonicity of copies]{lemma}{lemcopymonotonicity}\label{lem:copy_monotonicity}
Right before \cref{algo:hb_basic} processes a lock-release event $\langle \Thread, \Release(\ell)\rangle$,
we have $\CTC_{\ell} \CSmaller \CTC_{\Thread}$.
\end{restatable}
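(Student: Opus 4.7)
The plan is to prove this via an invariant argument across the events strictly between the matching acquire and the release event in question. First I would locate, using the lock-semantics assumption on traces from \cref{subsec:model}, the unique event $\langle \Thread, \Acquire(\lk)\rangle$ that most recently precedes $\langle \Thread, \Release(\lk)\rangle$ in $\trord{\Trace}$. Call this acquire event $\Event_a$ and the release event $\Event_r$.

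Next I would establish the base case: immediately after \cref{algo:hb_basic} processes $\Event_a$, we have $\CTC_{\lk}\CSmaller \CTC_{\Thread}$. This is immediate from the \ProcAcquire procedure, which performs $\CTC_{\Thread}.\FunctionVCJoin(\CTC_{\lk})$; by the definition of $\CJoin$ on vector times, the resulting $\CTC_{\Thread}$ is pointwise at least $\CTC_{\lk}$.

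The inductive step is to show that this inequality is preserved by every event $\Event$ processed strictly between $\Event_a$ and $\Event_r$. There are two things to verify. (i)~$\CTC_{\lk}$ does not change during this interval. By the well-formedness of $\Trace$ with respect to locks, no other thread may perform an $\Acquire(\lk)$ or $\Release(\lk)$ event between $\Event_a$ and $\Event_r$, and $\Thread$ itself does not touch $\lk$ in this interval either; hence neither \ProcAcquire nor \ProcRelease is invoked on $\lk$, and $\CTC_{\lk}$ is untouched. (ii)~$\CTC_{\Thread}$ only grows pointwise during this interval. Every event performed by $\Thread$ either performs an implicit $\FunctionVCIncrement(\Thread,1)$ on $\CTC_{\Thread}$, or additionally performs a $\FunctionVCJoin$ into $\CTC_{\Thread}$; both operations are pointwise monotone and thus preserve $\CTC_{\lk}\CSmaller \CTC_{\Thread}$. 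Events performed by other threads do not modify $\CTC_{\Thread}$ at all.

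Combining (i) and (ii) gives that right before \ProcRelease is invoked for $\Event_r$, the inequality still holds, which is the claim. There is no real obstacle here; the only point that requires a bit of care is invoking the lock-semantics assumption to rule out intervening modifications to $\CTC_{\lk}$, but that is exactly the condition enforced on traces in \cref{subsec:model}.
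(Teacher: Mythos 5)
Your proof is correct and follows essentially the same route as the paper's: identify the matching acquire, observe that the join performed there establishes $\CTC_{\ell} \CSmaller \CTC_{\Thread}$, use lock semantics to rule out any intervening modification of $\CTC_{\ell}$, and note that $\CTC_{\Thread}$ only grows pointwise until the release. The paper phrases the last two steps slightly more tersely (only release events write to $\CTC_{\ell}$, and vector clock entries never decrease), but the argument is the same.
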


\Paragraph{Tree clocks for $\HB$.}
\cref{algo:hb} shows the algorithm for computing $\HB$ using the tree clock data structure for implementing vector times.
When processing a lock-acquire event, the vector-clock join operation has been replaced by a tree-clock join.
Moreover, in light of \cref{lem:copy_monotonicity}, when processing a lock-release event, the vector-clock copy operation has been replaced by a tree-clock monotone copy.
\begin{arxiv}
We refer to \cref{sec:app_hb_example} for an example run of \cref{algo:hb} on a trace $\Trace$, showing how tree clocks grow during the execution.
\end{arxiv}

\begin{algorithm}[h]
\vspace*{-\multicolsep}
\begin{multicols}{2}
\MyProcedure{\ProcAcquire{$\Thread$, $\ell$}}{
$\CTC_{\Thread}.\FunctionCTJoin(\CTC_{\ell})$
}
\BlankLine
\MyProcedure{\ProcRelease{$\Thread$, $\ell$}}{
$\CTC_{\ell}.\FunctionCTMonotoneCopy(\CTC_{\Thread})$
}
\end{multicols}
\vspace*{-0.5\multicolsep}
\normalsize
\caption{
$\HB$ with tree clocks.
}
\label{algo:hb}
\end{algorithm}
\normalsize

\Paragraph{Correctness.}
We now state the correctness of \cref{algo:hb}, i.e., we show that the algorithm indeed computes the $\HB$ partial order.
We start with two monotonicity invariants of tree clocks.

\smallskip
\begin{restatable}{lemma}{lemtcmonotonicity}\label{lem:tc_monotonicity}
Consider any tree clock $\CTC$ and node $u$ of $\CTC.\GTree$.
For any tree clock $\CTC'$, the following assertions hold.
\begin{compactenum}
\item\label{item:monotonicity1}
\emph{Direct monotonicity:} If $u.\clock \leq \CTC'.\FunctionCTGet(u.\tid)$ then for every descendant $w$ of $u$ we have that $w.\clock \leq \CTC'.\FunctionCTGet(w.\tid)$.
\item\label{item:monotonicity2}
\emph{Indirect monotonicity:} If $v.\pclock \leq \CTC'.\FunctionCTGet(u.\tid)$ where $v$ is a child of $u$ then for every descendant $w$ of $v$ we have that $w.\clock\leq \CTC'.\FunctionCTGet(w.\tid)$.
\end{compactenum}
\end{restatable}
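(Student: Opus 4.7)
The plan is to reduce both assertions to a single semantic invariant that every tree clock $\CTC$ maintained by \cref{algo:hb} enjoys. For a node $u$ of $\CTC.\GTree$, let $e_u$ denote the event of thread $u.\tid$ with local time $u.\clock$ in $\Trace$. The invariant, call it $(\dagger)$, is: (i) for every descendant $w$ of $u$ in $\CTC.\GTree$, $e_w \hb{\Trace} e_u$; and (ii) for every child $v$ of $u$, $e_v \hb{\Trace} \widehat{e}_u$, where $\widehat{e}_u$ is the event of thread $u.\tid$ with local time $v.\pclock$. In addition, each $\CTC$ faithfully represents a vector timestamp of $\hb{\Trace}$, so that $\CTC'.\FunctionCTGet(\Thread) \geq k$ holds iff the $k$-th event of $\Thread$ belongs to the $\HB$-past tracked by $\CTC'$ (this is the tree-clock analogue of \cref{lem:PO-timestamps}).

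Once $(\dagger)$ is in hand, both parts of the lemma follow immediately by transitivity of $\hb{\Trace}$. For direct monotonicity, the hypothesis $u.\clock \leq \CTC'.\FunctionCTGet(u.\tid)$ says $e_u$ lies in $\CTC'$'s $\HB$-past; combined with $e_w \hb{\Trace} e_u$ from $(\dagger)$(i), we obtain $e_w$ in that past as well, i.e., $w.\clock \leq \CTC'.\FunctionCTGet(w.\tid)$. For indirect monotonicity, the hypothesis $v.\pclock \leq \CTC'.\FunctionCTGet(u.\tid)$ places $\widehat{e}_u$ in $\CTC'$'s past, and clause (ii) at $v$ together with clause (i) inside $v$'s subtree give $e_w \hb{\Trace} e_v \hb{\Trace} \widehat{e}_u$ for every descendant $w$ of $v$, yielding $w.\clock \leq \CTC'.\FunctionCTGet(w.\tid)$.

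It remains to establish $(\dagger)$ by induction on the sequence of tree clock operations executed by \cref{algo:hb}. The base case is immediate since a freshly initialized clock has only a root. The implicit $\FunctionCTIncrement$ at each event preserves $(\dagger)$ because the new root event is the successor in thread order of the old one, and all existing descendants were already $\HB$-before the old root. The substantive cases are $\FunctionCTJoin$ (on lock acquires) and $\FunctionCTMonotoneCopy$ (on lock releases). In both, the helper routines only splice into $\CTC$ a sub-tree that already sits verbatim in the source clock, preserving its internal parent--child structure and $\pclock$ values; hence clauses (i) and (ii) within the spliced sub-tree transfer directly from the inductive hypothesis on the source. The only genuinely new edge attaches the top of that sub-tree under the root of $\CTC$, and the $\hb{\Trace}$-edge induced by the synchronization event just processed (release-to-acquire on the same lock, or the acquire itself) validates clause (ii) for precisely that attachment.

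The main obstacle is the bookkeeping of $\RoutineDetachNodes$ and $\RoutineAttachNodes$ when a thread identifier already appears in $\CTC$: one must verify that moving such a node away from its previous parent and reattaching it under a new parent does not falsify clause (i) for any ancestor in either the old or the new location. This reduces to showing that every $\pclock$ recorded by $\RoutineAttachNodes$ is a genuine witness to an $\hb{\Trace}$-edge between the relocated sub-tree and its new parent event, which in turn follows from the order in which $\RoutineUpdatedNodesForJoin$ and $\RoutineUpdatedNodesForCopy$ traverse children, together with the invariant that $\ChildrenQ(u)$ is kept in descending $\pclock$ order.
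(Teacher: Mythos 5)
Your route is genuinely different from the paper's. The paper proves both assertions directly by induction on the events processed by \cref{algo:hb}, reasoning purely about the shapes of the two trees: it case-splits on whether the modified clock plays the role of $\CTC$ or of $\CTC'$, and on whether the node $u$ or its subtree was touched by the operation, transferring the claim either from $\CTC$'s or from the source clock's induction hypothesis. You instead factor the statement through a semantic invariant $(\dagger)$ relating tree edges to $\hb{\Trace}$-orderings of the pointed-to events, and then obtain both assertions in one stroke from transitivity of $\hb{\Trace}$ together with downward-closure of the set represented by $\CTC'$. This is arguably more illuminating (it explains \emph{why} the tree structure encodes transitivity, and it decouples the two clocks, so the paper's awkward ``$\CTC_{\Thread}$ is $\CTC'$'' case disappears); note also that clause (ii) of $(\dagger)$ is essentially \cref{lem:uniqueness}, which the paper only states later for the optimality theorem.

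There is, however, a dependency problem you must repair. Your derivation needs that the set $\setpred{(\Thread,j)}{j\leq \CTC'.\FunctionCTGet(\Thread)}$ is downward-closed under $\hb{\Trace}$; you phrase this as $\CTC'$ ``faithfully representing'' an $\HB$-timestamp and attribute it to \cref{lem:PO-timestamps}. But \cref{lem:PO-timestamps} concerns the ideal timestamps $\POTime{\ord{\Trace}{\HB}}{\Event}$, not what \cref{algo:hb} actually computes; the statement that the algorithm computes these timestamps is \cref{lem:hb_cor}, which the paper derives \emph{from} \cref{lem:tc_monotonicity}. The dependence is real: $\RoutineUpdatedNodesForJoin$ prunes its traversal at \cref{line:routineupdatednodesforjoin_if} and \cref{line:routineupdatednodesforjoinifbreak} precisely on the strength of the two monotonicity assertions, so without them a join may fail to compute the pointwise maximum and the represented set need not remain downward-closed. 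As written, your proof is circular. The fix is standard but must be explicit: run a single simultaneous induction over the events of $\Trace$ whose hypothesis conjoins $(\dagger)$, the downward-closure (equivalently, correctness) of every maintained clock, and hence the two monotonicity assertions, with each conjunct at step $i$ following from all conjuncts at step $i-1$. With that strengthening --- and after tightening your last paragraph, since the reattached $\pclock$ values are valid witnesses because they are copied verbatim from a source clock already satisfying $(\dagger)$, while the descending-$\pclock$ ordering of $\ChildrenQ$ is irrelevant to $(\dagger)$ and matters only for the pruned traversal and for vt-optimality --- the argument goes through.
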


The following lemma follows from the above invariants
and establishes that \cref{algo:hb} with tree clocks computes the correct timestamps on all events, i.e., the correctness of tree clocks for $\HB$.

\smallskip
\begin{restatable}{lemma}{lemhbcor}\label{lem:hb_cor}
When \cref{algo:hb} processes an event $\Event$,
the vector time stored in the tree clock $\CTC_{\tid(\Event)}$ is $\POTime{\ord{\Trace}{\HB}}{\Event}$.
\end{restatable}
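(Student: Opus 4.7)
The plan is to prove the claim by induction on the position of the event being processed in the trace $\Trace$. The inductive invariant maintains two things: (a) for each thread $\Thread$, the tree clock $\CTC_{\Thread}$ represents the vector time $\POTime{\ord{\Trace}{\HB}}{e_\Thread}$, where $e_\Thread$ is the latest event of $\Thread$ in the processed prefix; and (b) for each lock $\ell$, the tree clock $\CTC_{\ell}$ represents the join of $\POTime{\ord{\Trace}{\HB}}{r}$ over all $\Release(\ell)$ events $r$ in the processed prefix. Here ``represents'' means that the vector time $V_{\CTC}$ defined by $V_{\CTC}(\Thread) = \CTC.\FunctionCTGet(\Thread)$ agrees with the intended timestamp.

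The crux of the argument is to verify that the two tree-clock operations $\FunctionCTJoin$ and $\FunctionCTMonotoneCopy$ faithfully implement the pointwise-maximum and copy semantics of vector times. This is exactly where \cref{lem:tc_monotonicity} is invoked. For $\CTC.\FunctionCTJoin(\CTC')$, I would show that $\RoutineUpdatedNodesForJoin$ collects into $\Stack$ precisely the set of nodes of $\CTC'$ whose $\tid$'s have a strictly larger clock in $\CTC'$ than in $\CTC$. Direct monotonicity (\cref{item:monotonicity1}) justifies skipping the entire subtree rooted at a node $v'$ where $v'.\clock \leq \CTC.\FunctionCTGet(v'.\tid)$, since every descendant of $v'$ in $\CTC'$ is then already dominated by $\CTC$. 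Indirect monotonicity (\cref{item:monotonicity2}) justifies the \texttt{break} at \cref{line:routineupdatednodesforjoinifbreak}: because $\ChildrenQ(u')$ is maintained in descending $\pclock$ order, once a child $v'$ satisfies $v'.\pclock \leq \CTC.\FunctionCTGet(u'.\tid)$, every later sibling $v''$ satisfies $v''.\pclock \leq v'.\pclock \leq \CTC.\FunctionCTGet(u'.\tid)$, and its entire subtree is already dominated. The subsequent calls to $\RoutineDetachNodes$ and $\RoutineAttachNodes$ rewire $\CTC$ so that its subtree of updated nodes mirrors the shape in $\CTC'$, preserving the descending-$\pclock$ invariant while leaving the underlying vector time equal to the join. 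The argument for $\FunctionCTMonotoneCopy$ is analogous, with the extra observation that the precondition $\CTC \CSmaller \CTC'$ (supplied by \cref{lem:copy_monotonicity} at the release site) lets us treat every node of $\CTC'$ as progressed with respect to $\CTC$, and in particular permits changing the root of $\CTC$ to the $\tid$ of the root of $\CTC'$.

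Once operational correctness of the two tree-clock operations is in place, the inductive step becomes standard and mirrors the textbook correctness proof of \cref{algo:hb_basic}. A local increment maintains the $\Thread$-component of $\CTC_{\Thread}$ in line with the added event; an $\Acquire(\ell)$ by $\Thread$ absorbs into $\POTime{\HB}{e}$ exactly the vector times of prior releases of $\ell$, matching the $\FunctionCTJoin$; and a $\Release(\ell)$ by $\Thread$ updates $\CTC_{\ell}$ to the join of all prior release timestamps (including the new one), matching the $\FunctionCTMonotoneCopy$ whose precondition holds by \cref{lem:copy_monotonicity}. Combined with \cref{lem:PO-timestamps}, this yields the claim.

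The main obstacle is the pruning argument for $\RoutineUpdatedNodesForJoin$ and $\RoutineUpdatedNodesForCopy$. Verifying that no node with a progressed clock is missed requires simultaneously tracking the descending-$\pclock$ order of $\ChildrenQ(u')$ and invoking both parts of \cref{lem:tc_monotonicity} with the right choice of ``query'' tree clock. In particular, one must argue that this ordering invariant is itself preserved by $\RoutineAttachNodes$, which always pushes the updated node to the front of its new parent's child list (via $\RoutinePushChild$) and assigns its $\pclock$ to the current clock of that parent—a value guaranteed to exceed the $\pclock$ of any previously attached child.
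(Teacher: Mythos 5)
Your proposal is correct and follows essentially the same route as the paper: the paper's (very terse) proof likewise reduces the claim to \cref{lem:tc_monotonicity}, arguing that any node whose clock is not updated by \FunctionCTJoin{} or \FunctionCTMonotoneCopy{} was already at least as large as the corresponding entry of the argument tree clock, so the pruning is sound. Your write-up simply makes explicit the induction over the trace, the invariant on lock clocks, and the preservation of the descending-$\pclock$ ordering under \RoutinePushChild{} --- details the paper leaves implicit.
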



\Paragraph{Data structure optimality.}
Just like vector clocks, computing $\HB$ with tree clocks takes $\Theta(\NumEvents\cdot \NumThreads)$ time in the worst case,
and it is known that this quadratic bound is likely to be tight for common applications such as dynamic race prediction~\cite{Kulkarni2021}.
However, we have seen that tree clocks can take sublinear time on join and copy operations, whereas vector clocks always require time linear in the size of the vector (i.e., $\Theta(\NumThreads)$).
A natural question arises: is there a more efficient data structure than tree clocks?
More generally, what is the most efficient data structure for the $\HB$ algorithm to represent vector times?
To answer this question, we define \emph{vector-time work}, which gives a lower bound on the number of data structure operations that $\HB$ has to perform regardless of the actual data structure used to store vector times.
Then, we show that tree clocks match this lower bound, hence achieving optimality for $\HB$.

\Paragraph{Vector-time work.}
Consider the general $\HB$ algorithm (\cref{algo:hb_basic}) and let 
$\mathfrak{D}=\{ \CTC_{1}, \CTC_{2},\dots, \CTC_{m} \}$ be the set of the 
vector-time data structures used.
Consider the execution of the algorithm on a trace $\Trace$.
Given some $1\leq i \leq |\Trace|$, we let $C_{j}^i$ denote the vector time 
represented by $\CTC_{j}$ after the algorithm has processed the $i$-th event of $\Trace$.
We define the \emph{vector-time work} (or \emph{vt-work}, for short) on $\Trace$ as
\[
\VTWork(\Trace) = \sum_{1\leq i\leq |\Trace|} \sum_{j} | \Thread\in \Threads{} \colon C_{j}^{i-1}(\Thread)\neq  C_{j}^{i}(\Thread)|.
\] 
In words, for every processed event, we add the number of vector-time entries that change as a result of processing the event, and
$\VTWork(\Trace)$ counts the total number of entry updates in the overall course of the algorithm.
Note that vt-work is independent of the data structure used to represent each $\CTC_{j}$,
and  satisfies the inequality 
\[
\NumEvents \leq \VTWork(\Trace) \leq \NumEvents\cdot \NumThreads.
\]
as with every event of $\Trace$ the algorithm updates one of $\CTC_{j}$.

\Paragraph{Vector-time optimality.}
Given an input trace $\Trace$, we denote by $\Time_{\DataStructure}(\Trace)$ the time taken by the $\HB$ algorithm (\cref{algo:hb_basic}) to process $\Trace$ using the data structure $\DataStructure$ to store vector times.
Intuitively, $\VTWork(\Trace)$ captures the number of times that instances of $\DataStructure$ change state.
For data structures that represent vector times explicitly, 
$\VTWork(\Trace)$ presents a natural lower bound for $\Time_{\DataStructure}(\Trace)$.
Hence, we say that the data structure $\DataStructure$ is \emph{vt-optimal} if $\Time_{\DataStructure}(\Trace)=O(\VTWork(\Trace))$.
It is not hard to see that vector clocks  are not vt-optimal,
i.e., taking $\DataStructure=\VectorClock$ to be the vector clock data structure, one can construct simple traces $\Trace$ where 
$\VTWork(\Trace)=O(\NumEvents)$
but $\Time_{\DataStructure}(\Trace)=\Omega(\NumEvents\cdot \NumThreads)$,
and thus the running time is $\NumThreads$ times more than the vt-work that must be performed on $\Trace$.
In contrast, the following theorem states that tree clocks are vt-optimal.

\smallskip
\begin{restatable}[Tree-clock Optimality]{theorem}{thmvtoptimality}\label{thm:vtoptimality}
For any input trace $\Trace$,  we have $\Time_{\TreeClock}(\Trace)=O(\VTWork(\Trace))$.
\end{restatable}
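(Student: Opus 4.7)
The claim is that the total running time of \cref{algo:hb} using tree clocks on an input trace $\Trace$ is bounded by $O(\VTWork(\Trace))$. Since all operations other than $\FunctionCTJoin$ and $\FunctionCTMonotoneCopy$ execute in $O(1)$ time (including $\FunctionCTGet$ via \cref{rem:epochs}, and $\FunctionCTIncrement$, $\FunctionCTLessThan$ directly), it suffices to show that each invocation of $\FunctionCTJoin$ (resp., $\FunctionCTMonotoneCopy$) processing some event $\Event$ runs in time linear in the number $U_\Event$ of thread entries whose clock value actually changes as a result; summing over all events gives $\sum_\Event U_\Event = \VTWork(\Trace)$ by definition.

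\textbf{Step 1 --- bound $|\Stack|$ by $U_\Event$.} I would first show that $\RoutineUpdatedNodesForJoin$ pushes onto $\Stack$ exactly those nodes $u'$ of $\TreeClock'$ whose $\tid$ will have its stored clock strictly increased inside $\TreeClock$. This is where direct monotonicity (\cref{lem:tc_monotonicity}, item \ref{item:monotonicity1}) is used: a node $u'$ is only recursed on when $\FunctionCTGet(u'.\tid) < u'.\clock$, and once we refuse to recurse, every descendant is also known to be non-progressed. Thus $|\Stack| \le U_\Event$, and consequently both $\RoutineDetachNodes$ and $\RoutineAttachNodes$ run in $O(|\Stack|) = O(U_\Event)$ time.

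\textbf{Step 2 --- bound the traversal cost.} The delicate step is bounding the total number of iterations of the loop at \cref{line:routineupdatednodesforjoin_loop} across all recursive calls inside one invocation. Each iteration falls into one of three cases: (a) \emph{recurse}, which contributes exactly one new node to $\Stack$; (b) \emph{break}, which terminates the current call and happens at most once per recursive invocation, hence at most $|\Stack|$ times overall; or (c) \emph{skip}, where the child is neither progressed nor stale enough to trigger the break. The crux is to show that case (c) does not occur. Here I would rely on the invariant that the list $\ChildrenQ(u')$ is maintained in descending $\pclock$ order (enforced by $\RoutinePushChild$ always pushing to the front with the most recent attachment time), together with the invariant that whenever $\TreeClock$ already has an up-to-date clock for $v'.\tid$ but $v'.\pclock > \FunctionCTGet(u'.\tid)$, then $\TreeClock$ must have acquired its view of $v'.\tid$ \emph{after} it last synchronized with $u'.\tid$. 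I would argue, by induction on the sequence of events processed, that this situation is inconsistent with the way attachments propagate: $u'$ only gains $v'$ as a child in $\TreeClock'$ through a join that also updates $u'$'s own local clock strictly beyond $\FunctionCTGet(u'.\tid)$, so $u'$ itself would then be progressed --- a contradiction with us being inside the elif branch. Hence the loop iterations per recursive call sum to $|\Stack| + O(|\Stack|) = O(U_\Event)$.

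\textbf{Step 3 --- assembly.} Combining Steps~1 and 2, a single $\FunctionCTJoin$ or $\FunctionCTMonotoneCopy$ invocation triggered by $\Event$ runs in $O(U_\Event)$ time. The same analysis applies verbatim to $\FunctionCTMonotoneCopy$, noting that the root of $\TreeClock$ is treated as ``progressed'' to accommodate the root swap but still contributes only $O(1)$ extra work. Summing over all events yields $\Time_{\TreeClock}(\Trace) = O\bigl(\sum_\Event U_\Event\bigr) = O(\VTWork(\Trace))$, as claimed. The principal obstacle is exactly the invariant in Step~2 that rules out case (c); the rest is bookkeeping based on \cref{lem:tc_monotonicity}.
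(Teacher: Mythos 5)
Your Step~2 contains the critical gap: the claim that case~(c) cannot occur is false, and with it the per-operation bound $O(U_\Event)$ collapses. Case~(c) is the situation where $\FunctionCTGet(v'.\tid) \geq v'.\clock$ (the child has not progressed) while $v'.\pclock > \FunctionCTGet(u'.\tid)$ (the break is not triggered). This happens whenever the target clock $\CTC_{\Thread}$ learned a \emph{newer or equal} time for $v'.\tid$ through a different lock than the one whose clock $\CTC_{\ell}$ it is now joining: then $u'$ itself is progressed (which is precisely why we are iterating over its children), $v'$ is stale in $\CTC_{\ell}$, and yet $v'.\pclock$ can exceed $\FunctionCTGet(u'.\tid)$ because $\CTC_{\Thread}$ may know very little about $u'.\tid$. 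Your proposed contradiction (``$u'$ would then be progressed'') does not bite, since $u'$ being progressed is the precondition for visiting its children in the first place; nothing forces the child to be progressed as well. Consequently a single join can spend many loop iterations at \cref{line:routineupdatednodesforjoin_loop} while changing only $O(1)$ entries of $\CTC_{\Thread}$, so the time of one operation is \emph{not} $O(U_\Event)$ in general.

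The paper closes this gap with an amortized charging argument over each critical section rather than a per-event bound. For an iteration with $\FunctionCTGet(v'.\tid) > v'.\clock$, it observes that $\CTC_{\ell}$ is untouched between the acquire and its matching release (lock semantics), while the entry of $\CTC_{\Thread}$ for $v'.\tid$ only grows; hence the $v'.\tid$ entry of $\CTC_{\ell}$ \emph{will} change when the matching release performs $\FunctionCTMonotoneCopy$, and the iteration is charged to the vt-work $W_C$ of that future copy. For iterations with equality, the paper invokes the structural \cref{lem:uniqueness} (tree edges are inverses of release-to-first-remote-acquire orderings) together with the descending $\pclock$ order of $\ChildrenQ$ to show that each parent admits at most one such child before the break fires, so these are charged to the parent's own progress in $W_J$. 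Summing $W_J + W_C$ over all critical sections yields $\Time_{\TreeClock}(\Trace)=O(\VTWork(\Trace))$, with the constant $3$ quoted in the main text. Your Steps~1 and~3 are essentially sound, but without this cross-operation amortization the argument does not go through.
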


The key observation behind \cref{thm:vtoptimality} is that, when $\HB$ uses tree clocks, the total number of tree-clock entries that are accessed over all join
and monotone copy operations (i.e., the sum of the sizes of the light-gray areas in \cref{fig:ctjoin} and \cref{fig:ctmonotonecopy}) is $\leq 3\cdot \VTWork(\Trace)$.

\smallskip
\begin{remark}
\cref{thm:vtoptimality} establishes \emph{strong optimality} for tree clocks, in the sense that they are vt-optimal \emph{on every input}.
This is in contrast to usual notions of optimality that is guaranteed on \emph{only some} inputs.
\end{remark}
\section{Tree Clocks in Other Partial Orders}\label{subsec:other}

\subsection{Schedulable-Happens-Before}\label{subsec:shb}

$\SHB$ is a strengthening of $\HB$, introduced recently~\cite{Mathur18} in the context of race detection.
Given a trace $\Trace$ and a read event $\Read$ let $\LastWrite_{\Trace}(\Read)$ be the last write event of $\Trace$ before $\Read$ with $\Variable(\Write)=\Variable(\Read)$.
$\SHB$ is the smallest partial order that satisfies the following.
\begin{compactenum}
\item\label{item:shb1} $ \hb{\Trace} \subseteq  \shb{\Trace}$.
\item\label{item:shb2} for every read event $\Read$, we have $\LastWrite_{\Trace}(\Read)\shb{\Trace} \Read$.
\end{compactenum}

\Paragraph{Algorithm for $\SHB$.}
Similarly to $\HB$, the $\SHB$ partial order is computed by a single pass of the input trace $\Trace$ using vector-times~\cite{Mathur18}.
The $\SHB$ algorithm processes synchronization events (i.e., $\Acquire(\ell)$ and $\Release(\ell)$) similarly to $\HB$.
In addition, for each variable $x$, the algorithm maintains a data structure $\LWT_x$ that stores the vector time of the latest write event on $x$.
When a write event $\Write(x)$ is encountered, the vector time $\CTC_{\tid(\Write)}$ is copied to $\LWT_{x}$.
In turn, when a read event $\Read(x)$ is encountered the algorithm joins $\LWT_{x}$ to $\CTC_{\tid(\Read)}$.

\Paragraph{$\SHB$ with tree clocks.}
Tree clocks can directly be used as the data structure to store vector times in the $\SHB$ algorithm.
We refer to \cref{algo:shb} for the pseudocode.
The important new component is the function $\FunctionCTCopyCheckMonotone$ in \cref{line:copycheckmonotone} that copies the vector time of $\CTC_{\Thread}$ to $\LWT_x$.
In contrast to $\FunctionCTMonotoneCopy$, 
this copy  is not guaranteed to be monotone, i.e., we might have $\LWT_x\not \CSmaller\CTC_{\Thread}$.
Note, however, that using tree clocks, this test requires only constant time.
Internally, $\FunctionCTCopyCheckMonotone$ performs $\FunctionCTMonotoneCopy$ if $\LWT_x \CSmaller\CTC_{\Thread}$ (running in sublinear time), otherwise it performs a deep copy for the whole tree clock (running in linear time).
In practice, we expect that most of the times $\FunctionCTCopyCheckMonotone$ results in  $\FunctionCTMonotoneCopy$ and thus is very efficient.
The key insight is that if $\FunctionCTMonotoneCopy$ is not used, then $\LWT_x\not \CSmaller\CTC_{\Thread}$
and thus we have a race $(\LastWrite_{\Trace}(\Read),\Read)$.
Hence, the number of times a deep copy is performed is bounded by the number of write-read races in $\Trace$ between a read and its last write.


\begin{algorithm}[t]
\SetInd{0.3em}{0.3em}
\addtolength{\columnsep}{-40pt}
\vspace*{-\multicolsep}
\begin{multicols}{2}
\MyProcedure{\ProcAcquire{$\Thread$, $\ell$}}{
$\CTC_{\Thread}.\FunctionCTJoin(\CTL_{\ell})$
}
\BlankLine
\MyProcedure{\ProcRead{$\Thread$, $x$}}{
$\CTC_{\Thread}.\FunctionCTJoin(\CTLW_{x})$
}
\BlankLine
\MyProcedure{\ProcRelease{$\Thread$, $\ell$}}{
$\CTL_{\ell}.\FunctionCTMonotoneCopy(\CTC_{\Thread})$\\
}

\BlankLine
\MyProcedure{\ProcWrite{$\Thread$, $x$}}{
$\CTLW_{x}.\FunctionCTCopyCheckMonotone(\CTC_{\Thread})$\label{line:copycheckmonotone}
}
\end{multicols}
\vspace*{-0.5\multicolsep}
\normalsize
\caption{
$\SHB$ with tree clocks.
}
\label{algo:shb}
\end{algorithm}
\normalsize
\subsection{The Mazurkiewicz Partial Order}\label{subsec:maz}
The Mazurkiewicz partial order ($\Maz$)~\cite{Mazurkiewicz87} has been the canonical  way to represent concurrent executions algebraically using an independence relation that defines which events can be reordered.
This algebraic treatment allows to naturally lift language-inclusion problems from the verification of sequential programs to concurrent programs~\cite{Bertoni1989}.
As such, it has been the most studied partial order in the context of concurrency, with deep applications in dynamic analyses~\cite{Netzer1990,Flanagan2008,Mathur2020}, ensuring consistency~\cite{Shasha1988} and stateless model checking~\cite{Flanagan2005}.
In shared memory concurrency, the standard independence relation deems two events as dependent if they conflict, and independent otherwise~\cite{Godefroid1996}.
In particular, $\Maz$ is the smallest partial order that satisfies the following conditions.
\begin{compactenum}
\item\label{item:maz1} $ \hb{\Trace} \subseteq  \maz{\Trace}$.
\item\label{item:maz2} for every two events $\Event_1,\Event_2$ such that $\Event_1\trord{\Trace}\Event_2$ and $\Confl{\Event_1}{\Event_2}$, we have $\Event_1\maz{\Trace} \Event_2$.
\end{compactenum}

\Paragraph{$\Maz$ with tree clocks.}
The algorithm for computing $\Maz$ is similar to that for $\SHB$.
The main difference is that $\Maz$ includes read-to-write orderings, and thus we need to store additional vector times $\CTR_{\Thread,x}$ of the last event $\Read(x)$ of thread $\Thread$.
In addition, we use the set $\LastReads_x$ to store the threads that have executed a $\Read(x)$ event after the latest $\Write(x)$ event so far.
This allows us to only spend computation time in the first read-to-write ordering, as orderings between the read event and later write events follow transitively via intermediate write-to-write orderings.
Overall, this approach yields the efficient time complexity $O(\NumEvents\cdot \NumThreads)$ for $\Maz$, similarly to $\HB$ and $\SHB$.
We refer to \cref{algo:maz} for the pseudocode.


\begin{algorithm}[t]
\SetInd{0.3em}{0.5em}
\vspace*{-\multicolsep}
\begin{multicols}{2}

\MyProcedure{\ProcAcquire{$\Thread$, $\ell$}}{
$\CTC_{\Thread}.\FunctionCTJoin(\CTL_{\ell})$
}
\BlankLine
\MyProcedure{\ProcRead{$\Thread$, $x$}}{
$\CTC_{\Thread}.\FunctionCTJoin(\CTLW_{x})$\\
$\CTR_{\Thread,x}.\FunctionCTMonotoneCopy(\CTC_{\Thread})$\\
$\LastReads_x\gets \LastReads_x\cup \{\Thread\}$\\
}
\BlankLine
\MyProcedure{\ProcRelease{$\Thread$, $\ell$}}{
$\CTL_{\ell}.\FunctionCTMonotoneCopy(\CTC_{\Thread})$\\
}
\BlankLine
\MyProcedure{\ProcWrite{$\Thread$, $x$}}{
$\CTC_{\Thread}.\FunctionCTJoin(\CTLW_{x})$\\
\ForEach{$\Thread'\in \LastReads_x$}{
$\CTC_{\Thread}.\FunctionCTJoin(\CTR_{\Thread',x})$\\
}
$\CTLW_{x}.\FunctionCTMonotoneCopy(\CTC_{\Thread})$\\
$\LastReads_x\gets \emptyset$\\
}
\end{multicols}
\vspace*{-0.5\multicolsep}
\normalsize
\caption{
$\Maz$ with tree clocks.
}
\label{algo:maz}
\end{algorithm}
\normalsize



\section{Experiments}\label{sec:experiments}

In this section we report on an implementation and experimental evaluation of the tree clock data structure.
The primary goal of these experiments is to evaluate the practical advantage of tree clocks over the vector clocks for keeping track of logical times in a concurrent program executions.

\Paragraph{Implementation.}
Our implementation is in Java and closely follows \cref{algo:clock_tree}.
\camera{
The tree clock data structure is represented as two arrays of length $\NumThreads$ (number of threads), the first one encoding the shape of the tree and the second one encoding the integer timestamps as in a standard vector clock.
}
For efficiency reasons, recursive routines have been made iterative.


\Paragraph{Benchmarks.}
Our benchmark set consists of standard benchmarks found in benchmark \camera{suites} and recent literature.
In particular, we used the Java benchmarks from the IBM Contest suite~\cite{Farchi03}, Java Grande suite~\cite{Smith01},
DaCapo~\cite{Blackburn06}, and
SIR~\cite{doESE05}.
In addition, we used OpenMP benchmark programs, whose execution lenghts and number of threads can be tuned, from DataRaceOnAccelerator~\cite{schmitz2019dataraceonaccelerator}, DataRaceBench \cite{liao2017dataracebench}, OmpSCR~\cite{dorta2005openmp} and the NAS parallel benchmarks~\cite{nasbenchmark}, 
as well as large OpenMP applications contained in the following benchmark suites: CORAL~\cite{coral2, coral}, ECP proxy applications~\cite{ecp}, and Mantevo project~\cite{mantevo}.
Each benchmark was instrumented and executed in order to log a single concurrent trace, using the tools RV-Predict~\cite{rvpredict} (for Java programs) and ThreadSanitizer~\cite{threadsanitizer} (for OpenMP programs).
Overall, this process yielded a large set of \camera{$153$} benchmark traces that were used in our evaluation.
\cref{tab:trace_stats} presents aggregate information about
the benchmark traces generated. 
\begin{arxiv} \camera{Information on the individual traces is provided in \cref{tab:trace-info} in the \cref{appsec:benchmarks}.} \end{arxiv} 
\begin{asplos} \camera{Information on the individual traces is provided in our technical report \cite{arxiv}.} \end{asplos}


\begin{table}[h]
\setlength\tabcolsep{2pt}
\renewcommand{\arraystretch}{1.2}
\small
\centering
\begin{tabular}{|r|r:r:r||r|r:r:r|}
\hline
& \textbf{Min} & \textbf{Max} & \textbf{Mean} &
& \textbf{Min} & \textbf{Max} & \textbf{Mean} \\
\hline
\hline
Threads & 3 & 222 & 31 & Events & 51 & 2.1B & 227M \\
Locks & 1 & 60.5k & 688 & Sync. Events (\%) & 0.0 & 44.4 & 9.5 \\
Variables & 18 & 37.8M & 1.8M & R/W Events (\%) & 55.6 & 100 & 90.5  \\
\hline
\end{tabular}
\caption{
Trace Statistics
}
\label{tab:trace_stats}
\end{table}

\begin{figure*}
\def\scatterscale{0.28}
\centering
\begin{subfigure}[b]{0.3\textwidth}
\includegraphics[scale=\scatterscale]{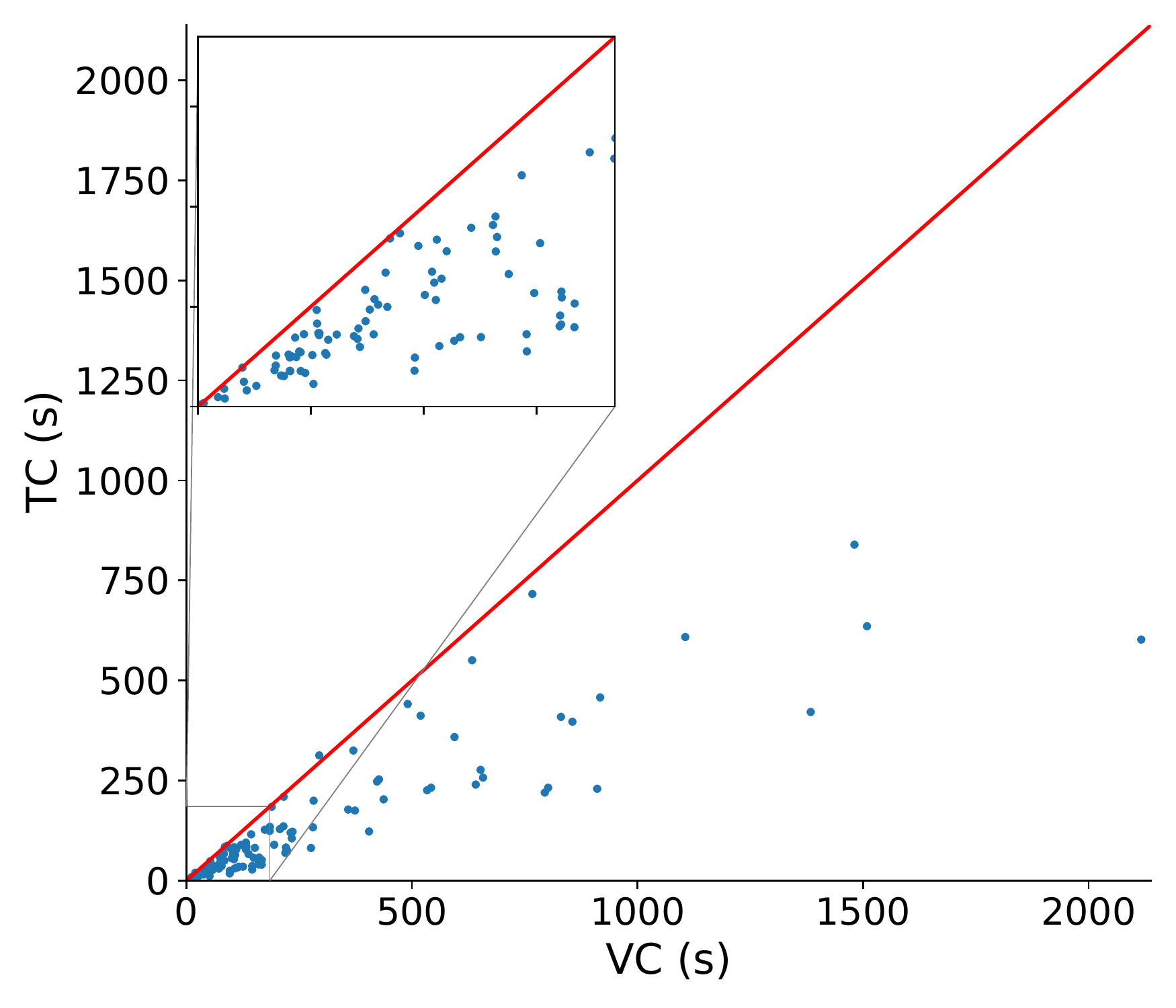}
\label{subfig:maz_nr}
\caption{$\Maz$}
\end{subfigure}
\begin{subfigure}[b]{0.3\textwidth}
\includegraphics[scale=\scatterscale]{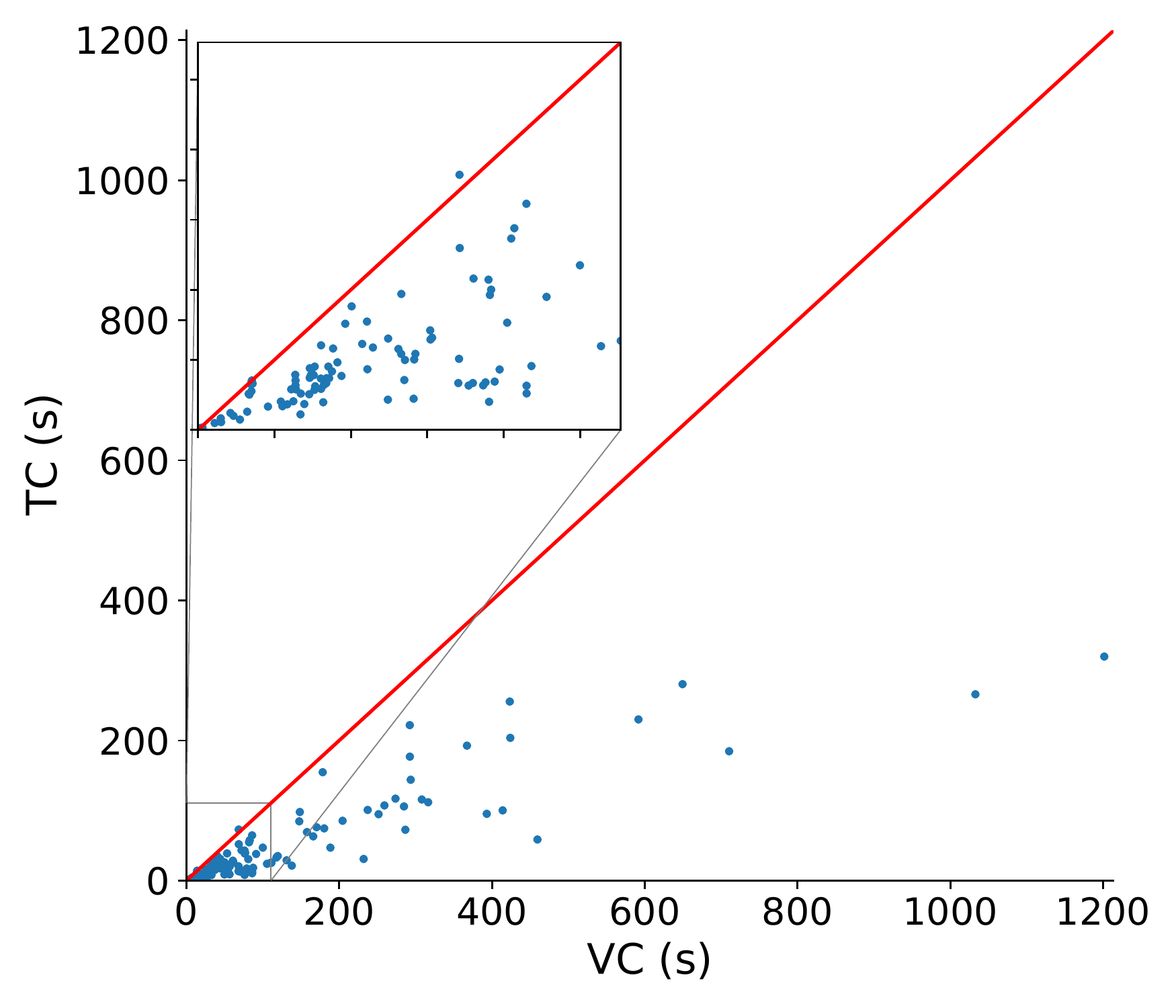}
\label{subfig:shb_nr}
\caption{$\SHB$}
\end{subfigure}
\begin{subfigure}[b]{0.3\textwidth}
\includegraphics[scale=\scatterscale]{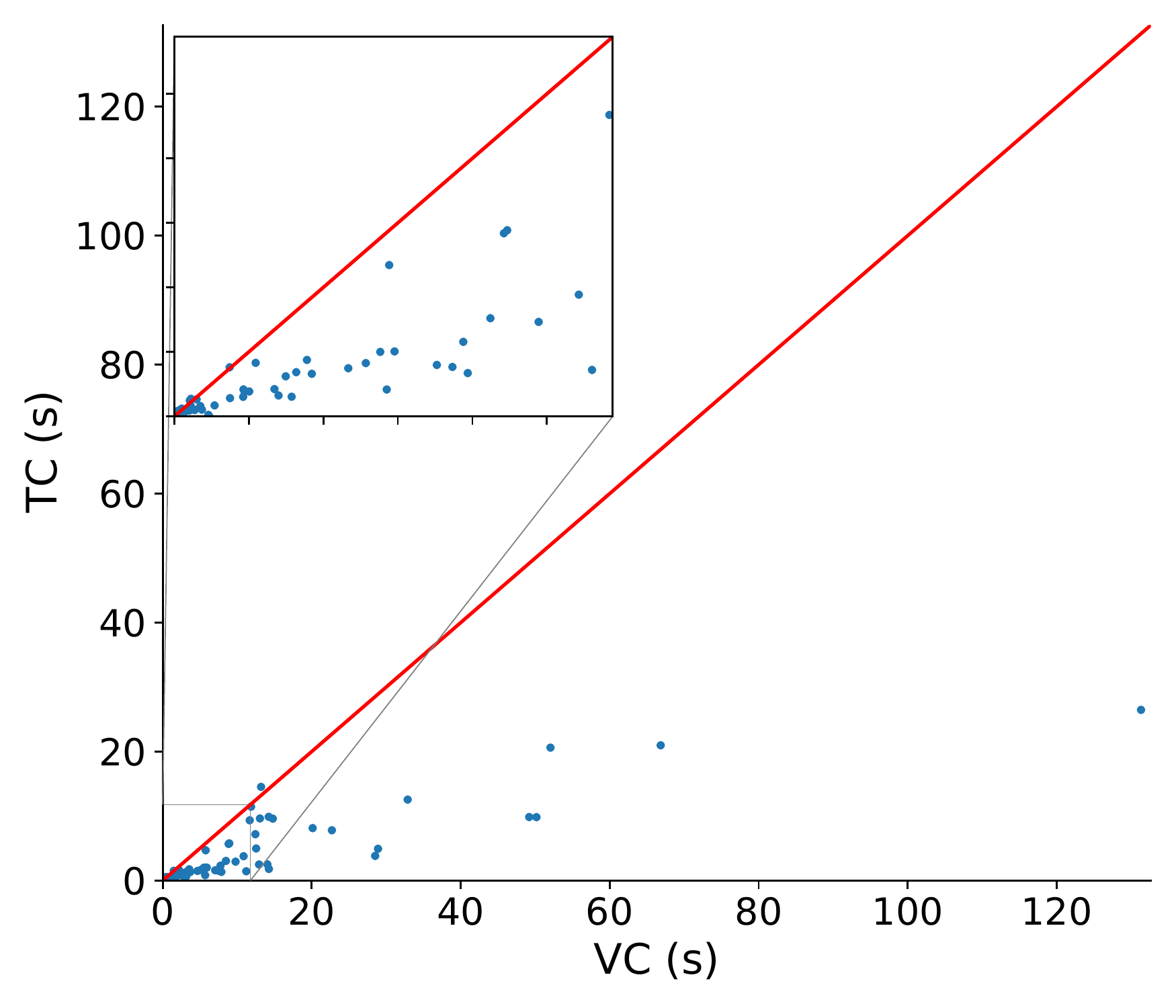}
\caption{$\HB$}
\label{subfig:hb_nr}
\end{subfigure}
\vspace{0.5cm}

\begin{subfigure}[b]{0.3\textwidth}
\includegraphics[scale=\scatterscale]{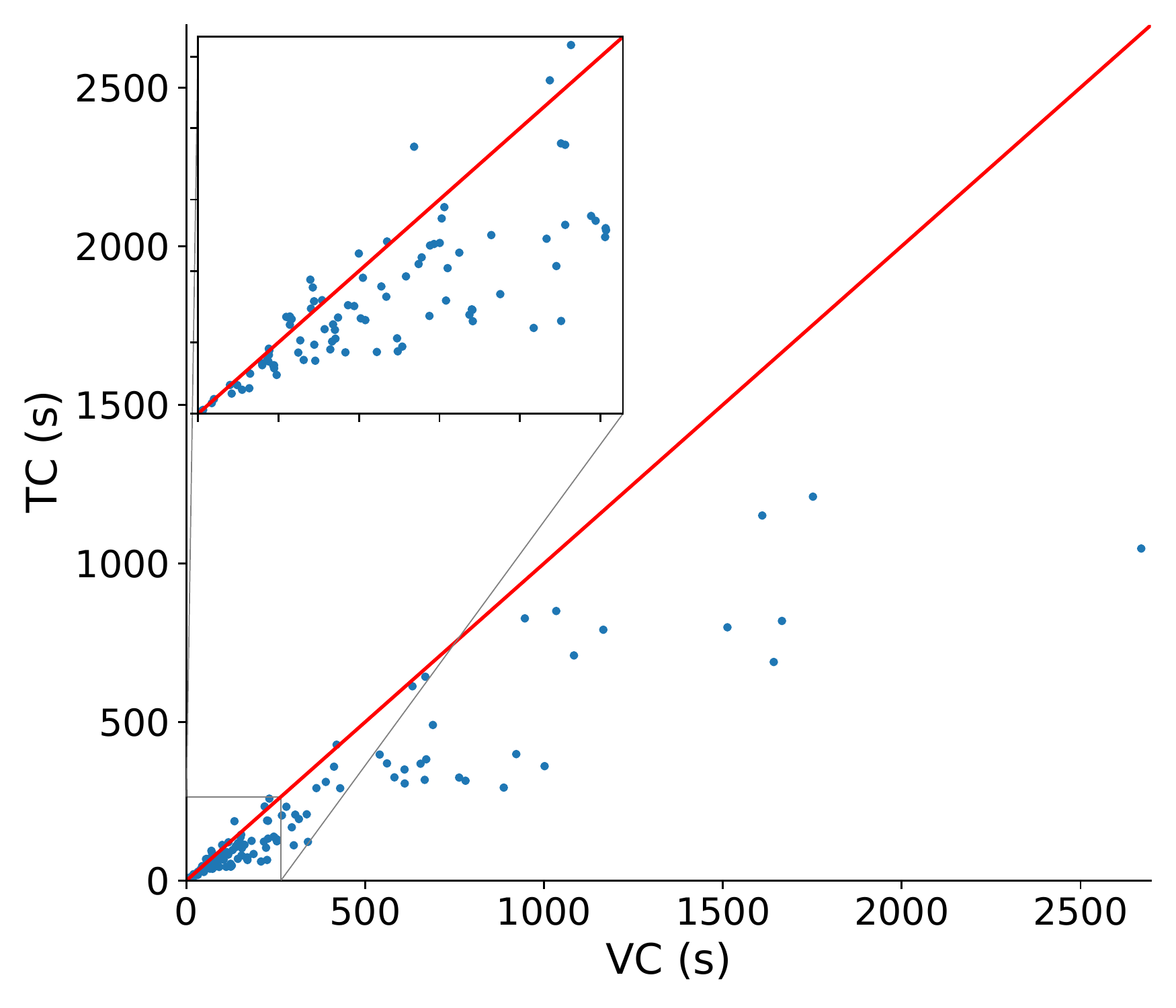}
\caption{$\Maz$+Analysis}
\label{subfig:maz}
\end{subfigure}
\begin{subfigure}[b]{0.3\textwidth}
\includegraphics[scale=\scatterscale]{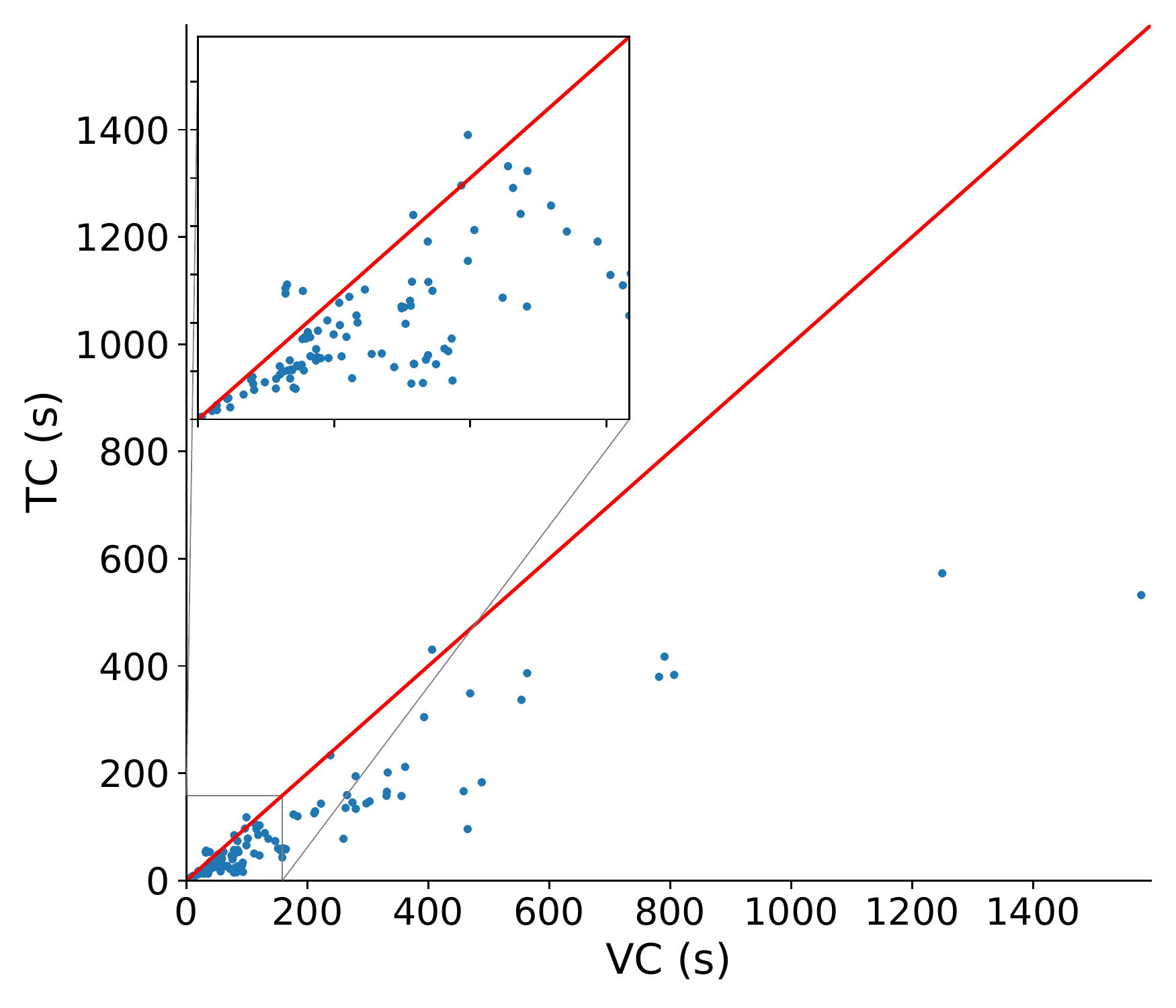}
\caption{$\SHB$+Analysis}
\label{subfig:shb}
\end{subfigure}
\begin{subfigure}[b]{0.3\textwidth}
\includegraphics[scale=\scatterscale]{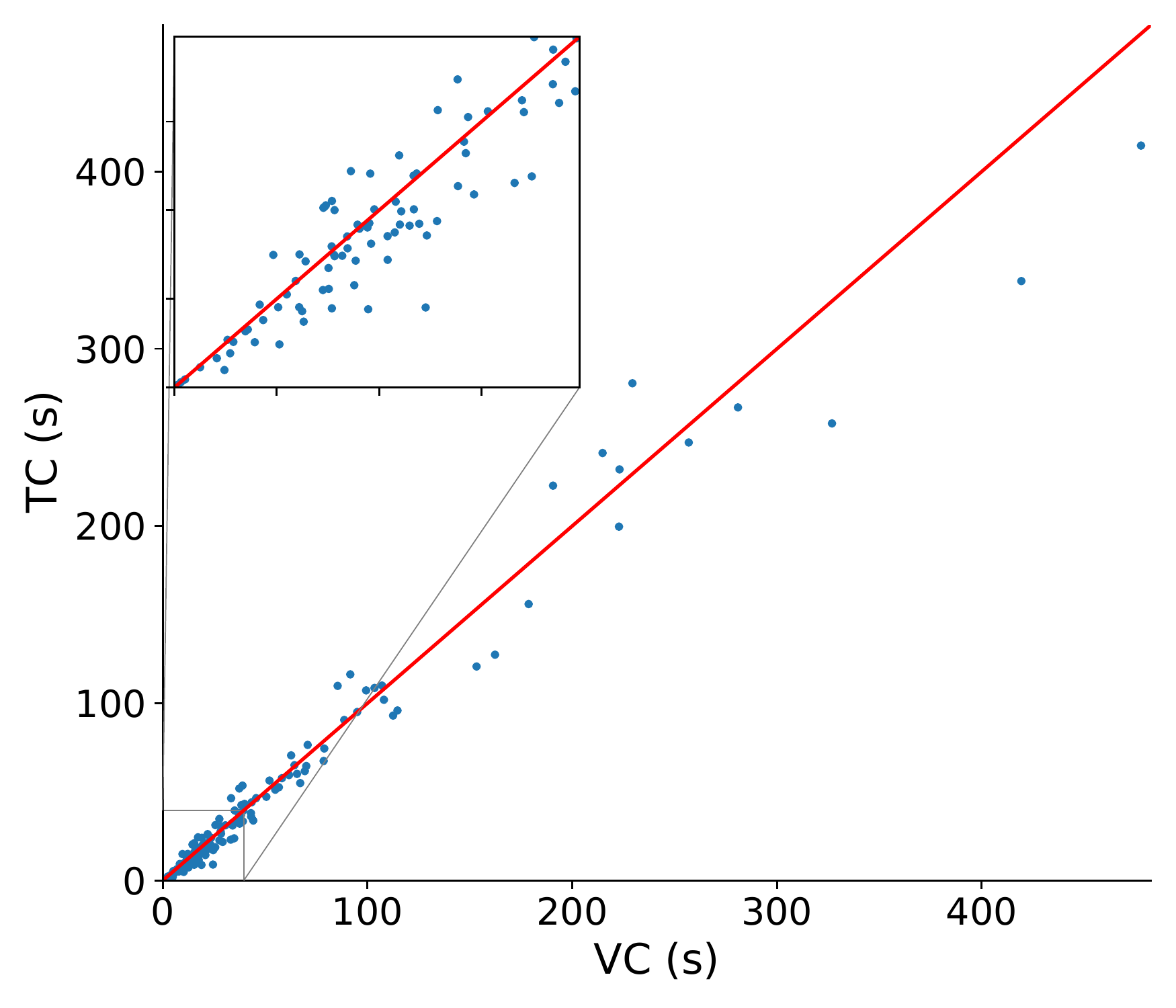}
\caption{$\HB$+Analysis}
\label{subfig:hb}
\end{subfigure}
\caption{
Times for processing each benchmark trace using tree clocks (TC) and vector clocks (VC).
The top row shows the time for computing the partial order, while the bottom row shows the time including the analysis component.
}
\label{fig:time_comparison}
\end{figure*}

\Paragraph{Setup.}
Each trace was processed for computing each of the $\Maz$, $\SHB$ and $\HB$ partial orders using both tree clocks and the standard vector clocks.
This allows us to directly measure the speedup conferred by tree clocks in computing the respective partial order, which is the goal of this paper.

As the computation of these partial orders is usually the first component of any analysis, in general, we evaluated the impact of the conferred speedup in an overall analysis as follows.
For each pair of conflicting events $\Event_1, \Event_2$, we computed whether these events are concurrent wrt the corresponding partial order (e.g., whether $\Event_1 \unordhb{\Trace} \Event_2$).
This test is performed in dynamic race detection (in the cases of $\HB$ and $\SHB$) where such pairs constitute data races, as well in stateless model checking (in the case of $\Maz$) where the model checker identifies such event pairs and attempts to reverse their order on its way to exhaustively enumerate all Mazurkiewicz traces of the concurrent program.
For a fair comparison, in the case of $\HB$ we used common epoch optimizations~\cite{Flanagan09} to speed up the analysis for both tree clocks and vector clocks (recall \cref{rem:epochs}).
For consistency, every measurement was repeated 3 times and the average time was reported.

\Paragraph{Running times.}
For each partial order,
\cref{tab:speedups} shows the average speedup over all benchmarks, both with and without the analysis component.
We see that tree clocks are very effective in reducing the running time of the computation of all 3 partial orders, with the most significant impact being on $\SHB$ where the average speedup is 2.53 times.
For the cases of $\Maz$ and $\SHB$, this speedup also lead to a significant speedup in the overall analysis time.
On the other hand, although $\HB$ with tree clocks is about 2 times faster than with vector clocks, this speedup has a smaller effect on the overall analysis time.
The reason behind this observation is straightforward:
$\SHB$ and $\Maz$ are much more computationally-heavy, as they are defined using all types of events; on the other hand, $\HB$ is defined only on synchronization events ($\Acquire$ and $\Release$)
and on average, only $\simeq9.5\%$ of the events are synchronization events on our benchmark traces.
Since our analysis considers all events, the $\HB$-computation component occupies a smaller fraction of the overall analysis time.
We remark, however, that for programs that are more synchronization-heavy, or for analyses that are more lightweight (e.g., when checking for data races on a specific variable as opposed to all variables), the speedup of tree clocks will be larger on the whole analysis.
Indeed, \cref{fig:ratio_sync} shows the obtained speedup on the total analysis time for $\HB$ as a function of synchronization events.
We observe a trend for the speedup to increase as the percentage of synchronization events increases in the trace. A further observation is that speedup is prominent when the number of threads are large.

\begin{table}
\setlength\tabcolsep{3pt}
\renewcommand{\arraystretch}{1.2}
\small
\centering
\begin{tabular}{|r|c|c|c|}
\hline
& \textbf{$\Maz$} & \textbf{$\SHB$} & \textbf{$\HB$} \\
\hline
\hline
PO & 2.02 & 2.66 & 2.97 \\
PO + Analysis & 1.49 & 1.80 & 1.11 \\
\hline
\end{tabular}
\caption{
Average speedup for computing the partial order due to tree clocks.
}
\label{tab:speedups}
\end{table}

\begin{figure}
\centering
\includegraphics[scale=0.4]{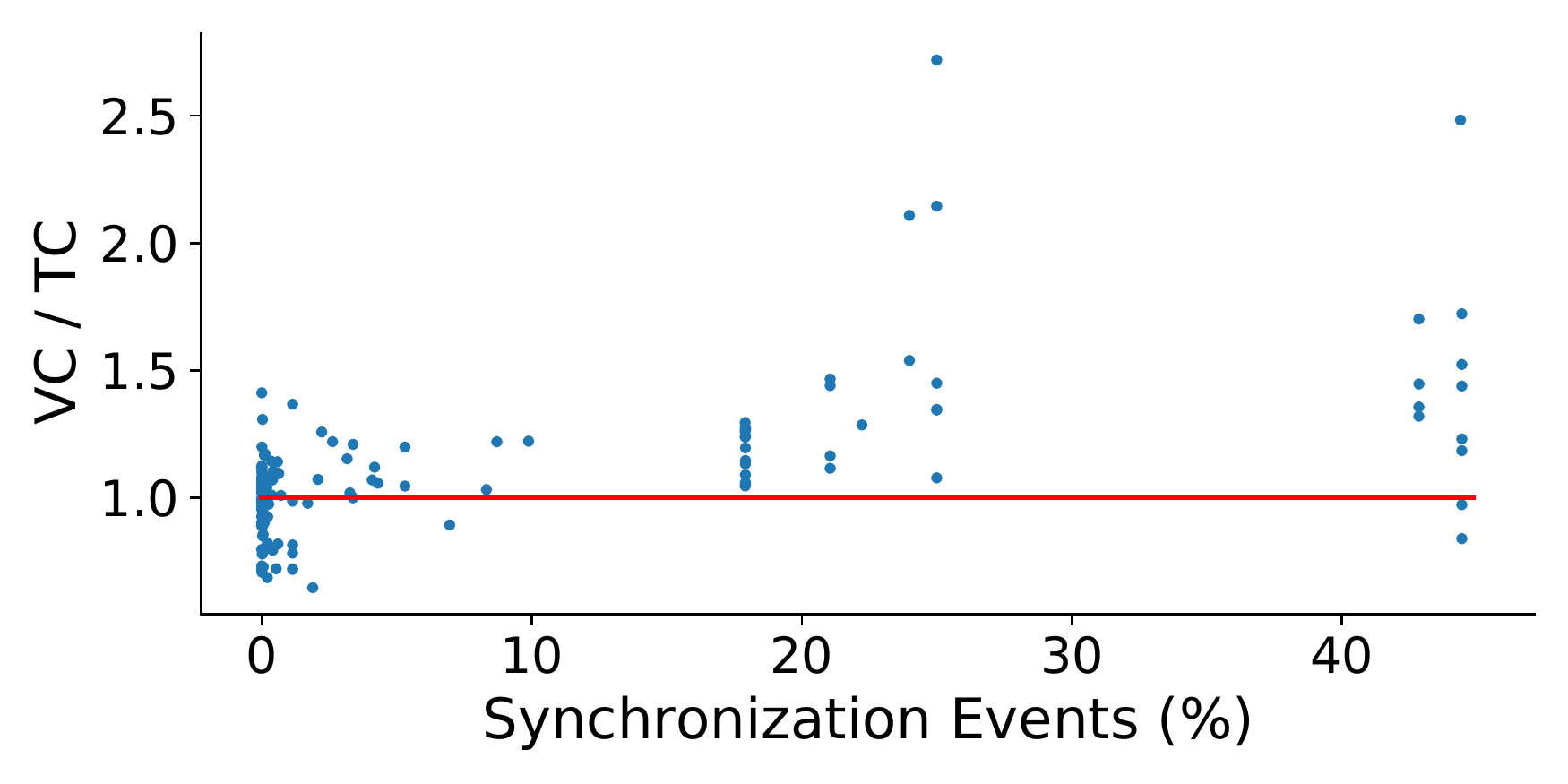}
\caption{
Speedup on $\HB$+analysis computation as a function of the percentage of synchronization events, for the traces where the total time is not too small ($\geq 100$ms).
}
\label{fig:ratio_sync}
\end{figure}

\cref{fig:time_comparison} gives a more detailed view of the tree clocks vs vector clocks times across all benchmarks.
We see that tree clocks almost always outperform vector clocks on all partial orders, and in some cases by large margins.
Interestingly, the speedup tends to be larger on more demanding benchmarks (i.e., on those that take more time).
In the very few cases tree clocks are slower, this is only by a small factor.
These are traces where the sub-linear updates of tree clocks only yield a small potential for improvement, which does not justify the overhead of maintaining the more complex tree data structure (as opposed to a vector). 
Nevertheless, overall tree clocks consistently deliver a generous speedup to each of $\Maz$, $\HB$ and $\SHB$.
Finally, we remark that all these speedups come directly from just replacing the underlying data structure, without any attempt to optimize the algorithm that computes the respective partial order, or its interaction with the data structure.

\begin{figure}
\centering
\includegraphics[scale=0.4]{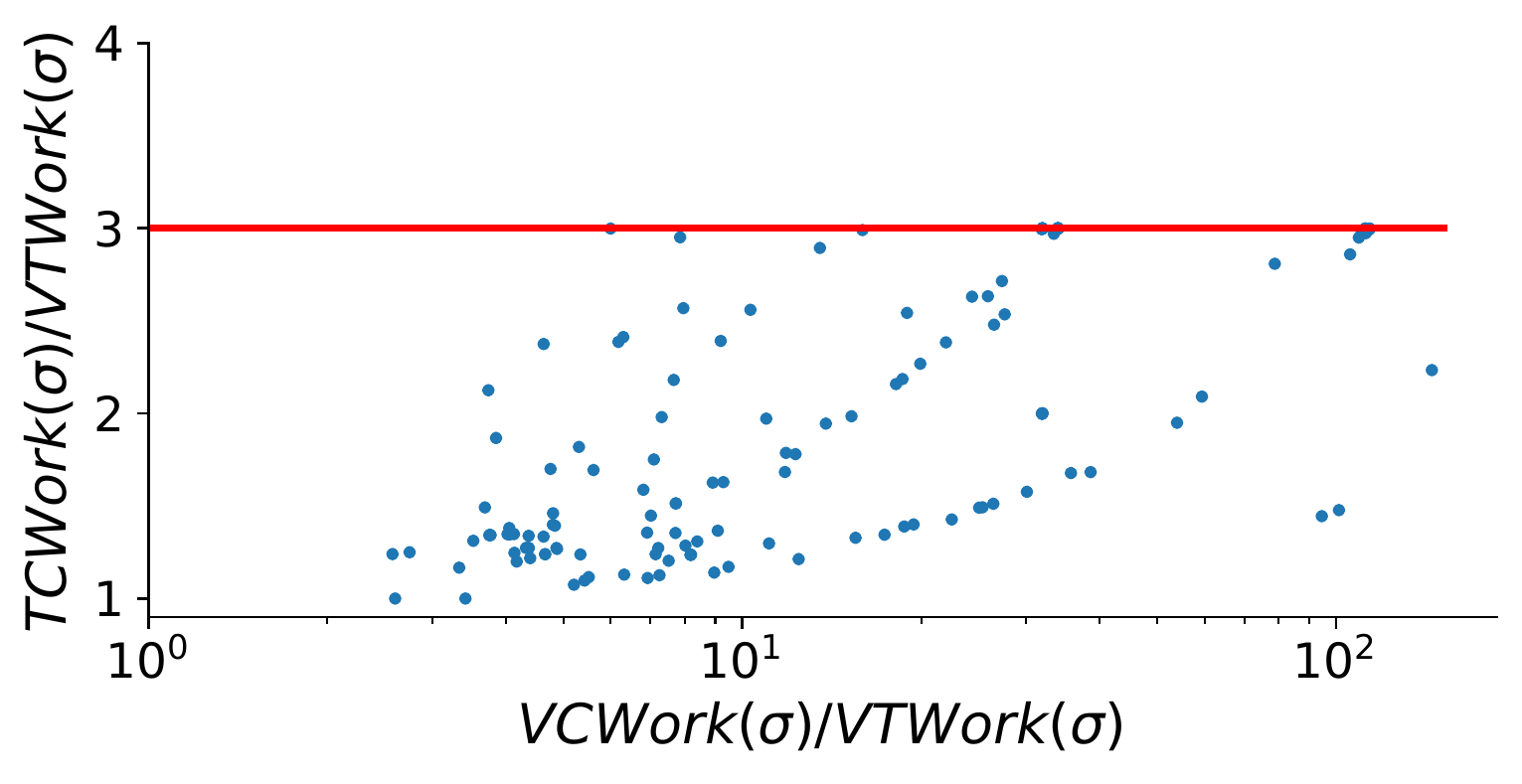}
\caption{
Comparison of the ratios $\TCWork(\Trace)/ \VTWork(\Trace)$ and $\VCWork(\Trace)/ \VTWork(\Trace)$ across all benchmarks.
}
\label{fig:vtwork}
\end{figure}
\begin{figure*}
\def\scatterscale{0.32}
\centering
\begin{subfigure}[b]{0.3\textwidth}
\includegraphics[scale=\scatterscale]{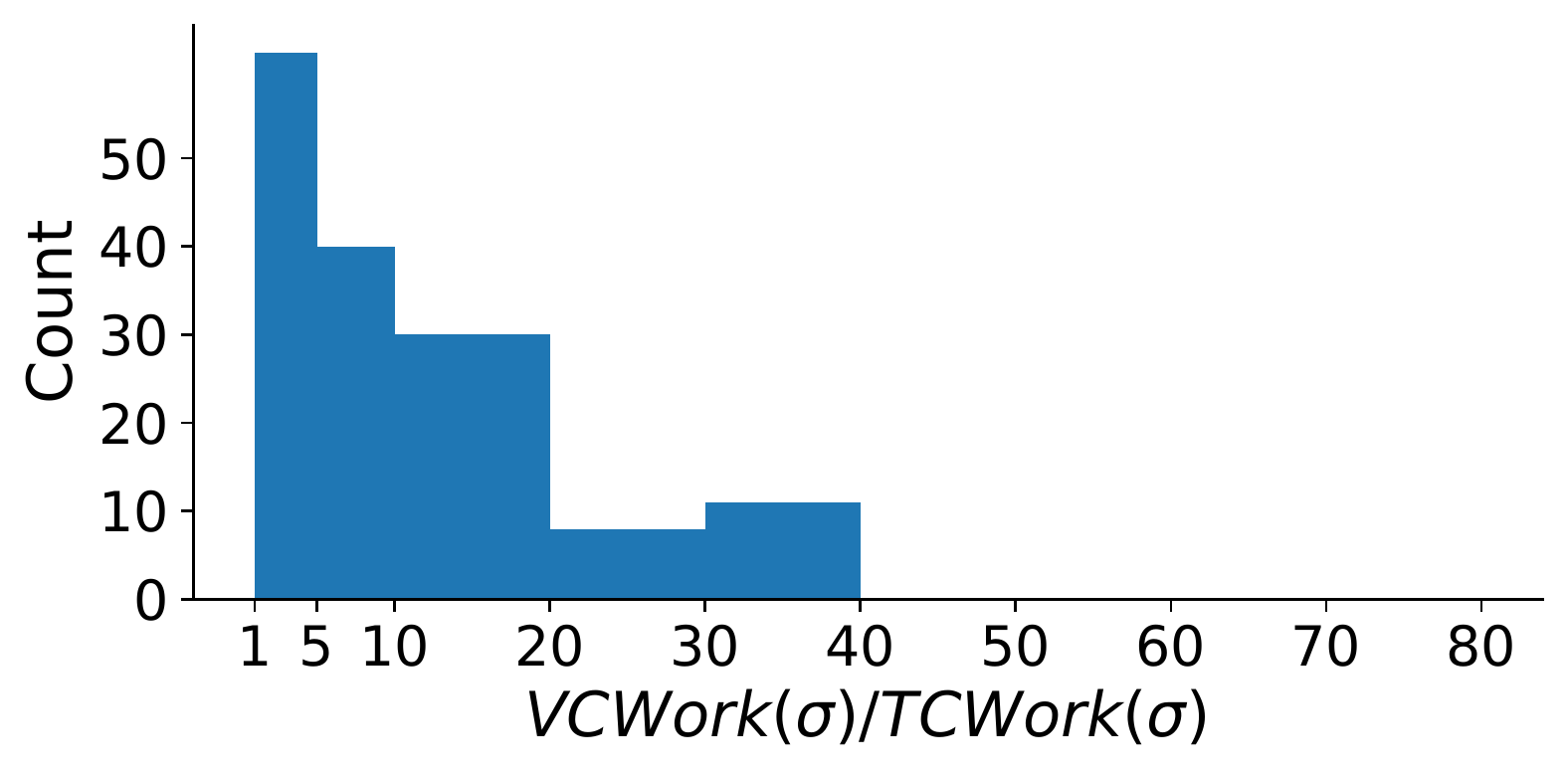}
\label{subfig:maz-vtwork}
\caption{$\Maz$}
\end{subfigure}
\begin{subfigure}[b]{0.3\textwidth}
\includegraphics[scale=\scatterscale]{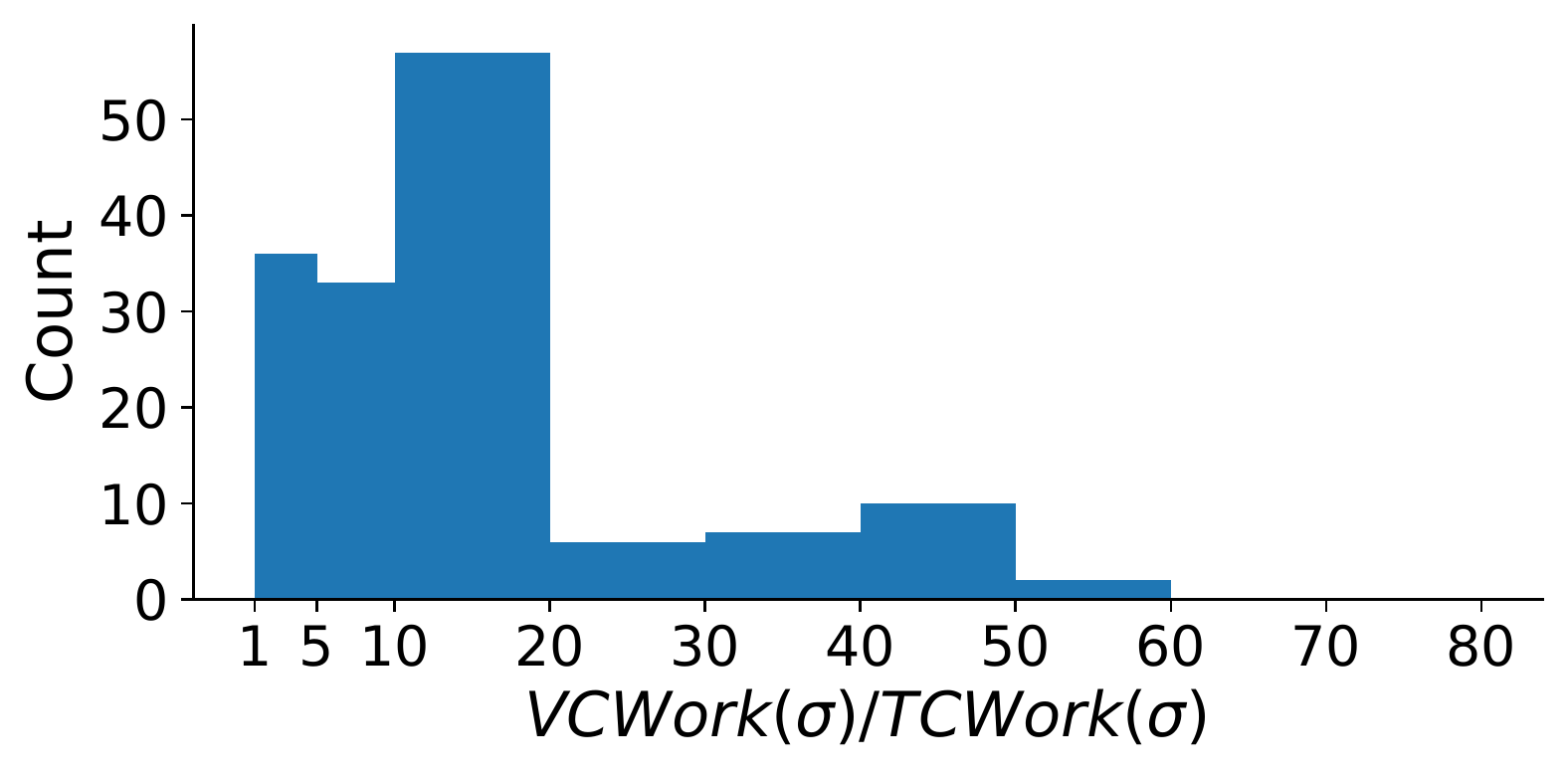}
\label{subfig:shb-vtwork}
\caption{$\SHB$}
\end{subfigure}
\begin{subfigure}[b]{0.3\textwidth}
\includegraphics[scale=\scatterscale]{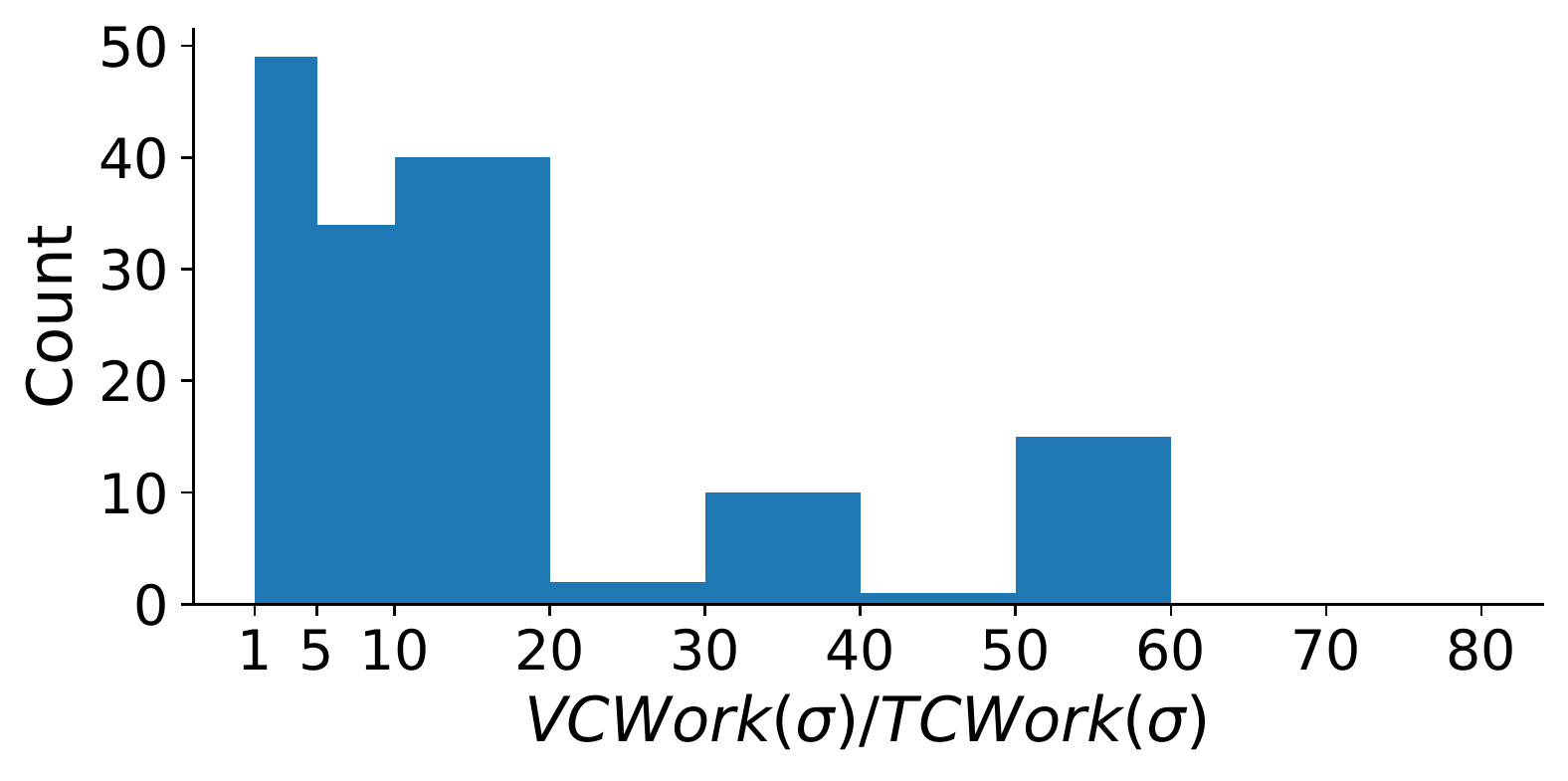}
\caption{$\HB$}
\label{subfig:hb-vtwork}
\end{subfigure}
\caption{
\camera{
Histogram of the ratios 
$\VCWork(\Trace)/ \TCWork(\Trace)$ across all benchmarks in our dataset.
}
}
\label{fig:vtwork-histogram}
\end{figure*}

\Paragraph{Comparison with vt-work.}
We also investigate the total number of entries updated using each of the data structures.
Recall that the metric $\VTWork(\Trace)$ (\cref{sec:race_detection})
measures the minimum amount of updates
that any implementation of the vector time must perform
when computing the $\HB$ partial order.
We can likewise define the metrics $\TCWork(\Trace)$ and
$\VCWork(\Trace)$ corresponding to the number of entries updated
when processing each event when using respectively the data structures
tree clocks and vector clocks.
These metrics are visualized in \cref{fig:vtwork} for \camera{computing the $\HB$ partial order in} our benchmark suite.
The figure shows that the $\VCWork(\Trace)/ \VTWork(\Trace)$ ratio is often considerably large.
In contrast, the ratio $\TCWork(\Trace)/ \VTWork(\Trace)$ is typically significantly smaller.
The differences in running times between vector and tree clocks reflect the discrepancies between $\TCWork(\cdot)$ and $\VCWork(\cdot)$.
Next, recall the intuition behind the optimality of tree clocks (\cref{thm:vtoptimality}), namely that $\TCWork(\Trace)\leq 3\cdot  \VTWork(\Trace)$.
\cref{fig:vtwork} confirms this theoretical bound, 
as the $\TCWork(\Trace)/ \VTWork\allowbreak(\Trace)$ ratio stays nicely upper-bounded by $3$ while the $\VCWork(\Trace)/\allowbreak \VTWork(\Trace)$ ratio grows to \camera{nearly} $100$.
Interestingly, for some benchmarks we have $\TCWork(\Trace)\simeq 2.99\cdot  \VTWork(\Trace)$,
i.e., these benchmarks push tree clocks to their vt-work upper-bound. 
\camera{
Going one step further, \cref{fig:vtwork-histogram} shows the ratio $\VCWork(\Trace)/ \TCWork(\Trace)$ for each partial order in our dataset.
The results indicate that vector clocks perform a lot of unnecessary work compared to tree clocks,
and experimentally demonstrate the source of speedup on tree clocks.
Although \cref{fig:vtwork-histogram} indicates that the potential for speedup can be large (reaching \hcomment{$55 \times$}),
the actual speedup in our experiments (\cref{fig:time_comparison}) is smaller, as a single tree clock operation is more computationally heavy than a single vector clock operation.
}

\Paragraph{\camera{Scalability.}}
\camera{
To get a better insight on the scalability of tree clocks, we performed a set of controlled experiments on custom benchmarks, by controlling the number of threads and the number of locks parameters while keeping the communication patterns constant. 
Each trace consists of $10$M events, while the number of threads varies between $10$ and $360$. 
The traces are generated in a way such that a randomly chosen thread performs two consecutive operations, $acq(l)$ followed by a $rel(l)$, on a randomly (when applicable) chosen lock $l$. 
We have considered four cases:
(a)~all threads communicate over a single common lock (single lock);
(b)~similar to (a), but there are 50 locks, and $20\%$ of the threads are $5$ times more likely to perform an operation compared to the rest of the threads (50 locks, skewed);
(c)~$\NumThreads-1$ client threads communicate with a single server thread via a dedicated lock per thread (star topology);
(d)~similar to (a), but every pair of threads communicates over a dedicated lock (pairwise communication).
}
\begin{figure*}[!h]
\def\scatterscale{0.34}
\centering
\begin{subfigure}[b]{0.225\textwidth}
\includegraphics[scale=0.23]{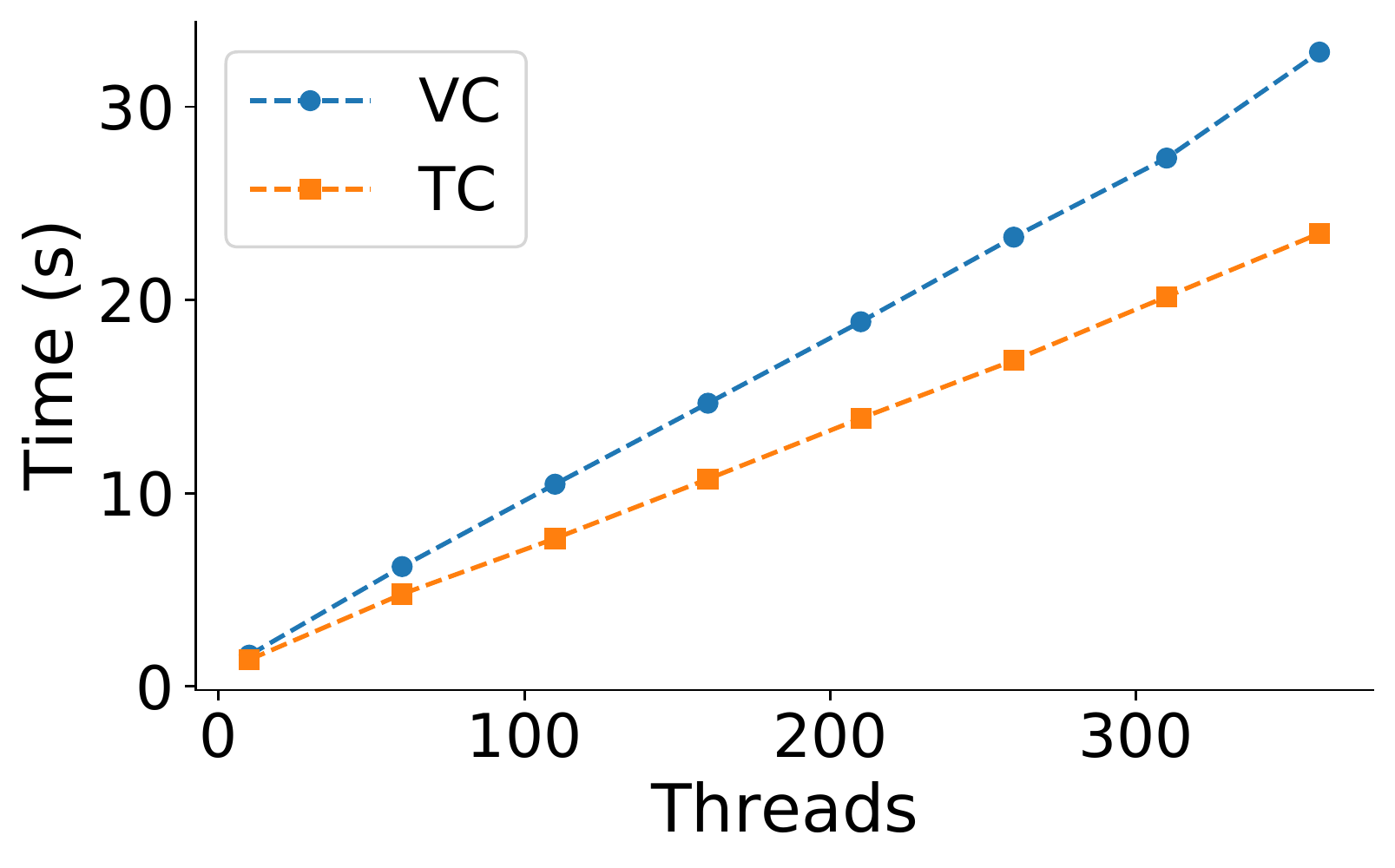}
\label{subfig:sca-one-lock-clique}
\caption{Single lock}
\end{subfigure}
\begin{subfigure}[b]{0.225\textwidth}
\includegraphics[scale=0.23]{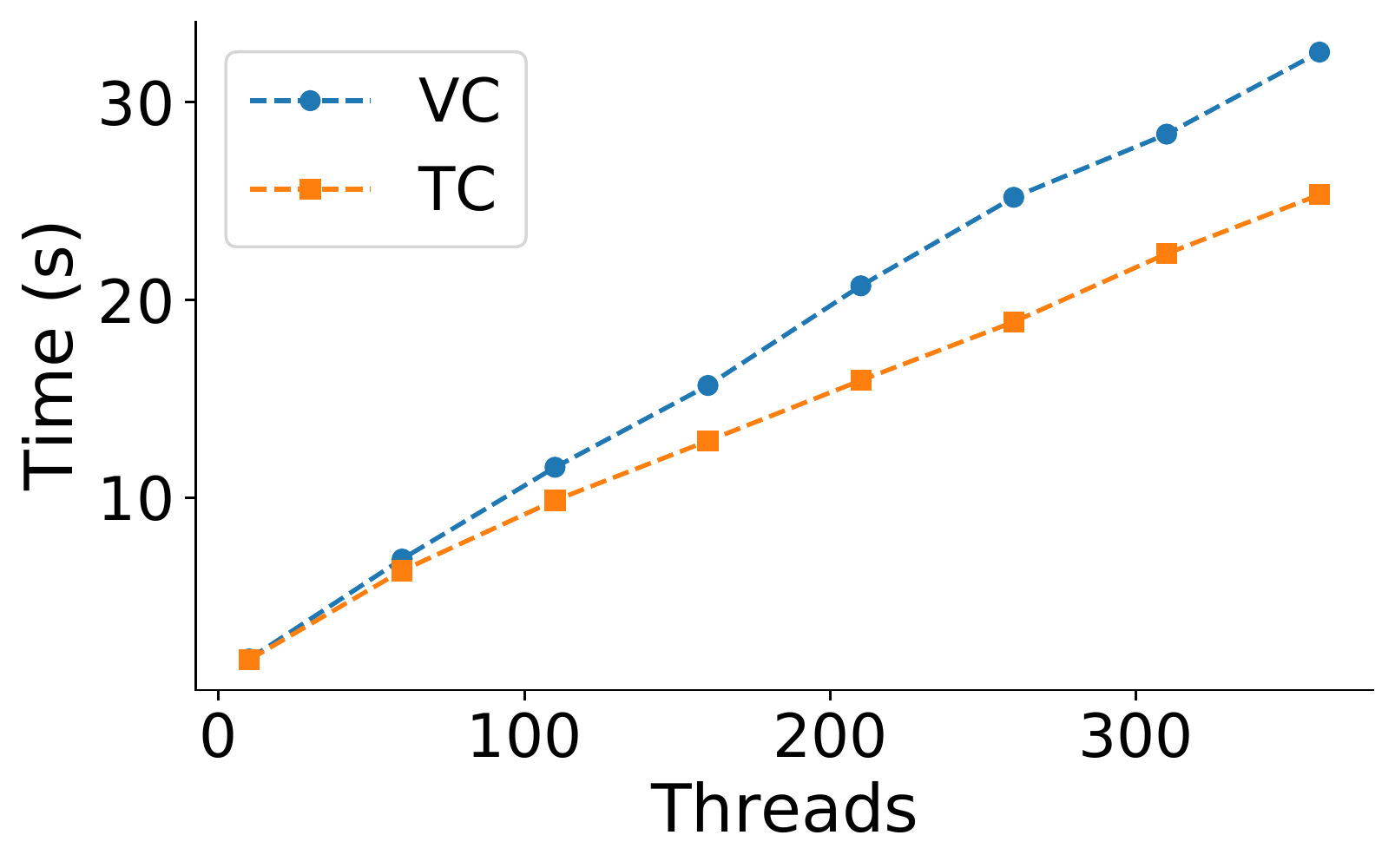}
\caption{Fifty locks}
\label{subfig:sca-skewed}
\end{subfigure}
\begin{subfigure}[b]{0.225\textwidth}
\includegraphics[scale=0.23]{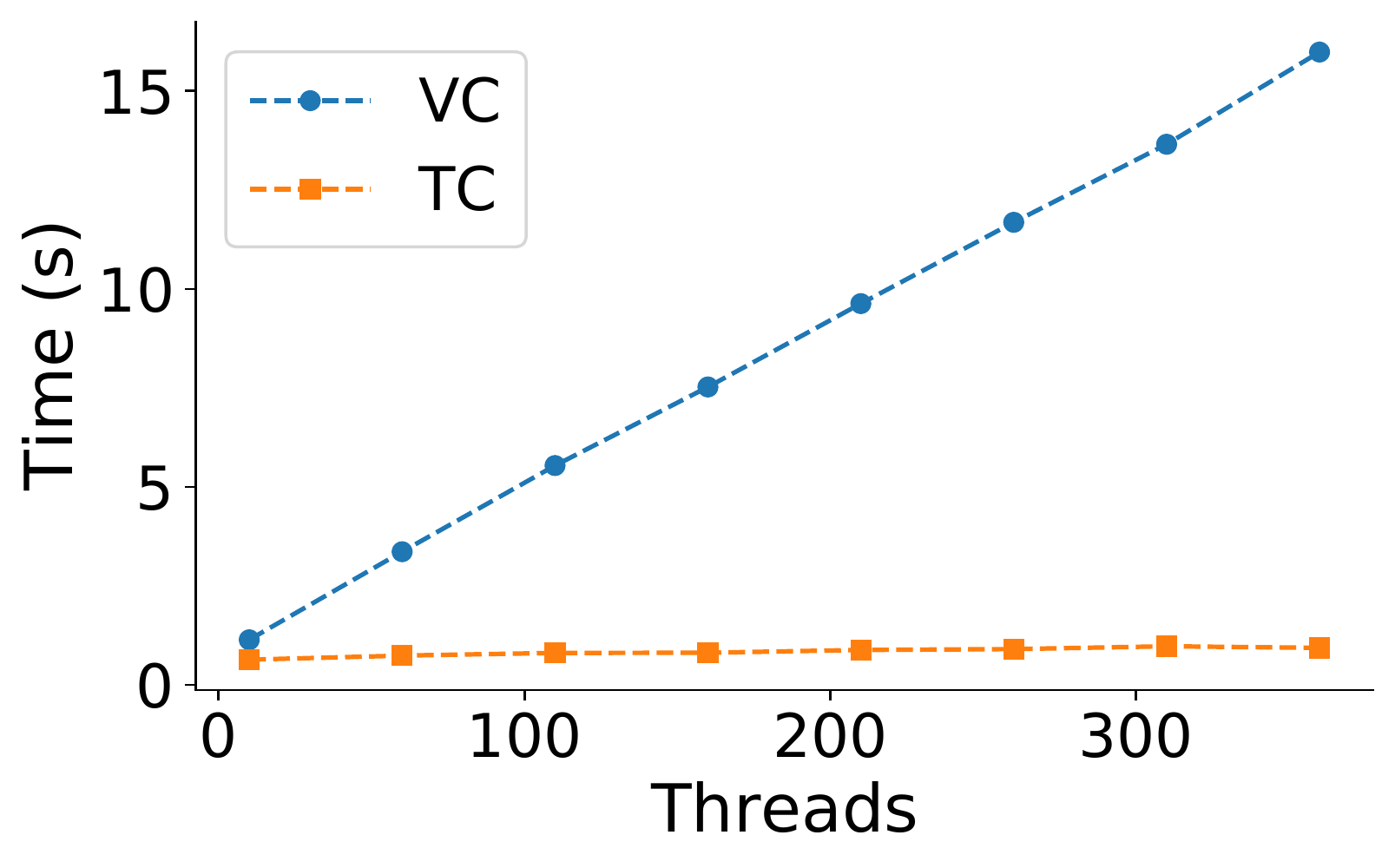}
\label{subfig:sca-star}
\caption{Star topology}
\end{subfigure}
\begin{subfigure}[b]{0.225\textwidth}
\includegraphics[scale=0.23]{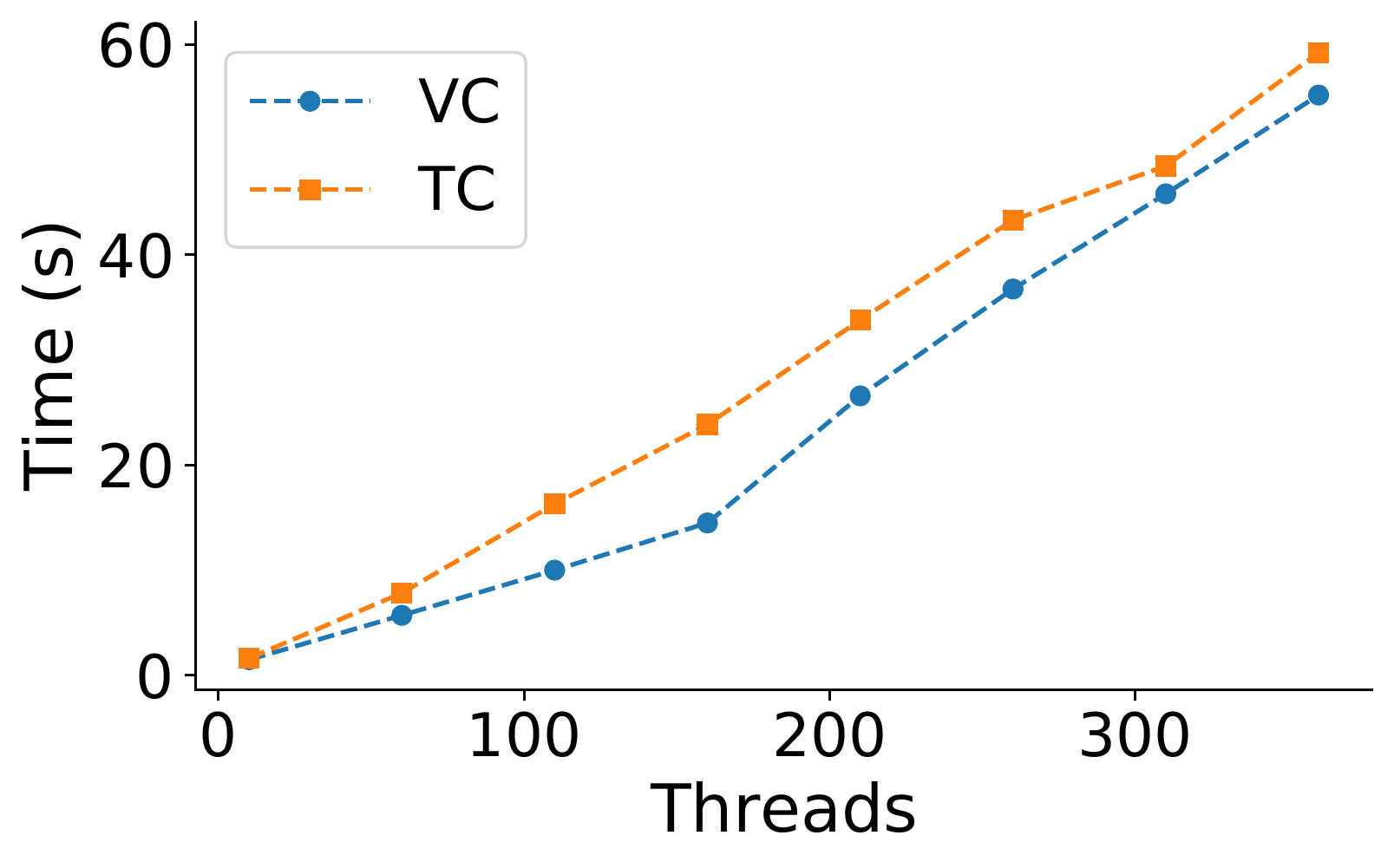}
\caption{Pairwise communication}
\label{subfig:sca-unique-lock}
\end{subfigure}
\caption{
\camera{
Comparison of tree clocks (TC) and vector clocks (VC) on $4$ different benchmarks with increasing number of threads.
}
}
\label{fig:scalability}
\end{figure*}
\camera{
\cref{fig:scalability}  shows the obtained results.
Scenario (a) shows how tree clocks have a constant-factor speedup over vector clocks in this setting.
As we move to more locks in scenario (b), thread communication becomes more independent and the benefit of tree clocks may slightly diminish.
As a subset of the threads is more active than the rest, timestamps are frequently exchanged through them, making tree clocks faster in this setting as well.
Scenario (c) represents a case in which tree clocks thrive: while the time taken by vector clocks increases with the number of threads, it stays constant for tree clocks.
This is because the star topology implies that, on average, every tree clock join and copy operation only affects a constant number of tree clock entries, despite the fact that every thread is aware of the state of every other thread.
Intuitively, the star communication topology \emph{naturally} affects the shape of the tree to (almost) a star, which leads to this effect.
Finally, scenario (d) represents the worst case for tree clocks as all pairs of threads can communicate with each other and the communication is conducted via a unique lock per thread pair. 
This pattern nullifies the benefit of tree clocks, while their increased complexity results in a general slowdown.
However, even in this worst-case scenario, the difference between tree clocks and vector clocks remains relatively small.
}


\section{Related Work}\label{sec:related_work}




\Paragraph{Other partial orders and tree clocks.}
As we have mentioned in the introduction,
besides $\HB$ and $\SHB$, many other partial orders perform dynamic analysis using vector clocks.
In such cases, tree clocks can replace vector clocks either partially or completely, sometimes requiring small extensions to the data structure as presented here.
In particular, we foresee interesting applications of tree clocks for the $\WCP$~\cite{Kini17}, $\DC$~\cite{Roemer18} and $\SDP$~\cite{Genc19} partial orders.

%

\Paragraph{Speeding up dynamic analyses.}
Vector-clock based dynamic race detection 
is known to be slow~\cite{sadowski-tools-2014},
which many prior works have aimed to mitigate.
One of the most prominent performance bottlenecks is the
linear dependence of the size of vector timestamps on the number of threads.
Despite theoretical limits~\cite{CharronBost1991},
prior research exploits special structures in 
traces~\cite{cheng1998detecting,Feng1997,surendran2016dynamic,Dimitrov2015,Agrawal2018} 
that enable succinct vector time representations.
The Goldilocks~\cite{Elmas07} algorithm infers $\HB$-orderings
using locksets instead of vector timestamps
but incurs severe slowdown~\cite{Flanagan09}.
The \fasttrack~\cite{Flanagan09} optimization uses
epochs for maintaining succinct access histories and our work
complements this optimization --- tree clocks offer optimizations
for other clocks (thread and lock clocks).
Other optimizations in clock representations are catered towards dynamic thread creation~\cite{Raychev2013,Wang2006,Raman2012}.     
Another major source of slowdown is program instrumentation
and expensive metadata synchronization.
Several approaches have attempted to minimize this slowdown,
including hardware assistance~\cite{RADISH2012,HARD2007},
hybrid race detection~\cite{OCallahan03,Yu05},
static analysis~\cite{bigfoot2017,redcard2013},
and sophisticated
ownership protocols~\cite{Bond2013,Wood2017,Roemer20}.

\section{Conclusion}
We have introduced tree clocks, a new data structure for maintaining logical times in concurrent executions. 
In contrast to standard vector clocks, tree clocks can dynamically capture communication patterns in their structure and perform join and copy operations in sublinear time, thereby avoiding the traditional overhead of these operations when possible.
Moreover, we have shown that tree clocks are vector-time optimal for computing the $\HB$ partial order, performing at most a constant factor work compared to what is absolutely necessary, in contrast to vector clocks.
Finally, our experiments show that tree clocks effectively reduce the running time for computing the $\Maz$, $\SHB$ and $\HB$ partial orders significantly, and thus offer a promising alternative over vector clocks.

\camera{
Interesting future work includes incorporating tree clocks in an online analysis such as ThreadSanitizer~\cite{threadsanitizer}.
Any use of additional synchronization to maintain analysis atomicity in this online setting is identical and of the same granularity to both vector clocks and tree clocks. 
However, the faster joins performed by tree clocks may lead to less congestion compared to vector clocks,
especially  for partial orders such as $\SHB$ and $\Maz$ where synchronization occurs on all events (i.e., synchronization, as well as access events).
We leave this evaluation for future work. 
Finally, since tree clocks are a drop-in replacement of vector clocks, 
most of the existing techniques that minimize the slowdown due to metadata synchronization (\cref{sec:related_work}) are directly applicable to tree clocks.
}

\begin{acks}
We thank anonymous reviewers for their constructive feedback on an earlier draft of this manuscript.
Umang Mathur was partially supported by the Simons Institute for the Theory of Computing. Mahesh Viswanathan is partially
supported by grants NSF SHF 1901069 and NSF CCF 2007428.
\end{acks}

\clearpage

\bibliographystyle{plainurl}
\bibliography{bibliography}

\begin{thebibliography}{10}

\bibitem{coral2}
{CORAL-2 Benchmarks}.
\newblock \url{https://asc.llnl.gov/coral-2-benchmarks}.
\newblock Accessed: 2021-08-01.

\bibitem{coral}
{CORAL Benchmarks}.
\newblock \url{https://asc.llnl.gov/coral-benchmarks}.
\newblock Accessed: 2021-08-01.

\bibitem{ecp}
{ECP Proxy Applications}.
\newblock \url{https://proxyapps.exascaleproject.org}.
\newblock Accessed: 2021-08-01.

\bibitem{mantevo}
{Mantevo Project}.
\newblock \url{https://mantevo.org}.
\newblock Accessed: 2021-08-01.

\bibitem{Agrawal2018}
Kunal Agrawal, Joseph Devietti, Jeremy~T. Fineman, I-Ting~Angelina Lee, Robert
  Utterback, and Changming Xu.
\newblock {Race Detection and Reachability in Nearly Series-Parallel DAGs}.
\newblock In {\em Proceedings of the Twenty-Ninth Annual ACM-SIAM Symposium on
  Discrete Algorithms}, SODA '18, page 156–171, USA, 2018. Society for
  Industrial and Applied Mathematics.
\newblock \href {https://doi.org/10.1137/1.9781611975031.11}
  {\path{doi:10.1137/1.9781611975031.11}}.

\bibitem{nasbenchmark}
D.~H. Bailey, E.~Barszcz, J.~T. Barton, D.~S. Browning, R.~L. Carter, L.~Dagum,
  R.~A. Fatoohi, P.~O. Frederickson, T.~A. Lasinski, R.~S. Schreiber, H.~D.
  Simon, V.~Venkatakrishnan, and S.~K. Weeratunga.
\newblock {The NAS Parallel Benchmarks—Summary and Preliminary Results}.
\newblock In {\em Proceedings of the 1991 ACM/IEEE Conference on
  Supercomputing}, Supercomputing '91, page 158–165, New York, NY, USA, 1991.
  Association for Computing Machinery.
\newblock \href {https://doi.org/10.1145/125826.125925}
  {\path{doi:10.1145/125826.125925}}.

\bibitem{Bertoni1989}
A.~Bertoni, G.~Mauri, and N.~Sabadini.
\newblock Membership problems for regular and context-free trace languages.
\newblock {\em Information and Computation}, 82(2):135--150, 1989.
\newblock \href {https://doi.org/10.1016/0890-5401(89)90051-5}
  {\path{doi:10.1016/0890-5401(89)90051-5}}.

\bibitem{Biswas14}
Swarnendu Biswas, Jipeng Huang, Aritra Sengupta, and Michael~D. Bond.
\newblock {DoubleChecker: Efficient Sound and Precise Atomicity Checking}.
\newblock In {\em Proceedings of the 35th ACM SIGPLAN Conference on Programming
  Language Design and Implementation}, PLDI '14, pages 28--39, New York, NY,
  USA, 2014. ACM.
\newblock \href {https://doi.org/10.1145/2594291.2594323}
  {\path{doi:10.1145/2594291.2594323}}.

\bibitem{Blackburn06}
Stephen~M. Blackburn, Robin Garner, Chris Hoffmann, Asjad~M. Khang, Kathryn~S.
  McKinley, Rotem Bentzur, Amer Diwan, Daniel Feinberg, Daniel Frampton,
  Samuel~Z. Guyer, Martin Hirzel, Antony Hosking, Maria Jump, Han Lee,
  J.~Eliot~B. Moss, Aashish Phansalkar, Darko Stefanovi\'{c}, Thomas VanDrunen,
  Daniel von Dincklage, and Ben Wiedermann.
\newblock {The DaCapo Benchmarks: Java Benchmarking Development and Analysis}.
\newblock In {\em OOPSLA}, 2006.
\newblock \href {https://doi.org/10.1145/1167515.1167488}
  {\path{doi:10.1145/1167515.1167488}}.

\bibitem{boehmbenign2011}
Hans-J. Boehm.
\newblock {How to Miscompile Programs with “Benign” Data Races}.
\newblock In {\em Proceedings of the 3rd USENIX Conference on Hot Topic in
  Parallelism}, HotPar’11, page~3, USA, 2011. USENIX Association.
\newblock URL: \url{http://dl.acm.org/citation.cfm?id=2001252.2001255}.

\bibitem{Bond2013}
Michael~D. Bond, Milind Kulkarni, Man Cao, Minjia Zhang, Meisam Fathi~Salmi,
  Swarnendu Biswas, Aritra Sengupta, and Jipeng Huang.
\newblock {OCTET: Capturing and Controlling Cross-Thread Dependences
  Efficiently}.
\newblock In {\em Proceedings of the 2013 ACM SIGPLAN International Conference
  on Object Oriented Programming Systems Languages \& Applications}, OOPSLA
  '13, page 693–712, New York, NY, USA, 2013. Association for Computing
  Machinery.
\newblock \href {https://doi.org/10.1145/2509136.2509519}
  {\path{doi:10.1145/2509136.2509519}}.

\bibitem{CharronBost1991}
Bernadette Charron-Bost.
\newblock Concerning the size of logical clocks in distributed systems.
\newblock {\em Information Processing Letters}, 39(1):11 -- 16, 1991.
\newblock \href {https://doi.org/10.1016/0020-0190(91)90055-M}
  {\path{doi:10.1016/0020-0190(91)90055-M}}.

\bibitem{cheng1998detecting}
Guang-Ien Cheng, Mingdong Feng, Charles~E. Leiserson, Keith~H. Randall, and
  Andrew~F. Stark.
\newblock {Detecting Data Races in Cilk Programs That Use Locks}.
\newblock In {\em Proceedings of the Tenth Annual ACM Symposium on Parallel
  Algorithms and Architectures}, SPAA '98, pages 298--309, New York, NY, USA,
  1998. ACM.
\newblock \href {https://doi.org/10.1145/277651.277696}
  {\path{doi:10.1145/277651.277696}}.

\bibitem{RADISH2012}
Joseph Devietti, Benjamin~P. Wood, Karin Strauss, Luis Ceze, Dan Grossman, and
  Shaz Qadeer.
\newblock {RADISH: Always-on Sound and Complete Race Detection in Software and
  Hardware}.
\newblock In {\em Proceedings of the 39th Annual International Symposium on
  Computer Architecture}, ISCA '12, page 201–212, USA, 2012. IEEE Computer
  Society.
\newblock \href {https://doi.org/10.1109/ISCA.2012.6237018}
  {\path{doi:10.1109/ISCA.2012.6237018}}.

\bibitem{Dimitrov2015}
Dimitar Dimitrov, Martin Vechev, and Vivek Sarkar.
\newblock {Race Detection in Two Dimensions}.
\newblock In {\em Proceedings of the 27th ACM Symposium on Parallelism in
  Algorithms and Architectures}, SPAA '15, page 101–110, New York, NY, USA,
  2015. Association for Computing Machinery.
\newblock \href {https://doi.org/10.1145/2755573.2755601}
  {\path{doi:10.1145/2755573.2755601}}.

\bibitem{doESE05}
Hyunsook Do, Sebastian~G. Elbaum, and Gregg Rothermel.
\newblock {Supporting Controlled Experimentation with Testing Techniques: An
  Infrastructure and its Potential Impact.}
\newblock {\em Empirical Software Engineering: An International Journal},
  10(4):405--435, 2005.
\newblock \href {https://doi.org/10.1007/s10664-005-3861-2}
  {\path{doi:10.1007/s10664-005-3861-2}}.

\bibitem{dorta2005openmp}
A.J. Dorta, C.~Rodriguez, and F.~de~Sande.
\newblock {The OpenMP source code repository}.
\newblock In {\em 13th Euromicro Conference on Parallel, Distributed and
  Network-Based Processing}, pages 244--250, 2005.
\newblock \href {https://doi.org/10.1109/EMPDP.2005.41}
  {\path{doi:10.1109/EMPDP.2005.41}}.

\bibitem{Elmas07}
Tayfun Elmas, Shaz Qadeer, and Serdar Tasiran.
\newblock {Goldilocks: A Race and Transaction-Aware Java Runtime}.
\newblock In {\em Proceedings of the 28th ACM SIGPLAN Conference on Programming
  Language Design and Implementation}, PLDI '07, pages 245--255, New York, NY,
  USA, 2007. ACM.
\newblock \href {https://doi.org/10.1145/1250734.1250762}
  {\path{doi:10.1145/1250734.1250762}}.

\bibitem{Farchi03}
Eitan Farchi, Yarden Nir, and Shmuel Ur.
\newblock {Concurrent Bug Patterns and How to Test Them}.
\newblock In {\em Proceedings of the 17th International Symposium on Parallel
  and Distributed Processing}, IPDPS '03, pages 286.2--, Washington, DC, USA,
  2003. IEEE Computer Society.
\newblock URL: \url{http://dl.acm.org/citation.cfm?id=838237.838485}.

\bibitem{Feng1997}
Mingdong Feng and Charles~E. Leiserson.
\newblock {Efficient Detection of Determinacy Races in Cilk Programs}.
\newblock In {\em Proceedings of the Ninth Annual ACM Symposium on Parallel
  Algorithms and Architectures}, SPAA '97, page 1–11, New York, NY, USA,
  1997. Association for Computing Machinery.
\newblock \href {https://doi.org/10.1145/258492.258493}
  {\path{doi:10.1145/258492.258493}}.

\bibitem{Fidge91}
Colin Fidge.
\newblock {Logical Time in Distributed Computing Systems}.
\newblock {\em Computer}, 24(8):28--33, August 1991.
\newblock \href {https://doi.org/10.1109/2.84874} {\path{doi:10.1109/2.84874}}.

\bibitem{fidge1988timestamps}
Colin~J. Fidge.
\newblock Timestamps in message-passing systems that preserve the partial
  ordering.
\newblock In {\em Proc. 11th Australian Comput. Science Conf.}, pages 56--66,
  1988.

\bibitem{Flanagan09}
Cormac Flanagan and Stephen~N. Freund.
\newblock {FastTrack: Efficient and Precise Dynamic Race Detection}.
\newblock In {\em Proceedings of the 30th ACM SIGPLAN Conference on Programming
  Language Design and Implementation}, PLDI '09, pages 121--133, New York, NY,
  USA, 2009. ACM.
\newblock \href {https://doi.org/10.1145/1542476.1542490}
  {\path{doi:10.1145/1542476.1542490}}.

\bibitem{redcard2013}
Cormac Flanagan and Stephen~N. Freund.
\newblock {RedCard: Redundant Check Elimination for Dynamic Race Detectors}.
\newblock In Giuseppe Castagna, editor, {\em ECOOP 2013 -- Object-Oriented
  Programming}, pages 255--280, Berlin, Heidelberg, 2013. Springer Berlin
  Heidelberg.
\newblock \href {https://doi.org/10.1007/978-3-642-39038-8_11}
  {\path{doi:10.1007/978-3-642-39038-8_11}}.

\bibitem{Flanagan2008}
Cormac Flanagan, Stephen~N. Freund, and Jaeheon Yi.
\newblock {Velodrome: {A} Sound and Complete Dynamic Atomicity Checker for
  Multithreaded Programs}.
\newblock In {\em Proceedings of the 29th ACM SIGPLAN Conference on Programming
  Language Design and Implementation}, PLDI '08, pages 293--303, New York, NY,
  USA, 2008. ACM.
\newblock \href {https://doi.org/10.1145/1375581.1375618}
  {\path{doi:10.1145/1375581.1375618}}.

\bibitem{Flanagan2005}
Cormac Flanagan and Patrice Godefroid.
\newblock Dynamic partial-order reduction for model checking software.
\newblock In {\em Proceedings of the 32nd ACM SIGPLAN-SIGACT Symposium on
  Principles of Programming Languages}, POPL '05, page 110–121, New York, NY,
  USA, 2005. Association for Computing Machinery.
\newblock \href {https://doi.org/10.1145/1040305.1040315}
  {\path{doi:10.1145/1040305.1040315}}.

\bibitem{Genc19}
Kaan Gen\c{c}, Jake Roemer, Yufan Xu, and Michael~D. Bond.
\newblock Dependence-aware, unbounded sound predictive race detection.
\newblock {\em Proc. ACM Program. Lang.}, 3(OOPSLA), October 2019.
\newblock \href {https://doi.org/10.1145/3360605} {\path{doi:10.1145/3360605}}.

\bibitem{Godefroid1996}
Patrice Godefroid, J.~van Leeuwen, J.~Hartmanis, G.~Goos, and Pierre Wolper.
\newblock {\em Partial-Order Methods for the Verification of Concurrent
  Systems: An Approach to the State-Explosion Problem}.
\newblock Springer-Verlag, Berlin, Heidelberg, 1996.

\bibitem{Itzkovitz1999}
Ayal Itzkovitz, Assaf Schuster, and Oren Zeev-Ben-Mordehai.
\newblock {Toward Integration of Data Race Detection in DSM Systems}.
\newblock {\em J. Parallel Distrib. Comput.}, 59(2):180--203, November 1999.
\newblock \href {https://doi.org/10.1006/jpdc.1999.1574}
  {\path{doi:10.1006/jpdc.1999.1574}}.

\bibitem{Kini17}
Dileep Kini, Umang Mathur, and Mahesh Viswanathan.
\newblock {Dynamic Race Prediction in Linear Time}.
\newblock In {\em Proceedings of the 38th ACM SIGPLAN Conference on Programming
  Language Design and Implementation}, PLDI 2017, pages 157--170, New York, NY,
  USA, 2017. ACM.
\newblock \href {https://doi.org/10.1145/3062341.3062374}
  {\path{doi:10.1145/3062341.3062374}}.

\bibitem{Kulkarni2021}
Rucha Kulkarni, Umang Mathur, and Andreas Pavlogiannis.
\newblock {Dynamic Data-Race Detection Through the Fine-Grained Lens}.
\newblock In Serge Haddad and Daniele Varacca, editors, {\em 32nd International
  Conference on Concurrency Theory (CONCUR 2021)}, volume 203 of {\em Leibniz
  International Proceedings in Informatics (LIPIcs)}, pages 16:1--16:23,
  Dagstuhl, Germany, 2021. Schloss Dagstuhl -- Leibniz-Zentrum f{\"u}r
  Informatik.
\newblock \href {https://doi.org/10.4230/LIPIcs.CONCUR.2021.16}
  {\path{doi:10.4230/LIPIcs.CONCUR.2021.16}}.

\bibitem{Lamport78}
Leslie Lamport.
\newblock {Time, Clocks, and the Ordering of Events in a Distributed System}.
\newblock {\em Commun. ACM}, 21(7):558--565, July 1978.
\newblock \href {https://doi.org/10.1145/359545.359563}
  {\path{doi:10.1145/359545.359563}}.

\bibitem{liao2017dataracebench}
Chunhua Liao, Pei-Hung Lin, Joshua Asplund, Markus Schordan, and Ian Karlin.
\newblock {DataRaceBench: A Benchmark Suite for Systematic Evaluation of Data
  Race Detection Tools}.
\newblock In {\em {Proceedings of the International Conference for High
  Performance Computing, Networking, Storage and Analysis}}, SC '17, New York,
  NY, USA, 2017. Association for Computing Machinery.
\newblock \href {https://doi.org/10.1145/3126908.3126958}
  {\path{doi:10.1145/3126908.3126958}}.

\bibitem{lpsz08}
Shan Lu, Soyeon Park, Eunsoo Seo, and Yuanyuan Zhou.
\newblock {Learning from Mistakes: A Comprehensive Study on Real World
  Concurrency Bug Characteristics}.
\newblock In {\em Proceedings of the 13th International Conference on
  Architectural Support for Programming Languages and Operating Systems},
  ASPLOS XIII, pages 329--339, New York, NY, USA, 2008. ACM.
\newblock \href {https://doi.org/10.1145/1346281.1346323}
  {\path{doi:10.1145/1346281.1346323}}.

\bibitem{Mathur18}
Umang Mathur, Dileep Kini, and Mahesh Viswanathan.
\newblock {What Happens-after the First Race? Enhancing the Predictive Power of
  Happens-before Based Dynamic Race Detection}.
\newblock {\em Proc. ACM Program. Lang.}, 2(OOPSLA):145:1--145:29, October
  2018.
\newblock \href {https://doi.org/10.1145/3276515} {\path{doi:10.1145/3276515}}.

\bibitem{Mathur2020b}
Umang Mathur, Andreas Pavlogiannis, and Mahesh Viswanathan.
\newblock {The Complexity of Dynamic Data Race Prediction}.
\newblock In {\em Proceedings of the 35th Annual ACM/IEEE Symposium on Logic in
  Computer Science}, LICS ’20, page 713–727, New York, NY, USA, 2020.
  Association for Computing Machinery.
\newblock \href {https://doi.org/10.1145/3373718.3394783}
  {\path{doi:10.1145/3373718.3394783}}.

\bibitem{Mathur21}
Umang Mathur, Andreas Pavlogiannis, and Mahesh Viswanathan.
\newblock Optimal prediction of synchronization-preserving races.
\newblock {\em Proc. ACM Program. Lang.}, 5(POPL), January 2021.
\newblock \href {https://doi.org/10.1145/3434317} {\path{doi:10.1145/3434317}}.

\bibitem{Mathur2020}
Umang Mathur and Mahesh Viswanathan.
\newblock {Atomicity Checking in Linear Time Using Vector Clocks}.
\newblock In {\em Proceedings of the Twenty-Fifth International Conference on
  Architectural Support for Programming Languages and Operating Systems},
  ASPLOS ’20, page 183–199, New York, NY, USA, 2020. Association for
  Computing Machinery.
\newblock \href {https://doi.org/10.1145/3373376.3378475}
  {\path{doi:10.1145/3373376.3378475}}.

\bibitem{Mattern89}
Friedemann Mattern.
\newblock Virtual time and global states of distributed systems.
\newblock In M.~Cosnard et. al., editor, {\em Parallel and Distributed
  Algorithms: proceedings of the International Workshop on Parallel \&
  Distributed Algorithms}, pages 215--226. Elsevier Science Publishers B. V.,
  1989.

\bibitem{Mazurkiewicz87}
Antoni Mazurkiewicz.
\newblock {Trace Theory}.
\newblock In {\em Advances in Petri Nets 1986, Part II on Petri Nets:
  Applications and Relationships to Other Models of Concurrency}, pages
  279--324. Springer-Verlag New York, Inc., 1987.
\newblock \href {https://doi.org/10.1007/3-540-17906-2_30}
  {\path{doi:10.1007/3-540-17906-2_30}}.

\bibitem{Musuvathi08}
Madanlal Musuvathi, Shaz Qadeer, Thomas Ball, Gerard Basler,
  Piramanayagam~Arumuga Nainar, and Iulian Neamtiu.
\newblock {Finding and Reproducing Heisenbugs in Concurrent Programs}.
\newblock In {\em Proceedings of the 8th USENIX Conference on Operating Systems
  Design and Implementation}, OSDI'08, pages 267--280, Berkeley, CA, USA, 2008.
  USENIX Association.
\newblock URL: \url{http://dl.acm.org/citation.cfm?id=1855741.1855760}.

\bibitem{Netzer1990}
Robert~H.B. Netzer and Barton~P. Miller.
\newblock {On the Complexity of Event Ordering for Shared-Memory Parallel
  Program Executions}.
\newblock In {\em In Proceedings of the 1990 International Conference on
  Parallel Processing}, pages 93--97, 1990.

\bibitem{OCallahan03}
Robert O'Callahan and Jong-Deok Choi.
\newblock {Hybrid Dynamic Data Race Detection}.
\newblock {\em SIGPLAN Not.}, 38(10):167--178, June 2003.
\newblock \href {https://doi.org/10.1145/966049.781528}
  {\path{doi:10.1145/966049.781528}}.

\bibitem{Pavlogiannis2020}
Andreas Pavlogiannis.
\newblock {Fast, Sound, and Effectively Complete Dynamic Race Prediction}.
\newblock {\em Proc. ACM Program. Lang.}, 4(POPL), December 2019.
\newblock \href {https://doi.org/10.1145/3371085} {\path{doi:10.1145/3371085}}.

\bibitem{Pozniansky03}
Eli Pozniansky and Assaf Schuster.
\newblock {Efficient On-the-fly Data Race Detection in Multithreaded C++
  Programs}.
\newblock {\em SIGPLAN Not.}, 38(10):179--190, June 2003.
\newblock \href {https://doi.org/10.1145/966049.781529}
  {\path{doi:10.1145/966049.781529}}.

\bibitem{Raman2012}
Raghavan Raman, Jisheng Zhao, Vivek Sarkar, Martin Vechev, and Eran Yahav.
\newblock Efficient data race detection for async-finish parallelism.
\newblock {\em Formal Methods in System Design}, 41(3):321--347, Dec 2012.
\newblock \href {https://doi.org/10.1007/s10703-012-0143-7}
  {\path{doi:10.1007/s10703-012-0143-7}}.

\bibitem{Raychev2013}
Veselin Raychev, Martin Vechev, and Manu Sridharan.
\newblock {Effective Race Detection for Event-Driven Programs}.
\newblock In {\em Proceedings of the 2013 ACM SIGPLAN International Conference
  on Object Oriented Programming Systems Languages \& Applications}, OOPSLA
  '13, page 151–166, New York, NY, USA, 2013. Association for Computing
  Machinery.
\newblock \href {https://doi.org/10.1145/2509136.2509538}
  {\path{doi:10.1145/2509136.2509538}}.

\bibitem{bigfoot2017}
Dustin Rhodes, Cormac Flanagan, and Stephen~N. Freund.
\newblock {BigFoot: Static Check Placement for Dynamic Race Detection}.
\newblock In {\em Proceedings of the 38th ACM SIGPLAN Conference on Programming
  Language Design and Implementation}, PLDI 2017, page 141–156, New York, NY,
  USA, 2017. Association for Computing Machinery.
\newblock \href {https://doi.org/10.1145/3062341.3062350}
  {\path{doi:10.1145/3062341.3062350}}.

\bibitem{Roemer18}
Jake Roemer, Kaan Gen\c{c}, and Michael~D. Bond.
\newblock {High-coverage, Unbounded Sound Predictive Race Detection}.
\newblock In {\em Proceedings of the 39th ACM SIGPLAN Conference on Programming
  Language Design and Implementation}, PLDI 2018, pages 374--389, New York, NY,
  USA, 2018. ACM.
\newblock \href {https://doi.org/10.1145/3192366.3192385}
  {\path{doi:10.1145/3192366.3192385}}.

\bibitem{Roemer20}
Jake Roemer, Kaan Gen\c{c}, and Michael~D. Bond.
\newblock {SmartTrack: Efficient Predictive Race Detection}.
\newblock In {\em Proceedings of the 41st ACM SIGPLAN Conference on Programming
  Language Design and Implementation}, PLDI 2020, page 747–762, New York, NY,
  USA, 2020. Association for Computing Machinery.
\newblock \href {https://doi.org/10.1145/3385412.3385993}
  {\path{doi:10.1145/3385412.3385993}}.

\bibitem{rvpredict}
Grigore Rosu.
\newblock {{RV-Predict, Runtime Verification}}.
\newblock \url{https://runtimeverification.com/predict/}, 2021.
\newblock Accessed: 2021-08-01.

\bibitem{sadowski-tools-2014}
Caitlin Sadowski and Jaeheon Yi.
\newblock {How Developers Use Data Race Detection Tools}.
\newblock In {\em Proceedings of the 5th Workshop on Evaluation and Usability
  of Programming Languages and Tools}, PLATEAU '14, page 43–51, New York, NY,
  USA, 2014. Association for Computing Machinery.
\newblock \href {https://doi.org/10.1145/2688204.2688205}
  {\path{doi:10.1145/2688204.2688205}}.

\bibitem{Samak2014}
Malavika Samak and Murali~Krishna Ramanathan.
\newblock {Trace Driven Dynamic Deadlock Detection and Reproduction}.
\newblock In {\em Proceedings of the 19th ACM SIGPLAN Symposium on Principles
  and Practice of Parallel Programming}, PPoPP '14, page 29–42, New York, NY,
  USA, 2014. Association for Computing Machinery.
\newblock \href {https://doi.org/10.1145/2555243.2555262}
  {\path{doi:10.1145/2555243.2555262}}.

\bibitem{schmitz2019dataraceonaccelerator}
Adrian Schmitz, Joachim Protze, Lechen Yu, Simon Schwitanski, and Matthias~S.
  M{\"u}ller.
\newblock {DataRaceOnAccelerator -- A Micro-benchmark Suite for Evaluating
  Correctness Tools Targeting Accelerators}.
\newblock In Ulrich Schwardmann, Christian Boehme, Dora B.~Heras, Valeria
  Cardellini, Emmanuel Jeannot, Antonio Salis, Claudio Schifanella, Ravi~Reddy
  Manumachu, Dieter Schwamborn, Laura Ricci, Oh~Sangyoon, Thomas Gruber, Laura
  Antonelli, and Stephen~L. Scott, editors, {\em Euro-Par 2019: Parallel
  Processing Workshops}, pages 245--257, Cham, 2020. Springer International
  Publishing.
\newblock \href {https://doi.org/10.1007/978-3-030-48340-1_19}
  {\path{doi:10.1007/978-3-030-48340-1_19}}.

\bibitem{schwarz1994detecting}
Reinhard Schwarz and Friedemann Mattern.
\newblock {Detecting causal relationships in distributed computations: In
  search of the holy grail}.
\newblock {\em Distributed computing}, 7(3):149--174, 1994.
\newblock \href {https://doi.org/10.1007/BF02277859}
  {\path{doi:10.1007/BF02277859}}.

\bibitem{threadsanitizer}
Konstantin Serebryany and Timur Iskhodzhanov.
\newblock {ThreadSanitizer: Data Race Detection in Practice}.
\newblock In {\em Proceedings of the Workshop on Binary Instrumentation and
  Applications}, WBIA '09, page 62–71, New York, NY, USA, 2009. Association
  for Computing Machinery.
\newblock \href {https://doi.org/10.1145/1791194.1791203}
  {\path{doi:10.1145/1791194.1791203}}.

\bibitem{Shasha1988}
Dennis Shasha and Marc Snir.
\newblock {Efficient and Correct Execution of Parallel Programs That Share
  Memory}.
\newblock {\em ACM Trans. Program. Lang. Syst.}, 10(2):282–312, April 1988.
\newblock \href {https://doi.org/10.1145/42190.42277}
  {\path{doi:10.1145/42190.42277}}.

\bibitem{Shi10}
Yao Shi, Soyeon Park, Zuoning Yin, Shan Lu, Yuanyuan Zhou, Wenguang Chen, and
  Weimin Zheng.
\newblock {Do I Use the Wrong Definition?: DeFuse: Definition-use Invariants
  for Detecting Concurrency and Sequential Bugs}.
\newblock In {\em Proceedings of the ACM International Conference on Object
  Oriented Programming Systems Languages and Applications}, OOPSLA '10, pages
  160--174, New York, NY, USA, 2010. ACM.
\newblock \href {https://doi.org/10.1145/1869459.1869474}
  {\path{doi:10.1145/1869459.1869474}}.

\bibitem{Smaragdakis12}
Yannis Smaragdakis, Jacob Evans, Caitlin Sadowski, Jaeheon Yi, and Cormac
  Flanagan.
\newblock {Sound Predictive Race Detection in Polynomial Time}.
\newblock In {\em Proceedings of the 39th Annual ACM SIGPLAN-SIGACT Symposium
  on Principles of Programming Languages}, POPL '12, pages 387--400, New York,
  NY, USA, 2012. ACM.
\newblock \href {https://doi.org/10.1145/2103656.2103702}
  {\path{doi:10.1145/2103656.2103702}}.

\bibitem{Smith01}
L.~A. Smith, J.~M. Bull, and J.~Obdrz\'{a}lek.
\newblock {A Parallel Java Grande Benchmark Suite}.
\newblock In {\em Proceedings of the 2001 ACM/IEEE Conference on
  Supercomputing}, SC '01, pages 8--8, New York, NY, USA, 2001. ACM.
\newblock \href {https://doi.org/10.1145/582034.582042}
  {\path{doi:10.1145/582034.582042}}.

\bibitem{Sulzmann2018}
Martin Sulzmann and Kai Stadtm\"{u}ller.
\newblock Two-phase dynamic analysis of message-passing go programs based on
  vector clocks.
\newblock In {\em Proceedings of the 20th International Symposium on Principles
  and Practice of Declarative Programming}, PPDP '18, New York, NY, USA, 2018.
  Association for Computing Machinery.
\newblock \href {https://doi.org/10.1145/3236950.3236959}
  {\path{doi:10.1145/3236950.3236959}}.

\bibitem{surendran2016dynamic}
Rishi Surendran and Vivek Sarkar.
\newblock {Dynamic Determinacy Race Detection for Task Parallelism with
  Futures}.
\newblock In {\em International Conference on Runtime Verification}, pages
  368--385. Springer, 2016.
\newblock \href {https://doi.org/10.1007/978-3-319-46982-9_23}
  {\path{doi:10.1007/978-3-319-46982-9_23}}.

\bibitem{Tu2019}
Tengfei Tu, Xiaoyu Liu, Linhai Song, and Yiying Zhang.
\newblock {Understanding Real-World Concurrency Bugs in Go}.
\newblock In {\em Proceedings of the Twenty-Fourth International Conference on
  Architectural Support for Programming Languages and Operating Systems},
  ASPLOS '19, page 865–878, New York, NY, USA, 2019. Association for
  Computing Machinery.
\newblock \href {https://doi.org/10.1145/3297858.3304069}
  {\path{doi:10.1145/3297858.3304069}}.

\bibitem{Wang2006}
Xinli Wang, J.~Mayo, W.~Gao, and J.~Slusser.
\newblock {An Efficient Implementation of Vector Clocks in Dynamic Systems}.
\newblock In {\em PDPTA}, 2006.

\bibitem{Wood2017}
Benjamin~P. Wood, Man Cao, Michael~D. Bond, and Dan Grossman.
\newblock Instrumentation bias for dynamic data race detection.
\newblock {\em Proc. ACM Program. Lang.}, 1(OOPSLA), October 2017.
\newblock \href {https://doi.org/10.1145/3133893} {\path{doi:10.1145/3133893}}.

\bibitem{Yu2021}
Kunpeng Yu, Chenxu Wang, Yan Cai, Xiapu Luo, and Zijiang Yang.
\newblock {Detecting Concurrency Vulnerabilities Based on Partial Orders of
  Memory and Thread Events}.
\newblock In {\em Proceedings of the 29th ACM Joint Meeting on European
  Software Engineering Conference and Symposium on the Foundations of Software
  Engineering}, ESEC/FSE 2021, page 280–291, New York, NY, USA, 2021.
  Association for Computing Machinery.
\newblock \href {https://doi.org/10.1145/3468264.3468572}
  {\path{doi:10.1145/3468264.3468572}}.

\bibitem{Yu05}
Yuan Yu, Tom Rodeheffer, and Wei Chen.
\newblock {RaceTrack: Efficient Detection of Data Race Conditions via Adaptive
  Tracking}.
\newblock {\em SIGOPS Oper. Syst. Rev.}, 39(5):221--234, October 2005.
\newblock \href {https://doi.org/10.1145/1095809.1095832}
  {\path{doi:10.1145/1095809.1095832}}.

\bibitem{SoftwareErrors2009}
M.~Zhivich and R.~K. Cunningham.
\newblock {The Real Cost of Software Errors}.
\newblock {\em IEEE Security and Privacy}, 7(2):87–90, March 2009.
\newblock \href {https://doi.org/10.1109/MSP.2009.56}
  {\path{doi:10.1109/MSP.2009.56}}.

\bibitem{HARD2007}
P.~{Zhou}, R.~{Teodorescu}, and Y.~{Zhou}.
\newblock {HARD: Hardware-Assisted Lockset-based Race Detection}.
\newblock In {\em 2007 IEEE 13th International Symposium on High Performance
  Computer Architecture}, pages 121--132, 2007.
\newblock \href {https://doi.org/10.1109/HPCA.2007.346191}
  {\path{doi:10.1109/HPCA.2007.346191}}.

\end{thebibliography}

\clearpage

\appendix
\section{Proofs}\label{appsec:proofs}

\smallskip
\lemcopymonotonicity*
\begin{proof}
Consider a trace $\Trace$, a release event $\Release(\ell)$ and let $\Acquire(\ell)$ be the matching acquire event.
When $\Acquire(\ell)$ is processed the algorithm performs $\CTC_{\Thread}\gets \CTC_{\Thread}\CJoin\CTC_{\ell}$, and thus $\CTC_{\ell}\CSmaller \CTC_{\Thread}$ after this operation.
By lock semantics, there exists no release event $\Release'(\ell)$ such that
$\Acquire(\ell)<^{\Trace}\Release'(\ell)<^{\Trace}\Release(\ell)$, and hence $\CTC_{\ell}$ is not modified until $\Release(\ell)$ is processed.
Since vector clock entries are never decremented, when $\Release(\ell)$ is processed we have $\CTC_{\ell} \CSmaller \CTC_{\Thread}$, as desired.
\end{proof}

\lemtcmonotonicity*
\begin{proof}
First, note that after initialization $u$ has no children, hence each statement is trivially true.
Now assume that both statements hold when the algorithm processes an event $\Event$, and we show that they both hold after the algorithm has processed $\Event$.
We distinguish cases based on the type of $\Event$.

\noindent $\Event=\langle \Thread, \Acquire(\ell) \rangle$.
The algorithm performs the operation $\CTC_{\Thread}.\FunctionCTJoin(\CTL_{\ell})$, hence the only tree clock modified is $\CTC_{\Thread}$, and thus it suffices to examine the cases that $\CTC_{\Thread}$ is $\CTC$ and $\CTC_{\Thread}$ is $\CTC'$.
\begin{compactenum}
\item $\CTC_{\Thread}$ is $\CTC$.
First consider the case that $u=\CTC_{\Thread}.\GTree.\Root$.
Observe that $u.\clock > \CTC'.\FunctionCTGet(u.\tid)$, and thus \cref{item:monotonicity1} holds trivially.
For \cref{item:monotonicity2}, we distinguish cases based on whether $v.\clock$ has progressed by the $\FunctionCTJoin$ operation.
If yes, then we have $v.\pclock = u.\clock$, and the statement holds trivially for the same reason as in \cref{item:monotonicity1}.
Otherwise, we have that for every descendant $w$ of $v$, the clock $w.\clock$ has not progressed by the $\FunctionCTJoin$ operation, hence the statement holds by the induction hypothesis on $\CTC_{\Thread}$.

Now consider the case that $u\neq \CTC_{\Thread}.\GTree.\Root$.
If $u.\clock$ has not progressed by the $\FunctionCTJoin$ operation then each statement holds by the induction hypothesis on $\CTC_{\Thread}$.
Otherwise, using the induction hypothesis one can show that for every descendant $w$ of $u$,
there exists a node $w_{\ell}$ of $\CTL_{\ell}$ that is a descendant of a node $u_{\ell}$ such that $w_{\ell}.\tid = w.\tid$ and $u_{\ell}.\tid=u.\tid$.
Then, each statement holds by the induction hypothesis on $\CTL_{\ell}$.

\item $\CTC_{\Thread}$ is $\CTC'$.
For \cref{item:monotonicity1}, if $u.\clock \leq \CTC'.\FunctionCTGet(u.\tid)$ holds before the $\FunctionCTJoin$ operation, then the statement holds by the induction hypothesis, since $\FunctionCTJoin$ does not decrease the clocks of $\CTC_{\Thread}$.
Otherwise, the statement follows by the induction hypothesis on $\CTL_{\ell}$.
The analysis  for \cref{item:monotonicity2} is similar.

The desired result follows.
\end{compactenum}

\noindent $\Event=\langle \Thread, \Release(\ell) \rangle$.
The algorithm performs the operation $\CTL_{\ell}.\linebreak\FunctionCTMonotoneCopy(\CTC_{\Thread})$.
The analysis is similar to the previous case, this time also using \cref{lem:copy_monotonicity} to argue that no time stored in $\CTL_{\ell}$ decreases.
\end{proof}

\smallskip
\lemhbcor*
\begin{proof}
The lemma follows directly from \cref{lem:tc_monotonicity}.
In each case, if the corresponding operation (i.e., $\FunctionCTJoin$ for event $\langle \Thread, \Acquire(\ell) \rangle$ and $\FunctionCTMonotoneCopy$ for $\langle \Thread, \Release(\ell) \rangle$),
if the clock of a node $w$ of the tree clock that performs the operation does not progress, then
we are guaranteed that $w.\clock$ is not smaller than the time of the thread $w.\tid$ in the tree clock that is passed as an argument to the operation.
\end{proof}

\Paragraph{First remote acquires.}
Consider a trace $\Trace$ and a lock-release event $e = \ev{\Thread, \Release(\lk)}$ of $\Trace$,
such that there exists a later acquire event $\Event' = \ev{\Thread', \Acquire(\lk)}$ ($\Event \stricttrord{\Trace} \Event'$).
The \emph{first remote acquire} of $\Event$ is the first event $\Event'$ with the above property.
For example, in \cref{subfig:hb_example_trace}, $\Event_7$ is the first remote acquire of $\Event_2$.
While constructing the $\HB$ partial order, the algorithm makes $\HB$ orderings from lock-release events to their first remote acquires $\Release(\lk) \stricthb{\Trace} \Acquire(\lk)$.
The following lemma captures the property that the edges of tree clocks are essentially the inverses of such orderings.

\smallskip
\begin{restatable}{lemma}{lemuniqueness}\label{lem:uniqueness}
Consider the execution of \cref{algo:hb} on a trace $\Trace$.
For every tree clock $\CTC_i$ and node $u$ of $\CTC_i.\GTree$ other than the root,
the following assertions hold.
\vspace{-0.1cm}
\begin{compactenum}
\item $u$ points to a lock-release event $\Release(\ell)$.
\item $\Release(\ell)$ has a first remote acquire $\Acquire(\ell)$ and $(v.\tid, u.\pclock)$ points to $\Acquire(\ell)$, where $v$ is the parent of $u$ in $\CTC_i.\GTree$. 
\end{compactenum}
\end{restatable}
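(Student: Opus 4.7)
The plan is to prove a strengthening of the lemma by induction on the events processed by \cref{algo:hb}, maintaining an invariant simultaneously for every tree clock in use. The strengthening adds two root-level claims: for each thread clock $\CTC_\Thread$, its root points to the latest event of $\Thread$ processed so far; for each lock clock $\CTL_\ell$, its root points to the latest release event on $\ell$ (if any has occurred). The base case is immediate because every tree clock is initially either empty or a single root with local clock $0$.

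For the inductive step on an acquire event $\Event = \ev{\Thread, \Acquire(\ell)}$, the only tree clock modified is $\CTC_\Thread$ via $\FunctionCTJoin(\CTL_\ell)$. After the implicit increment, the root $z$ of $\CTC_\Thread$ points to $\Event$ itself, establishing the strengthened root claim. I would then examine the re-attached subtree. Its top-level node $w$ corresponds to the root of $\CTL_\ell$, which by the inductive hypothesis points to the latest release event $\Release(\Thread', \ell)$; hence $w$ also points to $\Release(\Thread', \ell)$. The closing lines of $\FunctionCTJoin$ set $w.\pclock \gets z.\clock$ and make $w$ a child of $z$, so $(z.\tid, w.\pclock) = (\Thread, \text{local time of } \Event)$, which points to $\Event$. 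To conclude that $\Event$ is the first remote acquire of $\Release(\Thread', \ell)$, I would invoke lock semantics: any acquire of $\ell$ occurring strictly between $\Release(\Thread', \ell)$ and $\Event$ would need to be followed by some release of $\ell$, whose processing would have replaced $\CTL_\ell$'s root to point to that newer release, contradicting the inductive hypothesis. Moreover $\Thread \neq \Thread'$, since otherwise the early-return test at \cref{line:functionctjoinifroot} of $\FunctionCTJoin$ would fire and no node would be added. For nodes deeper in the re-attached subtree, $\RoutineAttachNodes$ preserves $(\tid, \clock, \pclock)$ and the $\tid$ of the parent from the structure computed out of $\CTL_\ell$, so they inherit the invariant from the inductive hypothesis applied to $\CTL_\ell$. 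Unchanged nodes of $\CTC_\Thread$ trivially retain the invariant.

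The case of a release event $\Event = \ev{\Thread, \Release(\ell)}$ is symmetric and handled via $\FunctionCTMonotoneCopy$: the new root of $\CTL_\ell$ has $(\tid, \clock) = (\Thread, \text{local time of } \Event)$, so it points to $\Event$, yielding the strengthened root claim for $\CTL_\ell$. Non-root nodes in the new $\CTL_\ell$ preserve $(\tid, \clock, \pclock)$ and parent $\tid$ from their counterparts in $\CTC_\Thread$, so they inherit the main invariant from the inductive hypothesis on $\CTC_\Thread$. The main technical obstacle will be handling the subtlety of $\RoutineUpdatedNodesForCopy$ at \cref{line:routineupdatenodesforcopy_root}, where a child of $\CTC_\Thread$'s root whose $\tid$ matches the old root of $\CTL_\ell$ is pushed to $\Stack$ even when its clock has not progressed. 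I would argue that this additional push merely re-parents the old root of $\CTL_\ell$ as a non-root node in the new tree, equipping it with the $\pclock$ it has in $\CTC_\Thread$; by the inductive hypothesis on $\CTC_\Thread$ that $\pclock$ already correctly encodes the first remote acquire witnessing the corresponding release, so the invariant is preserved.
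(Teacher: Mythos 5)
Your proposal is correct and takes essentially the same approach as the paper: the paper's own proof is a one-line appeal to "a straightforward induction on $\Trace$," and you carry out precisely that induction, adding the natural strengthening of the invariant (what the roots of thread clocks and lock clocks point to) that is needed for it to close. The details you identify — the early return at \cref{line:functionctjoinifroot} ruling out same-thread reacquisition, lock semantics forcing an intervening release to overwrite the root of $\CTL_{\ell}$, the preservation of $(\tid,\clock,\pclock)$ and parent $\tid$ by $\RoutineAttachNodes$, and the special re-parenting at \cref{line:routineupdatenodesforcopy_root} — are exactly the points the induction must check.
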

\begin{proof}
The lemma follows by a straightforward induction on $\Trace$.
\end{proof}

\cref{lem:uniqueness} allows us to prove the vt-optimality of tree clocks.

\smallskip
\thmvtoptimality*
\begin{proof}
Consider a critical section of a thread $\Thread$ on lock $\ell$, marked by two events $\Acquire(\ell)$, $\Release(\ell)$.
We define the following vector times.
\begin{compactenum}
\item $\VectorTime_{\Thread}^{1}$ and $\VectorTime_{\Thread}^{2}$ are the vector times of $\CTC_{\Thread}$ right before and right after $\Acquire(\ell)$ is processed, respectively.
\item $\VectorTime_{\ell}^{1}$ is the vector time of $\CTC_{\ell}$ right before $\Acquire(\ell)$ is processed.
\item $\VectorTime_{\Thread}^{3}$ is the vector time of $\CTC_{\Thread}$ right before $\Release(\ell)$ is processed.
\item $\VectorTime_{\ell}^{3}$ and $\VectorTime_{\ell}^{4}$ are the vector times of $\CTC_{\ell}$ right before and right after $\Release(\ell)$ is processed, respectively,
\end{compactenum}
First, note that 
(i)~$\VectorTime_{\Thread}^{1}\CSmaller\VectorTime_{\Thread}^{3}$, and
(ii)~due to lock semantics, we have $\VectorTime_{\ell}^3=\VectorTime_{\ell}^{1}$.
Let $W=W_{J}+W_{C}$, where
\begin{align*}
W_J=&|\{\Thread'\colon \VectorTime_{\Thread}^{2}(\Thread')\neq \VectorTime_{\Thread}^{1}(\Thread')\}| \quad \text{ and}\\
W_C=&|\{\Thread'\colon \VectorTime_{\ell}^{4}(\Thread')\neq \VectorTime_{\ell}^{3}(\Thread')\}|
\end{align*}
i.e., $W_J$ and $W_C$ are the vt-work for handling $\Acquire(\ell)$ and $\Release(\ell)$, respectively.
Let $\Time_{J}$ be the time spent in $\TreeClock_{\Thread}.\FunctionCTJoin$ due to $\Acquire(\ell)$.
Similarly, let $\Time_{C}$ be the time spent in $\TreeClock_{\ell}.\FunctionCTMonotoneCopy$ due to $\Release(\ell)$.
We will argue that $\Time_{J}=O(W)$ and $\Time_{C}=O(W_C)$, and thus $\Time_{J}+\Time_{C}=O(W)$.
Note that this proves the lemma, simply by summing over all critical sections of $\Trace$.

We start with $\Time_J$.
Observe that the time spent in this operation is proportional to the number of times the loop in \cref{line:routineupdatednodesforjoin_loop} is executed, i.e., the number of nodes $v'$ that the loop iterates over.
Consider the if statement in \cref{line:routineupdatednodesforjoin_if}.
If $\FunctionCTGet(v'.\tid)<v'.\clock$, then we have $\VectorTime_{\Thread}^{2}(v'.\tid)>\VectorTime_{\Thread}^{1}(v'.\tid)$, and thus this iteration is accounted for in $W_J$.
On the other hand, if $\FunctionCTGet(v'.\tid)>v'.\clock$, then we have $\VectorTime_{\Thread}^{1}(v'.\tid)>\VectorTime_{\ell}^{1})(v'.\tid)$.
Due to (i) and (ii) above, we have $\VectorTime_{\ell}^{4}(v'.\tid)>\VectorTime_{\ell}^{3}(v'.\tid)$, and thus this iteration is accounted for in $W_C$.
Finally, consider the case that $\FunctionCTGet(v'.\tid)=v'.\clock$, and let $v$ be the node of $\CTC_{\Thread}$ such that $v.\tid=v'.\tid$.
There can be at most one such $v$ that is not the root of $\TreeClock_{\Thread}$.
For every other such $v$, let $u=\Parent(v)$.
Note that $v'$ is not the root of $\CTC_{\ell}$, and let $u'=\Parent(v')$.
Let $\Release(\ell)$ be the lock-release event that $v$ and $v'$ point to.
By \cref{lem:uniqueness}, $\Release(\ell)$ has a first remote acquire $\Acquire(\ell)$ such that
(i)~$u.\tid=u'.\tid=\Thread'$, where $\Thread'$ is the thread of $\Acquire(\ell)$, and
(ii)~$v.\pclock$ is the local clock of $\Acquire(\ell)$.
Since $\RoutineUpdatedNodesForCopy$ examines $v'$, we must have $u'.\clock > u.\clock$.
In turn, we have $u.\clock \geq v.\pclock$, and thus $u'.\pclock>v.\pclock$.
Hence, due to \cref{line:routineupdatednodesforjoinifbreak}, $u'$ can have at most one child $v'$ with $v'.\clock=\FunctionCTGet(v'.\tid)$.
Thus, we can account for the time of this case in $W_J$.
Hence, $\Time_J=O(W)$, as desired.

We now turn our attention to $\Time_C$.
Similarly to the previous case, the time spent in this operation is proportional to the number of times the loop in \cref{line:routineupdatednodesforcopy_loop} is executed.
Consider the if statement in \cref{line:routineupdatednodesforcopy_if}.
If $\FunctionCTGet(v'.\tid)<v'.\clock$, then we have $\VectorTime_{\ell}^{4}(v'.\tid)>\VectorTime_{\ell}^{3}(v'.\tid)$, and thus this iteration is accounted for in $W_C$.
Note that as the copy is monotone (\cref{lem:copy_monotonicity}),
we can't have $\FunctionCTGet(v'.\tid)>v'.\clock$.
Finally, the reasoning for the case where $\FunctionCTGet(v'.\tid)=v'.\clock$ is similar to the analysis of $\Time_J$,
using \cref{line:routineupdatednodesforcopyifbreak} instead of \cref{line:routineupdatednodesforjoinifbreak}.
Hence, $\Time_C=O(W_C)$, as desired.

The desired result follows.
\end{proof}

\section{Example of Tree Clocks in $\HB$}\label{sec:app_hb_example}


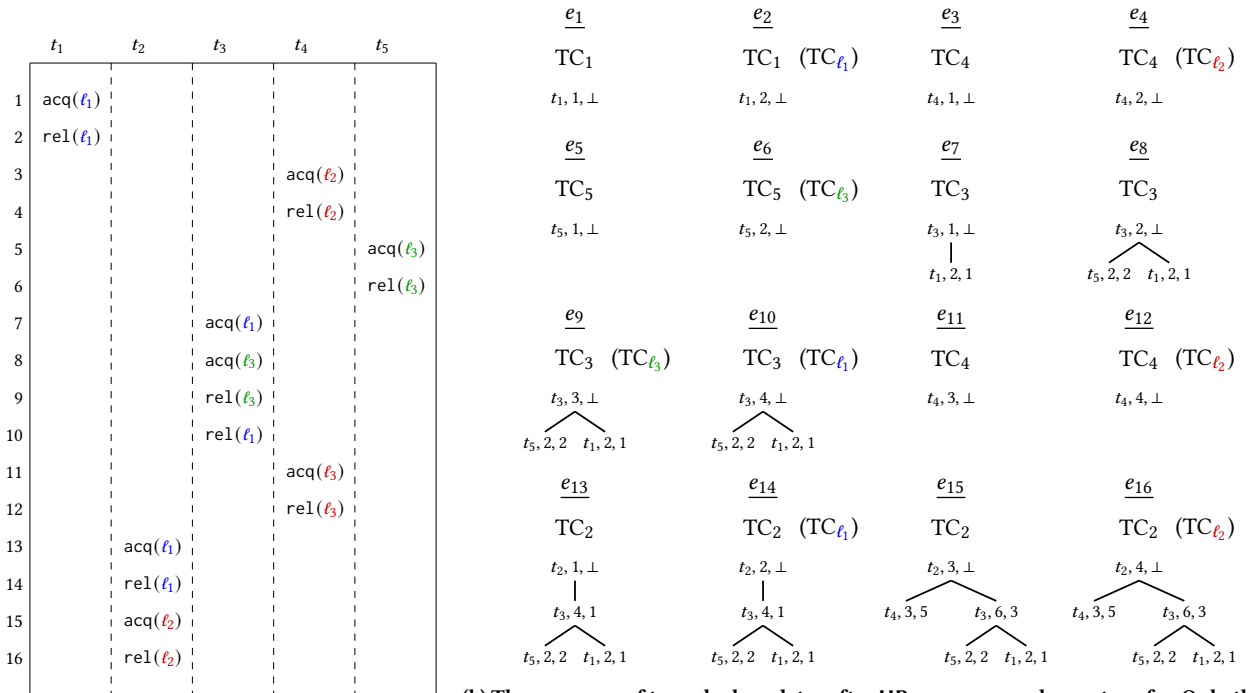
\begin{figure*}
\small
\newcommand{\xdisposition}{2.5}
\newcommand{\ydisposition}{2.5}
\newcommand{\halfyOuter}{0.3}
\newcommand{\halfxOuter}{0.5}
\newcommand{\halfyInner}{0.2}
\newcommand{\halfxInner}{0.4}
\newcommand{\grayDark}{gray!50}
\newcommand{\grayLight}{gray!20}
\newcommand{\evYDist}{15}
\newcommand{\scaletree}{0.85}
\begin{subfigure}[b]{0.33\textwidth}
\centering

\executioncustom{5}{
\figevcustom{1}{\Acquire(\textcolor{blue}{\ell_1})}
\figevcustom{1}{\Release(\textcolor{blue}{\ell_1})}
\figevcustom{4}{\Acquire(\textcolor{\darkred}{\ell_2})}
\figevcustom{4}{\Release(\textcolor{\darkred}{\ell_2})}
\figevcustom{5}{\Acquire(\textcolor{\darkgreen}{\ell_3})}
\figevcustom{5}{\Release(\textcolor{\darkgreen}{\ell_3})}
\figevcustom{3}{\Acquire(\textcolor{blue}{\ell_1})}
\figevcustom{3}{\Acquire(\textcolor{\darkgreen}{\ell_3})}
\figevcustom{3}{\Release(\textcolor{\darkgreen}{\ell_3})}
\figevcustom{3}{\Release(\textcolor{blue}{\ell_1})}
\figevcustom{4}{\Acquire(\textcolor{\darkred}{\ell_3})}
\figevcustom{4}{\Release(\textcolor{\darkred}{\ell_3})}
\figevcustom{2}{\Acquire(\textcolor{blue}{\ell_1})}
\figevcustom{2}{\Release(\textcolor{blue}{\ell_1})}
\figevcustom{2}{\Acquire(\textcolor{\darkred}{\ell_2})}
\figevcustom{2}{\Release(\textcolor{\darkred}{\ell_2})}
}
\caption{
An example trace $\Trace$.
}
\label{subfig:hb_example_trace}
\end{subfigure}
\quad
\begin{subfigure}[b]{0.63\textwidth}

\centering
\begin{tikzpicture}[thick,sibling distance=1em, node distance=15,
every tree node/.style={thick, rectangle, rounded corners, minimum height=0.8mm, minimum width=5mm, inner sep=2pt},
boundingBox/.style={very thick, draw=none},
sibling distance=3pt,
level distance=20pt,
]
\tikzset{edge from parent/.append style={ thick}}
\pgfdeclarelayer{background}
\pgfdeclarelayer{foreground}
\pgfsetlayers{background,main,foreground}

\begin{scope}[shift={(0*\xdisposition,0*\ydisposition)}]

\begin{scope}[scale=\scaletree]
\Tree [
. \node[] (11) {$\treethr{1},1,\bot$};
]
\end{scope}
\node[above of=11] (CT11) {\normalsize $\TreeClock_1$};
\node[above of=CT11, node distance=\evYDist] {\normalsize \underline{$e_1$}};

\node[boundingBox,  fit=(11) (CT11)] {};

\end{scope}

\begin{scope}[shift={(1*\xdisposition,0*\ydisposition)}]
\begin{scope}[scale=\scaletree]
\Tree [
. \node[] (12) {$\treethr{1},2,\bot$};
]
\end{scope}
\node[above of=12] (CT12) {\normalsize $\TreeClock_1$};
\node[right of = CT12, node distance=25] (CTL11) {\normalsize ($\TreeClock_{\textcolor{blue}{\ell_1}}$)};
\node[above of=CT12, node distance=\evYDist] {\normalsize \underline{$e_2$}};

\node[boundingBox,  fit=(12) (CT12) (CTL11)] {};

\end{scope}

\begin{scope}[shift={(2*\xdisposition,0*\ydisposition)}]
\begin{scope}[scale=\scaletree]
\Tree [
. \node[] (41) {$\treethr{4},1,\bot$};
]
\end{scope}
\node[above of=41] (CT41) {\normalsize $\TreeClock_4$};
\node[boundingBox,  fit=(41) (CT41)] {};

\node[above of=CT41, node distance=\evYDist] {\normalsize \underline{$e_3$}};

\end{scope}

\begin{scope}[shift={(3*\xdisposition,0*\ydisposition)}]
\begin{scope}[scale=\scaletree]
\Tree [
. \node[] (42) {$\treethr{4},2,\bot$};
]
\end{scope}
\node[above of=42] (CT42) {\normalsize $\TreeClock_4$};
\node[right of = CT42, node distance=25] (CTL21) {\normalsize ($\TreeClock_{\textcolor{\darkred}{\ell_2}}$)};
\node[above of=CT42, node distance=\evYDist] {\normalsize \underline{$e_4$}};

\node[boundingBox,  fit=(42) (CT42) (CTL21)] {};

\end{scope}

\begin{scope}[shift={(0*\xdisposition,-0.7*\ydisposition)}]
\begin{scope}[scale=\scaletree]
\Tree [
. \node[] (51) {$\treethr{5},1,\bot$};
]
\end{scope}
\node[above of=51] (CT51) {\normalsize $\TreeClock_5$};
\node[above of=CT51, node distance=\evYDist] {\normalsize \underline{$e_5$}};

\end{scope}

\begin{scope}[shift={(1*\xdisposition,-0.7*\ydisposition)}]
\begin{scope}[scale=\scaletree]
\Tree [
. \node[] (52) {$\treethr{5},2,\bot$};
]
\end{scope}
\node[above of=52] (CT52) {\normalsize $\TreeClock_5$};
\node[right of = CT52, node distance=25] (CTL31) {\normalsize ($\TreeClock_{\textcolor{\darkgreen}{\ell_3}}$)};
\node[above of=CT52, node distance=\evYDist] {\normalsize \underline{$e_6$}};

\end{scope}
\begin{scope}[shift={(2*\xdisposition,-0.7*\ydisposition)}]
\begin{scope}[scale=\scaletree]

\Tree [
. \node[] (31) {$\treethr{3},1,\bot$}; 
	[. \node[] {$\treethr{1},2,1$}; ]
]
\end{scope}
\node[above of=31] (CT31) {\normalsize $\TreeClock_3$};
\node[above of=CT31, node distance=\evYDist] {\normalsize \underline{$e_7$}};

\end{scope}

\begin{scope}[shift={(3*\xdisposition,-0.7*\ydisposition)}]
\begin{scope}[scale=\scaletree]
\Tree [
. \node[] (32) {$\treethr{3},2,\bot$};  
	[. \node[] {$\treethr{5},2,2$};] 
		[. \node[] {$\treethr{1},2,1$};
	]
]
\end{scope}
\node[above of=32] (CT32) {\normalsize $\TreeClock_3$};
\node[above of=CT32, node distance=\evYDist] {\normalsize \underline{$e_8$}};

\end{scope}

\begin{scope}[shift={(0*\xdisposition,-1.6*\ydisposition)}]
\begin{scope}[scale=\scaletree]
\Tree [
. \node[] (33) {$\treethr{3},3,\bot$};  
	[. \node[] {$\treethr{5},2,2$};] 
	[. \node[] {$\treethr{1},2,1$};]
]
\end{scope}
\node[above of=33] (CT33) {\normalsize $\TreeClock_3$};
\node[right of = CT33, node distance=25] (CTL32) {\normalsize ($\TreeClock_{\textcolor{\darkgreen}{\ell_3}}$)};
\node[above of=CT33, node distance=\evYDist] {\normalsize \underline{$e_9$}};

\end{scope}

\begin{scope}[shift={(1*\xdisposition,-1.6*\ydisposition)}]
\begin{scope}[scale=\scaletree]
\Tree [
. \node[] (34) {$\treethr{3},4,\bot$};  
	[. \node[] {$\treethr{5},2,2$};] 
	[. \node[] {$\treethr{1},2,1$};]
]
\end{scope}
\node[above of=34] (CT34) {\normalsize $\TreeClock_3$};
\node[right of = CT34, node distance=25] (CTL12) {\normalsize ($\TreeClock_{\textcolor{blue}{\ell_1}}$)};
\node[above of=CT34, node distance=\evYDist] {\normalsize \underline{$e_{10}$}};

\end{scope}

\begin{scope}[shift={(2*\xdisposition,-1.6*\ydisposition)}]
\begin{scope}[scale=\scaletree]
\Tree [
. \node[] (35) {$\treethr{4},3,\bot$}; 
]
\end{scope}
\node[above of=35] (CT43) {\normalsize $\TreeClock_4$};
\node[above of=CT43, node distance=\evYDist] {\normalsize \underline{$e_{11}$}};

\end{scope}

\begin{scope}[shift={(3*\xdisposition,-1.6*\ydisposition)}]
\begin{scope}[scale=\scaletree]
\Tree [
. \node[] (36) {$\treethr{4},4,\bot$}; 
]
\end{scope}
\node[above of=36] (CT44) {\normalsize $\TreeClock_4$};
\node[right of = CT44, node distance=25] (CTL22) {\normalsize ($\TreeClock_{\textcolor{\darkred}{\ell_2}}$)};
\node[above of=CT44, node distance=\evYDist] {\normalsize \underline{$e_{12}$}};

\end{scope}

\begin{scope}[shift={(0*\xdisposition,-2.5*\ydisposition)}]
\begin{scope}[scale=\scaletree]
\Tree [
. \node[] (21) {$\treethr{2},1,\bot$}; 
	[. \node[] {$\treethr{3},4,1$}; 
		[ .\node[] {$\treethr{5},2,2$};  ]  
		[ .\node[] {$\treethr{1},2,1$};  ]
	]
]
\end{scope}
\node[above of=21] (CT21) {\normalsize $\TreeClock_2$};
\node[above of=CT21, node distance=\evYDist] {\normalsize \underline{$e_{13}$}};

\end{scope}

\begin{scope}[shift={(1*\xdisposition,-2.5*\ydisposition)}]
\begin{scope}[scale=\scaletree]
\Tree [
. \node[] (22) {$\treethr{2},2,\bot$}; 
	[. \node[] {$\treethr{3},4,1$}; 
		[ .\node[] {$\treethr{5},2,2$};  ]  
		[ .\node[] {$\treethr{1},2,1$};  ]
	]
]
\end{scope}
\node[above of=22] (CT22) {\normalsize $\TreeClock_2$};
\node[right of = CT22, node distance=25] (CTL13) {\normalsize ($\TreeClock_{\textcolor{blue}{\ell_1}}$)};
\node[above of=CT22, node distance=\evYDist] {\normalsize \underline{$e_{14}$}};

\end{scope}

\begin{scope}[shift={(2*\xdisposition,-2.5*\ydisposition)}]
\begin{scope}[scale=\scaletree]
\Tree [
. \node[] (23) {$\treethr{2},3,\bot$}; 
	[. \node[] {$\treethr{4},3,5$}; ]
	[.	\node[] {$\treethr{3},6,3$}; 
		[ .\node[] {$\treethr{5},2,2$};  ]  
		[ .\node[] {$\treethr{1},2,1$};  ]
	]
]
\end{scope}
\node[above of=23] (CT23) {\normalsize $\TreeClock_2$};
\node[above of=CT23, node distance=\evYDist] {\normalsize \underline{$e_{15}$}};

\end{scope}

\begin{scope}[shift={(3*\xdisposition,-2.5*\ydisposition)}]
\begin{scope}[scale=\scaletree]
\Tree [
. \node[] (24) {$\treethr{2},4,\bot$}; 
	[. \node[] {$\treethr{4},3,5$}; ]
	[.	\node[] {$\treethr{3},6,3$}; 
		[ .\node[] {$\treethr{5},2,2$};  ]  
		[ .\node[] {$\treethr{1},2,1$};  ]
	]
]
\end{scope}
\node[above of=24] (CT24) {\normalsize $\TreeClock_2$};
\node[right of = CT24, node distance=25] (CTL23) {\normalsize ($\TreeClock_{\textcolor{\darkred}{\ell_2}}$)};
\node[above of=CT24, node distance=\evYDist] {\normalsize \underline{$e_{16}$}};

\end{scope}

\end{tikzpicture}
\caption{
The sequence of tree-clock updates after $\HB$ processes each event $\Event_i$ of $\Trace$.
Only the tree clock $\TreeClock_j$ of thread $j$ that performs $\Event_i$ is shown.
When the thread performs a lock-release the corresponding tree clock of the lock is mentioned in parenthesis.
}
\label{subfig:hb_example_cts}
\end{subfigure}
\caption{
Example run of $\HB$ and the updates on the corresponding clock-trees.
}
\label{fig:hb_example}
\end{figure*}

\cref{fig:hb_example} shows an example run of \cref{algo:hb} on a trace $\Trace$, showing how tree clocks grow during the execution.
The figure shows the tree clocks $\CTC_{\Thread}$ of the threads; the tree clocks of locks $\CTC_{\ell}$ are just copies of the former after processing a lock-release event (shown in parentheses in the figure).
\cref{fig:hb_example2} presents a closer look of the $\FunctionCTJoin$ and $\FunctionCTMonotoneCopy$ operations for the last events of $\Trace$.

\begin{figure*}
\small
\newcommand{\xdisposition}{3}
\newcommand{\ydisposition}{2.5}
\newcommand{\halfyOuter}{0.3}
\newcommand{\halfxOuter}{0.5}
\newcommand{\halfyInner}{0.2}
\newcommand{\halfxInner}{0.4}
\newcommand{\grayDark}{gray!50}
\newcommand{\grayLight}{gray!20}
\newcommand{\evYDist}{15}
\begin{subfigure}[b]{0.475\textwidth}
\centering
\begin{tikzpicture}[thick,sibling distance=1em, node distance=15,
every tree node/.style={thick, rectangle, rounded corners, minimum height=0.8mm, minimum width=5mm, inner sep=2pt},
boundingBox/.style={very thick, draw=none},
sibling distance=3pt,
level distance=20pt,
]
\tikzset{edge from parent/.append style={ thick}}
\pgfdeclarelayer{background}
\pgfdeclarelayer{foreground}
\pgfsetlayers{background,main,foreground}

\begin{scope}[shift={(0*\xdisposition,0*\ydisposition)}]

\Tree [
. \node[] (36) {$3,6,\bot$}; [. \node[] (a) {$4,2,5$};]  [. \node[] (b) {$5,2,2$};] [. \node[] (c) {$1,2,1$};]
]
\node[above of=36] (CTL22) {\normalsize $\TreeClock_{\textcolor{\darkred}{\ell_2}}$};

\begin{pgfonlayer}{background}
\draw[smooth, draw=none, rounded corners, fill=\grayLight] ($ (36) + (0, \halfyOuter) $) to ($ (36) + (-\halfxOuter, \halfyOuter) $) to ($ (a) + (-\halfxOuter, \halfyOuter) $) to ($ (a) + (-\halfxOuter, -\halfyOuter) $) to ($ (b) + (\halfxOuter, -\halfyOuter) $) to ($ (36) + (\halfxOuter, \halfyOuter) $) to ($ (36) + (0, \halfyOuter) $);

\draw[smooth, draw=none, rounded corners, fill=\grayDark] ($ (36) + (0, \halfyInner) $) to ($ (36) + (-\halfxInner, \halfyInner) $) to ($ (a) + (-\halfxInner, \halfyInner) $) to ($ (a) + (-\halfxInner, -\halfyInner) $) to ($ (a) + (\halfxInner, -\halfyInner) $) to ($ (a) + (\halfxInner, \halfyInner) $) to ($ (36) + (\halfxInner, -\halfyInner) $) to ($ (36) + (\halfxInner, \halfyInner) $) to ($ (36) + (0, \halfyInner) $);

\end{pgfonlayer}

\end{scope}

\begin{scope}[shift={(1*\xdisposition,0*\ydisposition)}]

\Tree [
. \node[] (23) {$2,3,\bot$}; [. \node[fill=\grayDark] {$3,4,1$};  [ .\node[] {$5,2,2$};  ]  [ .\node[] {$1,2,1$};  ]]
]

\node[above of=23] (CT23) {\normalsize $\TreeClock_2$};

\end{scope}

\end{tikzpicture}
\caption{
Details of the join operation due to $\Event_{15}$.
$\TreeClock_{\textcolor{\darkred}{\ell_2}}$ shows the tree clock of lock $\ell_2$.
$\TreeClock_2$ shows the tree clock of thread $\Thread_2$ right before performing the operation $\TreeClock_2.\FunctionCTJoin(\TreeClock_{\textcolor{\darkred}{\ell_2}})$.
Note that $\TreeClock_2$ is not yet aware of thread $\Thread_4$.
The result is $\TreeClock_2$ shown in \cref{subfig:hb_example_cts}.
}
\label{subfig:hb_example_joi1}
\end{subfigure}
\quad
\begin{subfigure}[b]{0.475\textwidth}
\centering
\begin{tikzpicture}[thick,sibling distance=1em, node distance=15,
every tree node/.style={thick, rectangle, rounded corners, minimum height=0.8mm, minimum width=5mm, inner sep=2pt},
boundingBox/.style={very thick, draw=none},
sibling distance=3pt,
level distance=20pt,
]
\tikzset{edge from parent/.append style={ thick}}
\pgfdeclarelayer{background}
\pgfdeclarelayer{foreground}
\pgfsetlayers{background,main,foreground}

\begin{scope}[shift={(0*\xdisposition,0*\ydisposition)}]

\Tree [
. \node[] (24) {$2,4,\bot$}; [. \node[] (w) {$3,6,3$}; [ .\node[] (x) {$4,2,5$};  ]  [ .\node[] (y) {$5,2,2$};  ]  [ .\node[] (z) {$1,2,1$};  ]]
]

\node[above of=24] (CT24) {\normalsize $\TreeClock_2$};

\begin{pgfonlayer}{background}

\draw[smooth, draw=none, rounded corners, fill=\grayLight] ($ (24) + (0, \halfyOuter) $) to ($ (24) + (-\halfxOuter, \halfyOuter) $) to  ($ (w) + (-\halfxOuter, -\halfyOuter) $) to ($ (w) + (\halfxOuter, -\halfyOuter) $) to ($ (24) + (\halfxOuter, \halfyOuter) $) to ($ (24) + (0, \halfyOuter) $);

\draw[smooth, draw=none, rounded corners, fill=\grayDark] ($ (24) + (0, \halfyInner) $) to ($ (24) + (-\halfxInner, \halfyInner) $) to  ($ (w) + (-\halfxInner, -\halfyInner) $) to ($ (w) + (\halfxInner, -\halfyInner) $) to ($ (24) + (\halfxInner, \halfyInner) $) to ($ (24) + (0, \halfyInner) $);

\end{pgfonlayer}

\end{scope}

\begin{scope}[shift={(1*\xdisposition,0*\ydisposition)}]

\Tree [
. \node[fill=\grayDark] (36) {$3,6,\bot$}; [. \node[] (a) {$4,2,5$};]  [. \node[] (b) {$5,2,2$};] [. \node[] (c) {$1,2,1$};]
]
\node[above of=36] (CTL22) {\normalsize $\TreeClock_{\textcolor{\darkred}{\ell_2}}$};

\end{scope}

\end{tikzpicture}
\caption{
Details of processing $\Event_{16}$.
$\TreeClock_2$ shows the tree clock of thread $\Thread_2$.
$\TreeClock_{\textcolor{\darkred}{\ell_2}}$ shows the tree clock of lock $\ell_2$ right before performing $\TreeClock_{\textcolor{\darkred}{\ell_2}}.\FunctionCTMonotoneCopy(\TreeClock_2)$.
Note that $\TreeClock_{\textcolor{\darkred}{\ell_2}}$ is not yet aware of thread $\Thread_2$.
The result is $\TreeClock_{\textcolor{\darkred}{\ell_2}}$ shown in \cref{subfig:hb_example_cts}.
}
\label{subfig:hb_example_join2}
\end{subfigure}
\caption{
A closer look at the last two steps of the example of \cref{fig:hb_example} in the processing of events $\Event_{15}$ and $\Event_{16}$.
In each figure, light gray marks the nodes of the left tree clock whose vector time is compared to the time of the tree clock on the right.
Similarly, dark gray marks the nodes of the left (resp., right) tree clock that are updating (resp., being updated).
}
\label{fig:hb_example2}
\end{figure*}
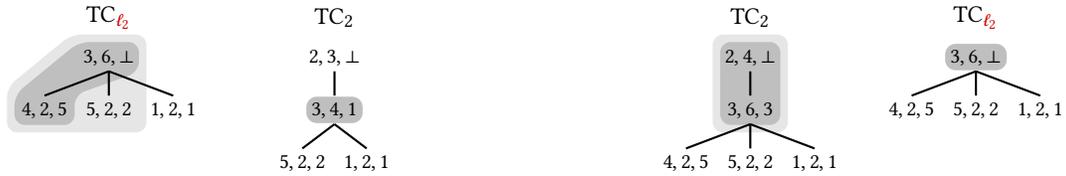
\clearpage
\section{Benchmarks}\label{appsec:benchmarks}

\renewcommand{\captionsize}{\small}

\bottomcaption{
Information on Benchmark Traces. We denote by $\mathcal{N}$, $\mathcal{T}$, $\mathcal{M}$ and $\mathcal{L}$ the total number of events, number of threads, number of memory locations and number of locks, respectively. 
} \label{tab:trace-info}

\tablefirsthead{\toprule \textbf{Benchmark} & $\mathcal{N}$ & $\mathcal{T}$ & $\mathcal{M}$
& $\mathcal{L}$ \\ \midrule}

\tablehead{ \toprule \textbf{Benchmark} & $\mathcal{N}$ & $\mathcal{T}$ & $\mathcal{M}$
& $\mathcal{L}$ \\ \midrule}

\tabletail{\hline}
\tablelasttail{ \hline}
\scriptsize

\renewcommand{\arraystretch}{0.85}
\begin{xtabular}[h*]{rcccc}
CoMD-omp-task-1 & 175.1M & 56 & 66.1K & 114\\
CoMD-omp-task-2 & 175.1M & 56 & 66.1K & 114\\
CoMD-omp-task-deps-1 & 174.1M & 16 & 63.0K & 34\\
CoMD-omp-task-deps-2 & 175.1M & 56 & 66.1K & 114\\
CoMD-omp-taskloop-1 & 251.5M & 16 & 4.0M & 35\\
CoMD-omp-taskloop-2 & 251.5M & 56 & 4.0M & 115\\
CoMD-openmp-1 & 174.1M & 16 & 63.0K & 34\\
CoMD-openmp-2 & 175.1M & 56 & 66.1K & 114\\
DRACC-OMP-009-Counter-wrong-critical-yes & 135.0M & 16 & 971 & 36\\
DRACC-OMP-010-Counter-wrong-critical-Intra-yes & 135.0M & 16 & 971 & 36\\
DRACC-OMP-011-Counter-wrong-critical-Inter-yes & 135.0M & 16 & 853 & 21\\
DRACC-OMP-012-Counter-wrong-critical-simd-yes & 104.9M & 16 & 1.5K & 36\\
DRACC-OMP-013-Counter-wrong-critical-simd-Intra-yes & 104.9M & 16 & 1.5K & 36\\
DRACC-OMP-014-Counter-wrong-critical-simd-Inter-yes & 104.9M & 16 & 1.4K & 21\\
DRACC-OMP-015-Counter-wrong-lock-yes & 135.0M & 16 & 971 & 36\\
DRACC-OMP-016-Counter-wrong-lock-Intra-yes & 135.0M & 16 & 971 & 36\\
DRACC-OMP-017-Counter-wrong-lock-Inter-yes-1 & 135.0M & 16 & 853 & 21\\
DRACC-OMP-017-Counter-wrong-lock-Inter-yes-2 & 27.0M & 16 & 853 & 21\\
DRACC-OMP-018-Counter-wrong-lock-simd-yes & 104.9M & 16 & 1.5K & 36\\
DRACC-OMP-019-Counter-wrong-lock-simd-Intra-yes & 104.9M & 16 & 1.5K & 36\\
DRACC-OMP-020-Counter-wrong-lock-simd-Inter-yes & 104.9M & 16 & 1.4K & 21\\
DataRaceBench-DRB062-matrixvector2-orig-no-1 & 183.9M & 16 & 7.0K & 33\\
DataRaceBench-DRB062-matrixvector2-orig-no-2 & 193.2M & 56 & 9.0K & 113\\
DataRaceBench-DRB105-taskwait-orig-no-1 & 134.0M & 16 & 3.2K & 33\\
DataRaceBench-DRB105-taskwait-orig-no-2 & 134.0M & 56 & 9.7K & 113\\
DataRaceBench-DRB106-taskwaitmissing-orig-yes-1 & 134.0M & 16 & 3.6K & 33\\
DataRaceBench-DRB106-taskwaitmissing-orig-yes-2 & 134.0M & 56 & 11.0K & 113\\
DataRaceBench-DRB110-ordered-orig-no-1 & 120.0M & 16 & 775 & 36\\
DataRaceBench-DRB110-ordered-orig-no-2 & 120.0M & 56 & 2.3K & 116\\
DataRaceBench-DRB122-taskundeferred-orig-no-1 & 112.0M & 16 & 956 & 33\\
DataRaceBench-DRB122-taskundeferred-orig-no-2 & 112.0M & 56 & 3.0K & 113\\
DataRaceBench-DRB123-taskundeferred-orig-yes-1 & 112.0M & 16 & 1.2K & 33\\
DataRaceBench-DRB123-taskundeferred-orig-yes-2 & 112.0M & 56 & 3.7K & 113\\
DataRaceBench-DRB144-critical-missingreduction-orig-gpu-yes & 140.0M & 16 & 966 & 35\\
DataRaceBench-DRB148-critical1-orig-gpu-yes & 135.0M & 16 & 971 & 36\\
DataRaceBench-DRB150-missinglock1-orig-gpu-yes & 112.0M & 16 & 968 & 35\\
DataRaceBench-DRB152-missinglock2-orig-gpu-no & 112.0M & 16 & 968 & 35\\
DataRaceBench-DRB154-missinglock3-orig-gpu-no.c & 112.0M & 16 & 851 & 20\\
DataRaceBench-DRB155-missingordered-orig-gpu-no-1 & 50.0M & 16 & 2.0M & 36\\
DataRaceBench-DRB155-missingordered-orig-gpu-no-2 & 50.0M & 56 & 2.0M & 116\\
DataRaceBench-DRB176-fib-taskdep-no-1 & 1.6B & 17 & 12.4K & 33\\
DataRaceBench-DRB176-fib-taskdep-no-2 & 1.6B & 57 & 41.3K & 113\\
DataRaceBench-DRB176-fib-taskdep-no-3 & 618.3M & 17 & 11.7K & 33\\
DataRaceBench-DRB176-fib-taskdep-no-4 & 618.3M & 57 & 38.3K & 113\\
DataRaceBench-DRB176-fib-taskdep-no-5 & 90.2M & 17 & 9.8K & 33\\
DataRaceBench-DRB176-fib-taskdep-no-6 & 90.3M & 57 & 30.9K & 113\\
DataRaceBench-DRB177-fib-taskdep-yes-1 & 1.6B & 17 & 10.5K & 33\\
DataRaceBench-DRB177-fib-taskdep-yes-2 & 382.1M & 17 & 9.3K & 33\\
DataRaceBench-DRB177-fib-taskdep-yes-3 & 236.2M & 17 & 9.0K & 33\\
DataRaceBench-DRB177-fib-taskdep-yes-4 & 1.0B & 17 & 10.0K & 33\\
DataRaceBench-DRB177-fib-taskdep-yes-5 & 618.3M & 17 & 9.8K & 33\\
DataRaceBench-DRB177-fib-taskdep-yes-6 & 618.3M & 57 & 30.6K & 113\\
DataRaceBench-DRB177-fib-taskdep-yes-7 & 90.2M & 17 & 8.7K & 33\\
DataRaceBench-DRB177-fib-taskdep-yes-8 & 90.3M & 57 & 25.5K & 113\\
OmpSCR-v2.0-c-LUreduction-1 & 136.4M & 16 & 181.6K & 34\\
OmpSCR-v2.0-c-LUreduction-2 & 136.9M & 56 & 183.6K & 114\\
OmpSCR-v2.0-c-Mandelbrot-1 & 115.7M & 16 & 3.0K & 34\\
OmpSCR-v2.0-c-Mandelbrot-2 & 115.7M & 56 & 5.1K & 114\\
OmpSCR-v2.0-c-MolecularDynamic-1 & 204.3M & 16 & 7.2K & 34\\
OmpSCR-v2.0-c-MolecularDynamic-2 & 204.4M & 56 & 9.4K & 114\\
OmpSCR-v2.0-c-Pi-1 & 150.0M & 16 & 976 & 34\\
OmpSCR-v2.0-c-Pi-2 & 150.0M & 56 & 3.0K & 114\\
OmpSCR-v2.0-c-QuickSort-1 & 134.3M & 16 & 101.6K & 35\\
OmpSCR-v2.0-c-QuickSort-2 & 134.3M & 56 & 103.6K & 115\\
OmpSCR-v2.0-c-fft-2 & 2.1B & 57 & 29.4M & 115\\
OmpSCR-v2.0-c-fft-3 & 496.0M & 17 & 7.3M & 35\\
OmpSCR-v2.0-c-fft-4 & 496.0M & 57 & 7.3M & 115\\
OmpSCR-v2.0-c-fft6-1 & 146.0M & 16 & 4.3M & 49\\
OmpSCR-v2.0-c-fft6-2 & 146.0M & 56 & 5.3M & 132\\
OmpSCR-v2.0-c-testPath-1 & 30.2M & 16 & 2.8M & 35\\
OmpSCR-v2.0-c-testPath-2 & 37.5M & 56 & 2.8M & 115\\
OmpSCR-v2.0-c-LoopsWithDependencies-c-loopA.badSolution-1 & 112.6M & 16 & 161.0K & 34\\
OmpSCR-v2.0-c-LoopsWithDependencies-c-loopA.badSolution-2 & 394.0M & 56 & 563.0K & 114\\
OmpSCR-v2.0-c-LoopsWithDependencies-c-loopA.solution1-1 & 192.6M & 16 & 321.0K & 34\\
OmpSCR-v2.0-c-LoopsWithDependencies-c-loopA.solution1-2 & 674.2M & 56 & 1.1M & 114\\
OmpSCR-v2.0-c-LoopsWithDependencies-c-loopA.solution2-1 & 337.2M & 56 & 562.7K & 114\\
OmpSCR-v2.0-c-LoopsWithDependencies-c-loopA.solution2-2 & 96.4M & 16 & 160.9K & 34\\
OmpSCR-v2.0-c-LoopsWithDependencies-c-loopA.solution3-1 & 337.3M & 56 & 562.7K & 114\\
OmpSCR-v2.0-c-LoopsWithDependencies-c-loopA.solution3-2 & 96.4M & 16 & 160.9K & 34\\
OmpSCR-v2.0-c-LoopsWithDependencies-c-loopB.badSolution1-1 & 112.6M & 16 & 161.0K & 34\\
OmpSCR-v2.0-c-LoopsWithDependencies-c-loopB.badSolution1-2 & 394.0M & 56 & 563.0K & 114\\
OmpSCR-v2.0-cpp-sortOpenMP-cpp-qsomp1-1 & 107.0M & 16 & 101.2K & 35\\
OmpSCR-v2.0-cpp-sortOpenMP-cpp-qsomp1-2 & 106.8M & 56 & 103.5K & 115\\
OmpSCR-v2.0-cpp-sortOpenMP-cpp-qsomp2-1 & 107.1M & 56 & 104.7K & 115\\
OmpSCR-v2.0-cpp-sortOpenMP-cpp-qsomp2-2 & 107.5M & 16 & 101.5K & 35\\
OmpSCR-v2.0-cpp-sortOpenMP-cpp-qsomp3-1 & 115.4M & 56 & 10.0M & 115\\
OmpSCR-v2.0-cpp-sortOpenMP-cpp-qsomp3-2 & 141.7M & 16 & 10.0M & 35\\
OmpSCR-v2.0-cpp-sortOpenMP-cpp-qsomp4-1 & 114.2M & 56 & 104.7K & 115\\
OmpSCR-v2.0-cpp-sortOpenMP-cpp-qsomp4-2 & 164.0M & 16 & 101.5K & 35\\
OmpSCR-v2.0-cpp-sortOpenMP-cpp-qsomp6 & 106.7M & 56 & 104.8K & 115\\
OmpSCR-v2.0-cpp-sortOpenMP-cpp-qsomp7-1 & 295.5M & 56 & 82.9K & 114\\
OmpSCR-v2.0-cpp-sortOpenMP-cpp-qsomp7-2 & 88.9M & 16 & 31.6K & 34\\
HPCCG-1 & 228.1M & 16 & 935.9K & 34\\
HPCCG-2 & 229.5M & 56 & 938.4K & 114\\
graph500-1 & 171.3M & 16 & 228.8K & 34\\
graph500-2 & 172.5M & 56 & 233.4K & 114\\
SimpleMOC & 170.2M & 16 & 4.3M & 5.0K\\
NPBS-DC.S-1 & 11.7M & 16 & 3.6M & 36\\
NPBS-DC.S-2 & 11.7M & 56 & 3.7M & 115\\
NPBS-IS.W-1 & 152.9M & 16 & 2.2M & 34\\
NPBS-IS.W-2 & 300.1M & 56 & 2.3M & 114\\
Kripke-1 & 117.5M & 16 & 203.1K & 35\\
Kripke-2 & 119.2M & 56 & 205.8K & 116\\
Lulesh-1 & 35.3M & 17 & 159.5K & 33\\
Lulesh-2 & 52.1M & 56 & 164.0K & 114\\
Lulesh-3 & 543.4M & 17 & 1.3M & 34\\
Lulesh-4 & 569.5M & 57 & 1.2M & 114\\
QuickSilver & 132.6M & 56 & 914.1K & 116\\
RSBench-1 & 1.3B & 16 & 2.0M & 34\\
RSBench-2 & 1.3B & 56 & 2.0M & 114\\
XSBench-1 & 96.6M & 16 & 25.7M & 33\\
XSBench-2 & 96.6M & 56 & 25.7M & 114\\
amg2013-1 & 169.9M & 18 & 2.8M & 74\\
amg2013-2 & 189.6M & 58 & 4.5M & 154\\
miniFE-1 & 206.7M & 58 & 1.2M & 154\\
miniFE-2 & 207.7M & 18 & 1.2M & 74\\
account & 134 & 5 & 41 & 3\\
airlinetickets & 140 & 5 & 44 & 0\\
array & 51 & 4 & 30 & 2\\
batik & 157.9M & 7 & 4.9M & 1.9K\\
boundedbuffer & 332 & 3 & 63 & 2\\
bubblesort & 4.6K & 13 & 167 & 2\\
bufwriter & 22.2K & 7 & 471 & 1\\
clean & 1.3K & 10 & 26 & 2\\
critical & 59 & 5 & 30 & 0\\
cryptorsa & 58.5M & 9 & 1.7M & 8.0K\\
derby & 1.4M & 5 & 185.6K & 1.1K\\
ftpserver & 49.6K & 12 & 5.5K & 301\\
jigsaw & 3.1M & 12 & 103.5K & 275\\
lang & 6.3K & 8 & 1.5K & 0\\
linkedlist & 1.0M & 13 & 3.1K & 1.0K\\
lufact & 134.1M & 5 & 252.1K & 1\\
luindex & 397.8M & 3 & 2.5M & 65\\
lusearch & 217.5M & 8 & 5.2M & 118\\
mergesort & 3.0K & 6 & 621 & 3\\
moldyn & 200.3K & 4 & 1.2K & 2\\
pingpong & 151 & 7 & 51 & 0\\
producerconsumer & 658 & 9 & 67 & 3\\
raytracer & 15.8K & 4 & 3.9K & 8\\
readerswriters & 11.3K & 6 & 18 & 1\\
sor & 606.9M & 5 & 1.0M & 2\\
sunflow & 11.7M & 17 & 1.3M & 9\\
tsp & 307.3M & 10 & 181.1K & 2\\
twostage & 193 & 13 & 21 & 2\\
wronglock & 246 & 23 & 26 & 2\\
xalan & 122.5M & 7 & 4.4M & 2.5K\\
biojava & 221.0M & 4 & 121.0K & 78\\
cassandra & 259.1M & 173 & 9.9M & 60.5K\\
graphchi & 215.8M & 20 & 24.9M & 60\\
hsqldb & 18.8M & 44 & 945.0K & 401\\
tradebeans & 39.1M & 222 & 2.8M & 6.1K\\
tradesoap & 39.1M & 221 & 2.8M & 6.1K\\
zxing & 546.4M & 15 & 37.8M & 1.5K\\
\end{xtabular}

\end{document}